%% file: main.tex
\documentclass{article}
\PassOptionsToPackage{numbers,compress}{natbib}
\usepackage[preprint]{icml2026}

\input{headers/header-lqa-head}

\input{headers/header}

\input{headers/header-lqa-tail}

\newcommand{\OurTitle}{A Unified Framework for LLM Watermarks}

\icmltitlerunning{\OurTitle}

\begin{document}

\twocolumn[
  \icmltitle{\OurTitle}

  \icmlsetsymbol{equal}{*}

  \begin{icmlauthorlist} 
    \icmlauthor{Thibaud Gloaguen}{eth}
    \icmlauthor{Robin Staab}{eth}
	\icmlauthor{Nikola Jovanovi\'c}{eth}
	\icmlauthor{Martin Vechev}{eth}
  \end{icmlauthorlist}

  \icmlaffiliation{eth}{Department of Computer Science, ETH Zürich}

  \icmlcorrespondingauthor{Thibaud Gloaguen}{thibaud.gloaguen@inf.ethz.ch}

  \icmlkeywords{Machine Learning, ICML}

  \vskip 0.3in
]

\printAffiliationsAndNotice{}

\begin{abstract}
	\input{abstract.tex}
\end{abstract}

\input{src/introduction.tex}

\input{src/related_work.tex}

\input{src/method.tex}

\input{src/evaluation.tex}

\input{src/conclusion.tex}

\message{^^JLASTBODYPAGE \thepage^^J}

\clearpage

\input{src/impact_statement.tex}

\bibliographystyle{icml2026}
\bibliography{references}
\vfill
\clearpage

\message{^^JLASTREFERENCESPAGE \thepage^^J}

\ifincludeappendixx
	\newpage
	\appendix
	\onecolumn
	\crefalias{section}{appendix}
	\crefalias{subsection}{appendix} 
	\include{appendix}

\fi

\end{document}

%% file: headers/header-lqa-head.tex
\usepackage[utf8]{inputenc} %

\usepackage{graphicx}

\usepackage[T1]{fontenc}

\usepackage[hidelinks,colorlinks]{hyperref}       %
\usepackage{url}            %
\usepackage{booktabs}       %
\usepackage{amsfonts}       %
\usepackage{nicefrac}       %
\usepackage{microtype}      %

\input{headers/lqa/overfull}

\input{headers/lqa/comments}

\input{headers/lqa/appendix}

\input{headers/lqa/abbreviations}

\input{headers/lqa/acronyms}

\input{headers/lqa/listing}

%% file: headers/lqa/overfull.tex
\IfFileExists{headers/config/showoverfull.config}{
	\overfullrule=1cm
}{
}

%% file: headers/lqa/comments.tex
\usepackage{marginnote}

\usepackage[backgroundcolor=none,linecolor=red,textsize=footnotesize]{todonotes}

%% file: headers/lqa/appendix.tex
\usepackage{etoolbox}

\newbool{includeappendix}
\setbool{includeappendix}{true} %
\IfFileExists{headers/config/noappendix.config}{
	\setbool{includeappendix}{false}
}{}

\newif\ifincludeappendixx
\ifbool{includeappendix}{
	\includeappendixxtrue
}{
	\includeappendixxfalse
}

\usepackage{xr} %
\usepackage{filecontents}

\ifbool{includeappendix}{}{
	\input{appendix-labels-loader}

	\externaldocument{appendix-labels}
}

%% file: headers/lqa/abbreviations.tex
\usepackage{xspace}

\newcommand{\eg}{e.g., }
\newcommand{\ie}{i.e., }

\newcommand{\llama}{\textsc{Llama3.1-8B}}
\newcommand{\qwen}{\textsc{Qwen3-30B}}
\newcommand{\mistral}{\textsc{Ministral-3-14B}}

%% file: headers/lqa/acronyms.tex
\usepackage{acro} %

%% file: headers/lqa/listing.tex
\usepackage{listings}

\usepackage{textcomp}

\usepackage{xcolor}

\usepackage[scaled=0.8]{beramono}

\definecolor{ckeyword}{HTML}{7F0055}
\definecolor{ccomment}{HTML}{3F7F5F}
\definecolor{cstring}{HTML}{2A0099}

\lstdefinestyle{numbers}{
	numbers=left,
	framexleftmargin=20pt,
	numberstyle=\tiny,
	firstnumber=auto,
	numbersep=1em,
	xleftmargin=2em
}

\lstdefinestyle{layout}{
	frame=none,
	captionpos=b,
}

\lstdefinestyle{comment-style}{
	morecomment=[l]//,
	morecomment=[s]{/*}{*/},
	commentstyle={\color{ccomment}\itshape},
}

\lstdefinestyle{string-style}{
	morestring=[b]",%
	morestring=[b]',%
	stringstyle={\color{cstring}},
	showstringspaces=false,%
}

\lstdefinestyle{keyword-style}{
	keywordstyle={\ttfamily\bfseries},
	morekeywords={
		function,
		constructor,
		int,
		bool,
		return,
		returns,
		uint
	},
	morekeywords = [2]{},
	keywordstyle = [2]{\text},
	sensitive=true,
}

\lstdefinestyle{input-encoding}{
	inputencoding=utf8,
	extendedchars=true,
	literate=
	{ℝ}{$\reals$}1%
	{→}{$\rightarrow$}1%
	{α}{$\alpha$}1%
	{β}{$\beta$}1%
	{λ}{$\lambda$}1%
	{θ}{$\theta$}1%
	{ϕ}{$\phi$}1%
}

\lstdefinestyle{escaping}{
	moredelim={**[is][\color{blue}]{\%}{\%}},
	escapechar=|,
	mathescape=true
}

\lstdefinestyle{default-style}{
	basicstyle=\fontencoding{T1}\ttfamily\footnotesize,
	style=numbers,
	style=layout,
	style=comment-style,
	style=string-style,
	style=keyword-style,
	style=input-encoding,
	style=escaping,
	tabsize=2,
	upquote=true
}

\lstdefinelanguage{BASIC}{
	language=C++,
	style=default-style
}[keywords,comments,strings]%

%% file: headers/header.tex
\usepackage[utf8]{inputenc} %
\usepackage[T1]{fontenc}    %
\usepackage{caption}
\usepackage{enumerate}
\usepackage{url}            %
\usepackage{booktabs}       %
\usepackage{amsfonts}       %
\usepackage{nicefrac}       %
\usepackage{microtype}      %
\usepackage{xcolor}         %
\usepackage{tikz,booktabs,multirow,enumitem,bm}
\usepackage{colortbl}
\usepackage{tabularx}
\usepackage{subcaption}
\usepackage{wrapfig}
\usepackage{tabulary}
\usepackage{amsmath,amsthm,amsfonts, bbm}
\usepackage{makecell}
\usepackage{rotating}
\usepackage{array}
\usepackage{siunitx}
\usepackage[most]{tcolorbox}
\usepackage{CJKutf8}
\usepackage{pifont}

\input{headers/math_commands.tex}

\newcolumntype{x}[2]{S[table-format=#1.#2,table-auto-round]}
\newcolumntype{y}[2]{>{\small} S[table-format=#1.#2,table-auto-round]}

\definecolor{hyperlinkblue}{HTML}{0000AA}
\hypersetup{citecolor=hyperlinkblue} %

\definecolor{red}{HTML}{FF0000}
\definecolor{drop}{HTML}{2596be}
\definecolor{anon}{HTML}{10a37f}
\definecolor{devil}{HTML}{84cf9d}

\lstdefinestyle{mystyle}{
    escapechar=\#,
    breaklines=true,
    basicstyle=\scriptsize\ttfamily,
    numbers=none,
    language={},
    framextopmargin=0pt,
    framexbottommargin=0pt,
    breakindent=0pt,
    showspaces = false,
    keywordstyle=\bfseries,
    showstringspaces=false,
    columns=fullflexible,
    morekeywords={Style, Consistency, Accuracy, Ethics, Score}
}

\newtcblisting{prompt}[2][]{
    arc=3pt, outer arc=3pt,
    width=.9\linewidth,
    left=1mm,
    top=0mm,
    bottom=0mm,
    title=#2, 
    colback=gray!5!white,
    colframe=black!75!black,
    fonttitle=\bfseries,
    listing only, 
    listing options={style=mystyle},
    breakable,
    #1
}

\newtcblisting{response}[2][]{
    arc=3pt, outer arc=3pt,
    width=.9\linewidth,
    left=1mm,
    top=0mm,
    bottom=0mm,
    title=#2, 
    colback=devil!5!white,
    colframe=devil!50!black,
    fonttitle=\bfseries,
    listing only, 
    listing options={style=mystyle,deletekeywords={Style}},
    breakable,
    #1
}

\newtcblisting{inference}[2][]{
    arc=3pt, outer arc=3pt,
    width=.9\linewidth,
    left=1mm,
    top=0mm,
    bottom=0mm,
    title=#2, 
    colback=red!5!white,
    colframe=red,
    fonttitle=\bfseries,
    listing only, 
    listing options={style=mystyle},
    breakable,
    #1
}

\newtcblisting{anonymize}[2][]{
    arc=3pt, outer arc=3pt,
    width=.9\linewidth,
    left=1mm,
    top=0mm,
    bottom=0mm,
    title=#2, 
    colback=anon!5!white,
    colframe=anon,
    fonttitle=\bfseries,
    listing only, 
    listing options={style=mystyle},
    breakable,
    #1
}

\lstdefinelanguage{QA}{
  morekeywords={Prompt,Base,Watermarked},
  sensitive=false,
}

\lstdefinestyle{qaStyle}{
  language=QA,
  basicstyle=\normalfont\small,      %
  keywordstyle=\bfseries\color{teal},%
  stringstyle=\color{black},         %
  numbers=none,                       %
  showstringspaces=false,            %
  frame=single,                      %
  rulecolor=\color{gray!50},         %
  backgroundcolor=\color{gray!10},   %
  columns=fullflexible,              %
  xleftmargin=1em, xrightmargin=1em, %
  aboveskip=1em, belowskip=1em       %
}

\definecolor{outerbg}{HTML}{F6F6F6}      %
\definecolor{outerframe}{HTML}{D0D0D0}   %

\definecolor{modelAback}{HTML}{DD8452}   %
\definecolor{modelAframe}{HTML}{DD8452}  %

\definecolor{modelBback}{HTML}{4C72B0}   %
\definecolor{modelBframe}{HTML}{4C72B0}  %

\newtcolorbox{promptbox}[1][]{%
  breakable,
  colback=white,
  colframe=black!25,
  fonttitle=\bfseries\small,
  coltitle=black!80,
  title=User~Prompt~{#1},
  boxsep=5pt,
  top=4pt,bottom=4pt,
  arc=1mm,
}

\newtcolorbox{modelbox}[3][]{%
  breakable,
  colback=#2!75,
  colframe=#3!85,
  coltitle=#3!10!black,
  fonttitle=\bfseries\small,   %
  title={#1},
  boxsep=5pt,
  top=4pt,bottom=4pt,
  arc=1mm,
}

%% file: headers/math_commands.tex
\usepackage{amsmath,amsfonts,bm}
\usepackage{amsthm}
\usepackage{amssymb}
\usepackage{mathtools}

\def\1{\bm{1}}

\DeclareMathAlphabet{\mathsfit}{\encodingdefault}{\sfdefault}{m}{sl}
\SetMathAlphabet{\mathsfit}{bold}{\encodingdefault}{\sfdefault}{bx}{n}

\newcommand{\KL}{D_{\mathrm{KL}}}

\DeclareMathOperator*{\argmax}{arg\,max}

%% file: headers/header-lqa-tail.tex
\input{headers/lqa/references}

\theoremstyle{plain}
\newtheorem{theorem}{Theorem}[section]

\theoremstyle{definition}

\theoremstyle{remark}

%% file: headers/lqa/references.tex
\renewcommand{\S}{Sec.~}

\usepackage[capitalize,noabbrev]{cleveref}

\crefformat{section}{\S#2#1#3}

\crefrangeformat{section}{\S#3#1#4\crefrangeconjunction\S#5#2#6}

\crefmultiformat{section}{\S#2#1#3}{\crefpairconjunction\S#2#1#3}{\crefmiddleconjunction\S#2#1#3}{\creflastconjunction\S#2#1#3}

\newcommand{\crefrangeconjunction}{--}

\crefname{listing}{Lst.}{listings}
\crefname{line}{Lin.}{Lin.}
\crefname{appendix}{App.}{App.}

\newcommand{\appref}[1]{%
	\ifbool{includeappendix}{\cref{#1}}{the appendix}%
}
\newcommand{\Appref}[1]{%
	\ifbool{includeappendix}{\cref{#1}}{The appendix}%
}

%% file: abstract.tex
LLM watermarks allow tracing AI-generated texts by inserting a detectable signal into their generated content. Recent works have proposed a wide range of watermarking algorithms, each with distinct designs, usually built using a bottom-up approach. Crucially, there is no general and principled formulation for LLM watermarking.

In this work, we show that most existing and widely used watermarking schemes can in fact be derived from a principled constrained optimization problem. Our formulation unifies existing watermarking methods and explicitly reveals the constraints that each method optimizes. In particular, it highlights an understudied quality-diversity-power trade-off. At the same time, our framework also provides a principled approach for designing novel watermarking schemes tailored to specific requirements. For instance, it allows us to directly use perplexity as a proxy for quality, and derive new schemes that are optimal with respect to this constraint. Our experimental evaluation validates our framework: watermarking schemes derived from a given constraint consistently maximize detection power with respect to that constraint.

%% file: src/introduction.tex
\section{Introduction}
\label{sec:introduction}

Large Language Model (LLM) watermarking algorithms modify the sampling procedure of the underlying LLM to introduce a detectable signal in the generated text~\citep{kgw1}.
Importantly, prior watermarking works have proposed a range of seemingly unrelated approaches~\citep{kgw1,synthid,aar,kth,undetectable} for how to modify the sampling procedure.
Usually, given a specific motivation (e.g., distortion-freeness or power), they first propose a new watermark design, prove that it satisfies some (theoretical) properties, and experimentally validate it.
As a result, it becomes difficult to compare existing watermarking schemes at a fundamental level, relate them to one another, or design new methods in a principled manner.

\paragraph{This work:}
In this work, we address this issue by introducing a new unified framework for LLM watermarking.
In particular, we show that the construction of a watermarking algorithm can be phrased as a constrained optimization problem: the goal is to maximize the detectability of the inserted watermark signal, while limiting its impact on the quality of the generated text.
By formalizing and analyzing this optimization problem, we find that a number of popular schemes (\eg Red-Green watermarks, AAR/KTH, SynthID) can be formulated as solutions to our optimization problem when instantiated with specific constraints.

Importantly, the explicit nature of the constraints in our framework further allows us to better understand and relate existing schemes and provides a principled approach to the design of new, more powerful ones.
As we show in our empirical evaluation, our framework provides a natural and practical way to design and instantiate powerful schemes tailored to specific constraints such as quality preservation.
Lastly, our framework allows us to theoretically identify and experimentally validate a previously understudied fundamental trade-off in LLM watermarks: quality-diversity-power, \ie the trade-off between the quality of the watermarked text, the output diversity under the watermarking scheme, and the overall watermarking signal strength.

\input{figures/overview/explain_figure.tex}

\paragraph{Main contributions}
Our main contributions are:
\begin{itemize}
    \item We formulate LLM watermarking as a general and principled constrained optimization problem (\cref{sec:method_theory}).
    \item We show that our framework captures a number of prior watermarking algorithms (\cref{sec:method_practice:existing}).
    \item We instantiate novel watermark schemes that are optimal with respect to a particular objective (\cref{sec:method_practice:new}).
    \item We empirically show how each scheme (including those from prior work) is optimal with respect to its specific constraint (\cref{sec:evaluation:metric}), validate the newly derived schemes (\cref{sec:evaluation:detectability}), and measure the quality--diversity--power trade-off (\cref{sec:evaluation:diversity}). Our code is available \href{https://github.com/eth-sri/unified-watermarking}{here}.
\end{itemize}

%% file: figures/overview/explain_figure.tex
\begin{figure*}[t]
    \centering
    \includegraphics[width=\textwidth]{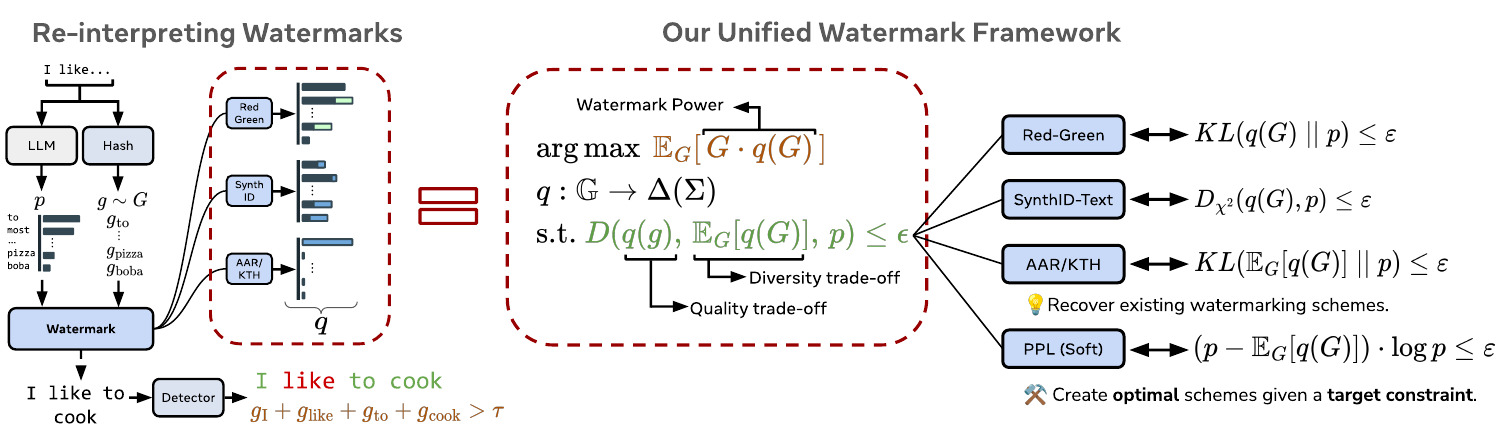}
    \caption{\textbf{Overview of Our Framework.} We find that most prior watermarks can be viewed from the following angle (\emph{left}): given a next-token probability distribution $p$ and token scores $g$, pseudorandomly sampled from $G$, they compute a watermarked probability distribution $q(g)$. 
    A text is considered watermarked if the sum of pseudorandom scores is above a given threshold.
    Therefore, we frame watermarking as a constrained optimization problem (\emph{middle}): maximizing the expected score while controlling the watermark distortion.
    In particular, the constraint balances the watermark quality and diversity. 
    This formulation captures most existing prior watermarks, and enables designing new optimal schemes with respect to a given constraint (\emph{right}).}
    \label{fig:main}
\end{figure*}

%% file: src/related_work.tex
\section{Background and Related Work}
\label{sec:related_work}

We now introduce the necessary background on LLM watermarks, including all relevant related work.
We defer specific details of each watermarking scheme to \cref{app:algorithms}.

\paragraph{LLM Watermarks}
The goal of LLM watermarks is to insert a detectable signal in the text generated by the LLM.
At each generation step, watermarks use a hash of the previously generated context to seed a pseudorandom number generator and attribute a pseudorandom score to each token in the vocabulary. 
We call this the \emph{hashing} mechanism. %
Using those sampled pseudorandom scores, the \emph{sampling mechanism} modifies the original next-token probability distribution provided by the model into a watermarked next-token probability distribution.
The next token is then sampled according to the watermarked distribution.
Then, given a text, the \emph{watermark detector} reconstructs the pseudorandom score for each token in the text and computes a p-value.
The p-value represents the probability of observing such a result or a more extreme one under the assumption that the sequence was sampled independently of the scores. 
In other words, if the text is human-generated, the respective scores are random and the p-value will be high: the text is not watermarked.
By contrast, it is statistically unlikely that many tokens have consistently high scores, allowing us to deduce that such texts are watermarked.

Importantly, the detector should be \emph{model-free}, \ie computing the p-value should not require access to the model and should be computationally efficient.
Indeed, most practical scenarios assume that detection is run at internet scale, where computational efficiency is required.

\paragraph{Watermark Sampling Mechanisms}
Prior works have proposed a variety of sampling mechanisms, trying to balance power with impact on quality.
Popular approaches include \citet{kgw1}, which biases the logits of the next-token distribution according to binary scores, \citet{synthid} leverages tournament sampling to favor higher scoring tokens, \citet{aar} deterministically samples the highest scoring token according to the logarithm of the next-token probability distribution and a Gumbel score, \citet{kth,undetectable} use inverse-transform sampling, and~\citet{dipmark, mcmark} reweight the next-token distribution according to the scores'~ranking.
While those sampling mechanisms are seemingly unrelated, our work shows that most of them can be derived from a unified family of constrained optimization problems~(\cref{sec:method_practice:new}). 

\paragraph{Optimal Watermark Design}
Several prior works have~attempted to design LLM watermarks, or some of their components, from a theoretical perspective.
For the hashing~mechanism, \citet{semstamp, semamark, simkey} propose using a semantic-aware hash to increase the watermark detectability after paraphrasing, and \citet{bileve, scrubbing_def1} improve the resilience of the hashing mechanism against spoofing attacks.
\citet{optimal_statistical_wm, statistical_wm, likelihood_wm_detection} study the watermark detector to find the shape of the optimal detector with respect to the test power.

More closely related to our work, \citet{optimizing_wm, optimized_couplings_wm} study the sampling mechanism that consists of an exponential tilting of the next-token probability distribution with binary scores (a similar setting to~\citet{kgw1}) and find a closed-form formula to compute the watermarked distribution that maximizes power under a log-likelihood constraint.
\citet{gloaguen2025watermarkingdiffusionlanguagemodels} frames Red-Green watermarks as a constrained optimization problem to derive a watermarking rule for diffusion language models.
\citet{ot_watermark} frame watermarking as an optimal transport problem, from which they derive distortion-free schemes that deterministically (according to the pseudorandom scores and $p$) sample each token, recovering the watermarking scheme from~\citet{aar}.
Lastly, \citet{dawa} jointly optimize the detector and the sampling mechanism, yielding a powerful yet non-robust watermark.
Based on their theoretical insights, they instantiate a distortion-free scheme that introduces dependence between the score and the next-token~distribution.

Unlike our approach, existing theoretical frameworks focus solely on finding optimal components for specific LLM watermarks and therefore are not capable of capturing and extending a wide range of existing watermarking schemes.

%% file: src/method.tex
\section{Our Framework}
\label{sec:method_theory}

In this section, we describe our unified framework for LLM watermarks, illustrated in \cref{fig:main}, and motivate its design.

\input{tables/table_overview.tex}

\paragraph{Setting}
We recall the main components of watermarks (\cref{fig:main}, left): the hashing mechanism to compute pseudorandom scores, the sampling mechanism that, given the next-token probability distribution, computes a watermarked probability distribution, and the detector that, given a text, assesses whether it is watermarked.
Our goal is to design a principled sampling mechanism, given a fixed hashing mechanism and a fixed detector. %
Let $\Sigma$ be the finite vocabulary, $p \in \Delta(\Sigma)$ be the next-token probability distribution, $G$ the random variable representing the pseudorandom scores with $\mathbb{G} \subset \mathbb{R}^{|\Sigma|}$ as support, and $g \sim G$ a sample from $G$ computed from the context via the hashing mechanism.
At every generation step, we want to find a watermarked probability distribution $q(g)$ such that later, without knowledge of $p$, we can distinguish whether a sequence of tokens was sampled from $p$ or $q(g)$.

\paragraph{Optimization Objective}
A priori, finding such a distribution $q(g)$ is a complex task.
If our detection can depend on $p$ and $q(g)$, the Neyman–Pearson lemma provides a formula for the best possible detector.
Hence, finding $q(g)$ can be simplified to maximizing the power of this optimal detector.
However, in practice, we want model-free detection (\cref{sec:related_work}), \ie the detector should depend on neither $p$ nor $q(g)$.
From~\citet{dawa}, we know that an optimal model-free detector is intractable.
In the literature, popular schemes often use the sum of individual token scores as their detection statistic.
This suggests that, at each generation step, to maximize the power of the detector one should sample the next-token with the highest possible score.
In our notation, this corresponds to solving
\begin{equation} \label{eq:energy_function_single_instance}
    q^*(g) = \argmax_{q \in \Delta(\Sigma)} \mathbb{E}_{\Omega \sim q} [g_\Omega] \coloneq \argmax_{q \in \Delta(\Sigma)} g \cdot q.
\end{equation}
Solving \cref{eq:energy_function_single_instance} for all $g\in\mathbb{G}$ is then equivalent 
to solving for all $q: \mathbb{G} \rightarrow \Delta(\Sigma)$,
\begin{equation} \label{eq:energy_function}
    q^* = \argmax_{q: \mathbb{G} \rightarrow \Delta(\Sigma)} \mathbb{E}_G [G \cdot q(G)].
\end{equation}

\paragraph{Adding Constraints}
Yet, purely optimizing \cref{eq:energy_function} gives a trivial solution: for all $g$, $q^*(g)$ corresponds to greedy sampling according to the scores $g$.
This is clearly undesirable, as such a watermark would ignore and thus heavily distort the original next-token distribution $p$, leading to low-quality generations.
We therefore introduce a constraint to \cref{eq:energy_function} that controls how distortionary the watermark can be.
With this, our watermark framework becomes
\begin{equation}\label{eq:constrained_optimization_problem}
\begin{aligned}
    q^* = & \argmax_{q: \mathbb{G} \rightarrow \Delta(\Sigma)}
    \quad \mathbb{E}_G [G \cdot q(G)] \\
    & \text{s.t. } \forall g \in \mathbb{G}, \quad
    D\left(q(g), \mathbb{E}_{G}[q(G)],p\right) \le \varepsilon.
\end{aligned}
\end{equation}

The constraint $D \colon \Delta(\Sigma)^3 \rightarrow \mathbb{R}$ allows to capture both distortion given a fixed $g$ through $q(g)$, and distortions in expectation through $\mathbb{E}_{G}[q(G)]$.
For instance, we can choose as $D$ a KL constraint $\text{KL}(q(g)||p)$ or a distortion-free constraint $\text{KL}(\mathbb{E}_G[q(G)]||p)$.
With our framework, as we show in \cref{sec:method_practice}, designing a watermarking sampling mechanism now becomes equivalent to choosing a meaningful metric $D$ and then solving \cref{eq:constrained_optimization_problem}.

\paragraph{Hard and Soft Constraints}
We define $D$ such that it depends on both $q(g)$ and $\mathbb{E}_{G}[q(G)]$.
We consider imposing a constraint on $q(g)$ as a \emph{hard constraint}, as the watermark cannot significantly distort the probability distribution given a fixed $g$.
Likewise, imposing a constraint on $\mathbb{E}_{G}[q(G)]$ is a \emph{soft constraint}.
A single instantiation of $q(g)$ can heavily distort the distribution, but on average, the constraint holds.
This means that, in practice, there are two different kinds of distortions when it comes to watermarking.
We find in \cref{sec:evaluation:diversity} that soft-constraint watermarks are more powerful and better preserve quality, but exhibit very low diversity.
In particular, we show in \cref{theorem:distortion_free_argmax} that (optimal) watermarks with only soft constraints are necessarily deterministic (\ie given a fixed prompt, the watermarked LLM always samples the same text).
In contrast, hard constraints tend to better preserve diversity at the cost of quality.
Interestingly, although the separation of hard and soft constraints is not strictly necessary in our framework, we found that all existing watermarking schemes from \cref{sec:method_practice:existing} can be expressed using only one of the two. 
This explicit separation provides a novel and more principled perspective on the watermark quality-diversity-power trade-off.

\paragraph{Penalized Formulation}
In practice, solving the constrained optimization problem from \cref{eq:constrained_optimization_problem} at each generation step can prove to be too costly.
Hence, for practical reasons, most schemes from \cref{sec:method_practice} actually correspond to the penalized version of the optimization problem,
\begin{align} \label{eq:penalized_optimization_problem}
    \argmax_{q: \mathbb{G} \rightarrow \Delta(\Sigma)} \mathbb{E}_G \!\left[G \!\cdot \!q(G) - \frac{1}{\delta} D\left(q(G), \mathbb{E}_{G}[q(G)], p\right)\right]\!.
\end{align}
With the penalized version, assuming the existence of an interior point, the shape of the optimal solution $q^*$ is the same as the constrained one: for all $\varepsilon > 0$, there exists $\delta \in \mathbb{R}_+$ such that \cref{eq:constrained_optimization_problem} and \cref{eq:penalized_optimization_problem} have the same solution.
Hence, the penalized version corresponds to parametrizing the scheme with $\delta$ instead of $\varepsilon$.
For clarity and consistency, in \cref{sec:method_practice} when we instantiate a watermark sampling mechanism, we parametrize it by $\varepsilon$ if we solve the constrained formulation (\cref{eq:constrained_optimization_problem}) in practice, and by $\delta$ if we solve the penalized formulation (\cref{eq:penalized_optimization_problem}). 

\section{Applying our Framework}
\label{sec:method_practice}

In this section, we show how our unified framework naturally describes both existing and new schemes, as illustrated in \cref{fig:main} (right).
In \cref{sec:method_practice:existing}, we show how our framework captures some popular schemes from the literature and even allows us to extend them.
For instance, AAR/KTH, which are widely seen as fixed sampling schemes, can be seen as solutions to \cref{eq:constrained_optimization_problem} with a distortion-free constraint, or a tournament layer in SynthID can be seen as a solution to \cref{eq:constrained_optimization_problem} with a $\chi^2$-distance constraint.
At the same time, in \cref{sec:method_practice:new}, we show how to instantiate entirely novel schemes given particular requirements.
We summarize all our results in \cref{tab:overview} and defer all proofs to \cref{app:proofs}.

\subsection{Capturing Existing Watermarking Sampling Mechanisms}
\label{sec:method_practice:existing}

We now demonstrate how some of the most popular sampling mechanisms from prior works~\citep{kgw1,aar,kth,synthid} can be derived and extended when instantiated with a specific constraint.

\paragraph{Red-Green Watermarks}
One of the most prominent schemes in the literature is the Red-Green watermark (KGW) from~\citet{kgw1}.
At each step $t$, the Red-Green watermark pseudo-randomly partitions the vocabulary into a set of $\gamma|\Sigma|$ green tokens and the red tokens.
Then, it adds a constant $\delta > 0$ to the logits of tokens in the green set, effectively boosting the probability of green tokens.
The detector thresholds the number of green tokens in a text to determine whether it is watermarked.

To capture Red-Green watermarks, we let the entries of $G$ be correlated Bernoulli random variables, ensuring it has exactly $\gamma|\Sigma|$ ones.
Then, we use the following constraint,
\begin{equation} \label{eq:kgw_constraint}
\forall g\in \mathbb{G}, \, D_{\text{R-G}}(q(g), \mathbb{E}_G[q(G)], p) := \text{KL}(q(g) || p).
\end{equation}
Given $\delta > 0$ as the penalization constant, we show in \cref{theorem:red_green_watermarks} that the solution to our optimization problem from \cref{eq:penalized_optimization_problem} is given by
\begin{equation} \label{eq:exponential_shift}
    \forall g \in \mathbb{G}, \, q_{\text{R-G}}(g) \propto p \exp(\delta g).
\end{equation}
In the logit space, we have, up to an additive constant,
\begin{equation} \label{eq:exponential_shift_log}
    \forall g \in \mathbb{G}, \, \log q_{\text{R-G}}(g) = \log p + \delta g.
\end{equation}
This expression exactly matches the Red-Green watermark algorithm described above: if a token $u \in \Sigma$ is green ($g_u = 1$), we add $\delta$ to its logit.
Hence, we have shown that Red-Green watermarks can be naturally derived from our framework as solutions to \cref{eq:penalized_optimization_problem} using a (hard) KL constraint, and we find in practice (\cref{sec:evaluation:metric}) that Red-Green watermarks are indeed Pareto-optimal with respect to the detectability--KL constraint trade-off.

At the same time, because Red-Green watermarks only use a hard constraint, \ie the constraint depends only on $q(g)$ and not on the distribution of $G$.
Hence, as we show in \cref{theorem:red_green_watermarks}, \cref{eq:exponential_shift} is the solution of \cref{eq:penalized_optimization_problem} with the constraints from \cref{eq:kgw_constraint} for any distribution of $G$.
This allows defining new watermarking schemes that we evaluate in \cref{app:additional_eval:distrib_ablation}.

\paragraph{AAR and KTH Watermarks}
Both \citet{aar} and \citet{kth} (EXP-edit) used the Gumbel-max trick to build a distortion-free watermark.
They associate with each token i.i.d. Gumbel scores and sample the next token as the argmax of the Gumbel scores and the token probability. 
In expectation over the Gumbel scores, such sampling is distributed exactly as the original probability.

To capture AAR, we use a soft-constraint with the entries of $G$ being i.i.d. Gumbel random variables, for all $g \in \mathbb{G}$,
\begin{equation} \label{eq:aar_constraint}
    D_{\text{AAR}}(q(g), \mathbb{E}_G[q(G)], p) := \text{KL}(\mathbb{E}_G[q(G)] || p).
\end{equation}
Given $\varepsilon = 0$ (\ie enforcing distortion-freeness), \cref{theorem:aar_watermarks} gives us the following solution
\begin{equation} \label{eq:aar_mechanism}
    \forall u \in \Sigma, q_{\text{AAR}}(g)_u := \mathbbm{1}\{ u = \argmax_{v \in \Sigma} (g_v + \log p_v)\}.
\end{equation}
This is exactly the AAR/KTH watermark, which our framework recovers when enforcing the distortion-free constraint with Gumbel scores $G$.
This also shows that, when using Gumbel scores with sum-based detectors, AAR/KTH are the most powerful distortion-free schemes.  

Importantly, with \cref{eq:aar_constraint} and $\varepsilon > 0$, our framework extends AAR/KTH to a (more powerful) distortionary variant.
Indeed, we show in \cref{theorem:aar_watermarks} that, when using $D_{\text{AAR}}$, we obtain the following sampling mechanism:
\begin{equation}
    q_{\text{AAR}}(g)_u := \mathbbm{1}\left\{ u = \argmax_{v \in \Sigma} \left(g_v + 
    \frac{1}{1+\delta} \log p_v\right)\right\},
\end{equation}
where $u \in \Sigma$ and $\delta \ge 0$.
We evaluate this extension in \cref{sec:evaluation}, and find it to be a powerful distortionary scheme.

Our framework also generalizes the AAR/KTH mechanism to any distribution of $G$.
For instance, when using a log-normal distribution for $G$, it directly corresponds to the HeavyWater scheme~\citep{ot_watermark}, originally derived using optimal transport.
Indeed, in \cref{theorem:aar_watermarks}, we find that the resulting sampling mechanisms for any $G$ are always pseudo-deterministic but lack a closed form: they require solving $|\Sigma|$ equations that depend on the distribution of $G$.
With an equality constraint ($\varepsilon = 0$), this is exactly the optimal transport problem introduced in~\citet{ot_watermark}.
Hence, setting $\varepsilon \ge 0$ also extends the schemes from~\citet{ot_watermark} with a distortionary version.

\paragraph{SynthID}
Another prominent watermarking scheme is SynthID~\citep{synthid}, which uses $m$-layer tournament sampling as a watermark sampling mechanism.

To capture the base $m=1$ layer tournament, we use i.i.d. Bernoulli scores $G$ and a $\chi^2$ distance as a constraint,
\begin{equation} \label{eq:chi2_constraint}
    D_{\chi^2}(q(g), \mathbb{E}_G[q(G)],p) := \sum_{u \in \Sigma} \frac{\left(q(g)_u - p_u\right)^2}{p_u}.
\end{equation}
Given $\varepsilon$ small enough, \cref{theorem:synthid_text} shows that the solution is given by
\begin{equation} \label{eq:1_layer_tournament}
    q_{\text{1 layer}}(g) := p(1 + (g - p \cdot g)).
\end{equation}
This corresponds to the vectorized 1-layer tournament sampling formula derived in~\citet{synthid}.
To capture the $m$-layer tournament, we show in \cref{theorem:synthid_text} that it directly corresponds to chaining the optimization problem $m$ times, each time using the $\chi^2$ constraint with respect to the previous solution.
This formulation intuitively offers a new perspective on why increasing the number of tournament layers $m$ improves the watermark strength: at each layer of the tournament, we reapply the $\chi^2$ watermark.

\paragraph{$\chi^2$ Watermark}
With \cref{eq:chi2_constraint}, we can also extend the 1-layer tournament beyond the SynthID formulation.
In particular, with arbitrary $\varepsilon \ge 0$ and given any $g \in \mathbb{G}$, we find in \cref{theorem:chi2_watermarks} that the sampling mechanism corresponding to the $\chi^2$ constraint is given by
\begin{equation} \label{eq:chi2_eq_wm}
    \forall u \in \Sigma, q^*(g)_u = p_u [1 + \delta(g_u + \mu)]_+
\end{equation}
where $\mu \in \mathbb{R}$ is chosen to satisfy the simplex constraint, $\delta \ge 0$, and $[\,\cdot\,]_+$ is the positive part function.
This offers a new approach to distortionary 1-layer tournament sampling.
In SynthID, greater power can be obtained at the cost of increased distortion by increasing the number of candidates per tournament.
However, this only allows power to be increased in discrete steps.
By contrast, in \cref{eq:chi2_eq_wm}, increasing $\delta$ increases power while trading off distortion in a continuous manner.
We evaluate this scheme using binomial scores in \cref{sec:evaluation}.

\input{figures/metric_eval/main_metric_eval.tex}

\subsection{Designing Optimal Watermarking Schemes Given Specific Constraints}
\label{sec:method_practice:new}

Next we show how to design optimal watermarking sampling mechanisms using our proposed framework with respect to specific constraints.
As long as computing the solution to \cref{eq:constrained_optimization_problem} is efficient, one can choose any constraints on the simplex as $D$, \eg the total variation distance or the Hellinger distance.
To highlight the practicality of our framework, we propose two new schemes, one hard- and one soft-constrained, where the constraints specifically target the watermarked text quality.
We find in \cref{sec:evaluation:detectability} that those new schemes indeed have a superior detectability-quality trade-off than prior works.

\paragraph{Perplexity As a Distance}
The per-token perplexity (PPL) is defined as the exponential of the information entropy.
It measures how surprising it is that a given token has been sampled.
The perplexity of a sequence of tokens is the average of the individual token perplexities.
In the broader NLP literature, and in the watermarking field in particular, it has been used as a proxy to measure the quality impact of a watermark~\citep{kgw1,synthid}.
Notably, our framework allows us to directly use the perplexity (measured according to the original model probability $p$) as a constraint when designing watermarking sampling mechanisms.
Specifically, we introduce both a hard-constraint and a soft-constraint variant, for all $g \in \mathbb{G}$,
\begin{align}
    & D_{\text{Hard PPL}} := (p-q(g))\cdot \log p \label{eq:hard_ppl_constraint} \\
    & D_{\text{Soft PPL}} := (p-\mathbb{E}_G[q(G)])\cdot \log p.
\end{align}

\paragraph{Hard PPL Constraint}
Given $\varepsilon \ge 0$, finding the watermark sampling mechanism using the hard PPL constraint is equivalent to solving the following optimization problem
\begin{equation} \label{eq:ppl_hard_constraint_op}
    \begin{aligned}
    q^* =  &\argmax_{q: \mathbb{G} \rightarrow \Delta(\Sigma)} \quad \mathbb{E}_G [G \cdot q(G)] \\
    & \text{s.t.} \, \forall g \in \mathbb{G}, D_{\text{Hard PPL}}\left(q(g), \mathbb{E}_{G}[q(G)],p\right) \le \varepsilon.
    \end{aligned}   
\end{equation}
This is a linear programming problem that, while no closed-form solution exists, can be solved efficiently using standard solvers.
In particular, given $g \in \mathbb{G}$, the optimal distribution $q^*(g)$ is almost deterministic: the optimal solution is the barycenter of two Dirac distributions and hence of support at most $2$.
This result is intuitive: if one wants to minimize the PPL of a sequence of tokens, one simply has to do greedy sampling.
An intuitive way to solve the optimization problem in \cref{eq:ppl_hard_constraint_op} is to order the tokens according to their probability $p$.
There exists a rank $r > 0$ such that deterministically sampling any token below this rank satisfies the constraint.
If the maximum of $G$ is reached in this subset, we sample this token.
Otherwise, find the best combination of a token below and a token above the threshold that maximizes the power.
We evaluate this scheme in \cref{sec:evaluation}, and find that it is a powerful scheme with high diversity. 

\paragraph{Soft PPL Constraint}
With the soft PPL constraint, we show in \cref{theorem:distortion_free_argmax} that, because we are only constraining $\mathbb{E}_{G}[q(G)]$, the solution $q^*(G)$ is a Dirac distribution. 
Specifically, we show in \cref{theorem:soft_ppl_watermark} that we have
\begin{equation}
    \exists \lambda \! \in \mathbb{R}, \forall u \! \in \Sigma, q^*\!(g)_u =\! \argmax_{v\in\Sigma}(g_{v}\!+\!\lambda(\varepsilon) \log p_{v}),
\end{equation}
where $\lambda(\varepsilon)$ is the unique solution to $\mathbb{E}_{G}[q^*(G) \cdot \log p]= p \cdot \log p - \varepsilon$.
Interestingly, when $G$ is composed of i.i.d. Gumbel random variables and $\varepsilon = 0$, then $\lambda(0) = 1$, which is exactly the AAR/KTH sampling mechanism.
While in general, the feasible set of distortion-free constrained watermarks is contained in the set of soft PPL constrained watermarks, with i.i.d. Gumbel random variables the solution coincides.
For other distributions, however, there is no closed-form solution for $\lambda(\varepsilon)$, requiring a root-finding algorithm to approximate $\lambda(\varepsilon)$.
While this makes the watermark itself more expensive, our framework remains consistent: Our evaluation in \cref{sec:evaluation} shows that this scheme outperforms all other schemes in the detectability-quality trade-off.

%% file: tables/table_overview.tex
\begin{table*}[t]
    \centering
    \caption{\textbf{Overview of Our Framework} We summarize which constraints correspond to which watermarks, including prior works and the novel watermarks introduced in~\cref{sec:method_practice}.
    We highlight in \colorbox{green!10}{light green} the novel schemes introduced in this work.
    We recall that $p$ is the model next-token probability distribution, $G$ the random variable representing the pseudorandom token scores, $g$ the sampled token scores (via the hashing mechanism), and $q$ the watermarked probability distribution.
    For SynthID, we chain $m$ constrained optimization problems corresponding to the $m$ tournament layers and let $q^{(0)} = p$.
    Given that, in this work, we do not extend the SynthID algorithm directly, we use the same algorithm as the one from~\citep{synthid}.
    All the algorithms are presented in \cref{app:algorithms}.}
    \label{tab:overview}
  
    \renewcommand{\arraystretch}{1.2}
    \newcommand{\skiplen}{0.000001\linewidth} 
    \newcommand{\rlen}{0.01\linewidth} 
    \resizebox{\linewidth}{!}{%
    \begingroup 
    \setlength{\tabcolsep}{5pt} %
  \begin{tabular}{lcccccc}
  \toprule
   \textbf{Scheme}  & Red-Green & AAR/KTH & \cellcolor{green!10}$\chi^2$ & SynthID & \cellcolor{green!10}PPL (Hard) & \cellcolor{green!10}PPL (Soft) \\
  \midrule
   \textbf{Constraint} & $\text{KL}(q(g) || p)$ & $\text{KL}(\mathbb{E}_G[q(G)] || p)$ & $\chi^2(q(g),p)$  & $\{\chi^2(q^{(i)}(g),q^{(i-1)}(g))\}_{i=1,\dots,m}$ & $(p-q(g))\cdot \log p$ &  $(p-\mathbb{E}_G[q(G)])\cdot \log p$ \\
   \textbf{$G$ Distribution} & Bernoulli & Gumbel & Binomial & Binomial & Binomial & Binomial  \\
   \midrule
   \textbf{Algorithm} & \cref{alg:red_green_watermark} & \cref{alg:aar_watermark} & \cref{alg:chi2_watermark} & \citet{synthid} & \cref{alg:ppl_hard_watermark} & \cref{alg:ppl_soft_watermark} \\
  \bottomrule
  \end{tabular}
    \endgroup
    }
  \end{table*}

%% file: figures/metric_eval/main_metric_eval.tex
\begin{figure*}[t]
    \centering
    \includegraphics[width=0.33\textwidth]{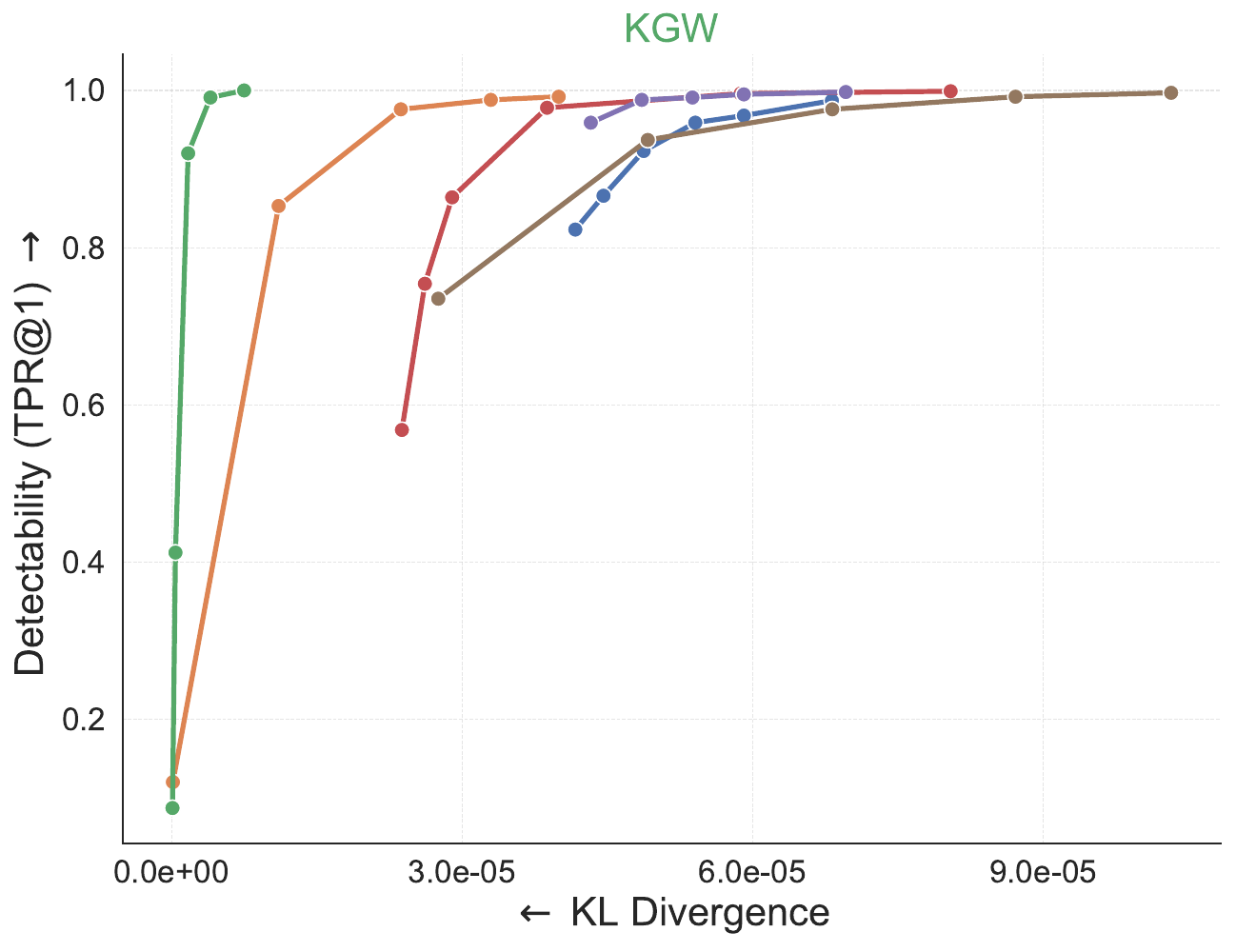}
    \includegraphics[width=0.33\textwidth]{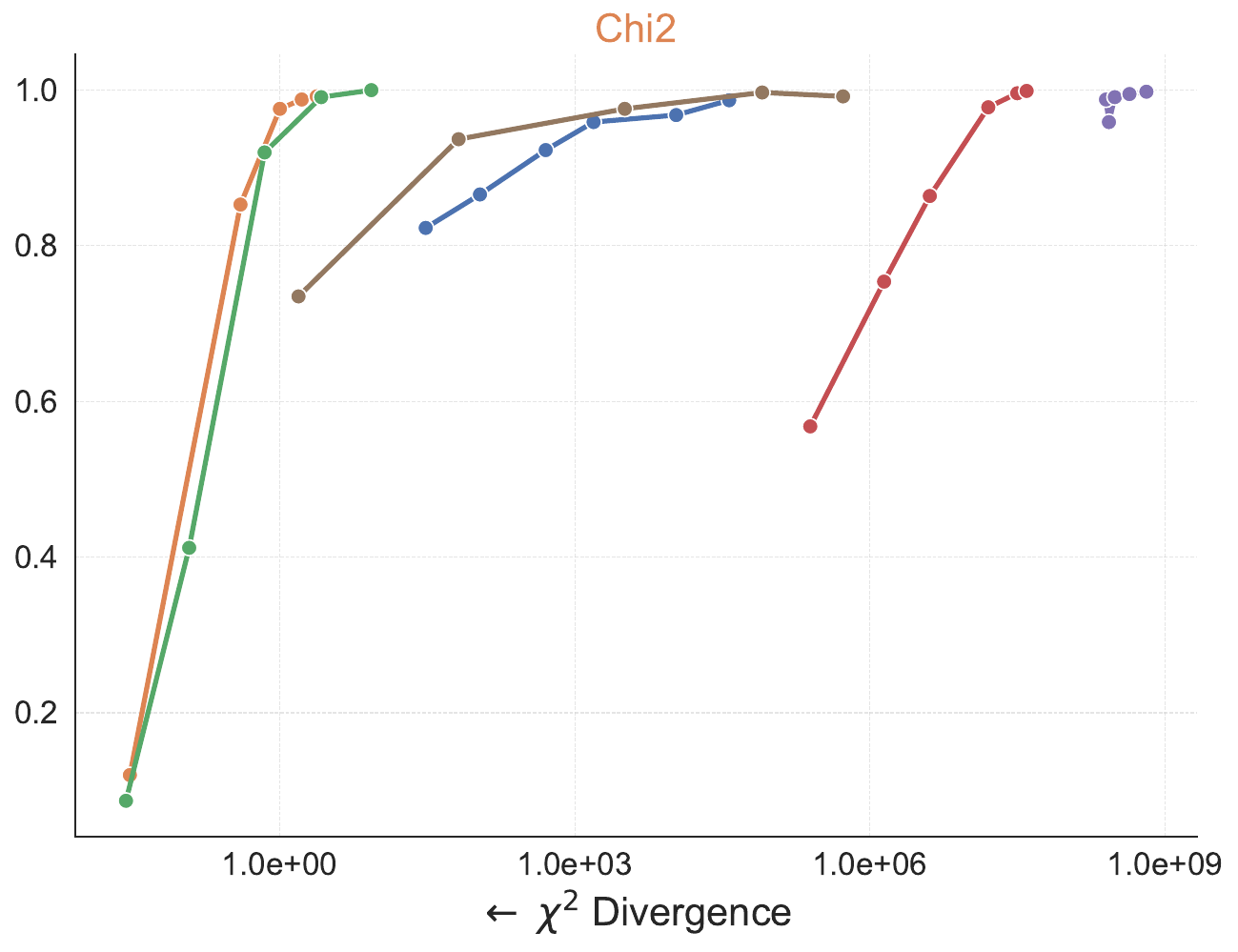}
    \includegraphics[width=0.33\textwidth]{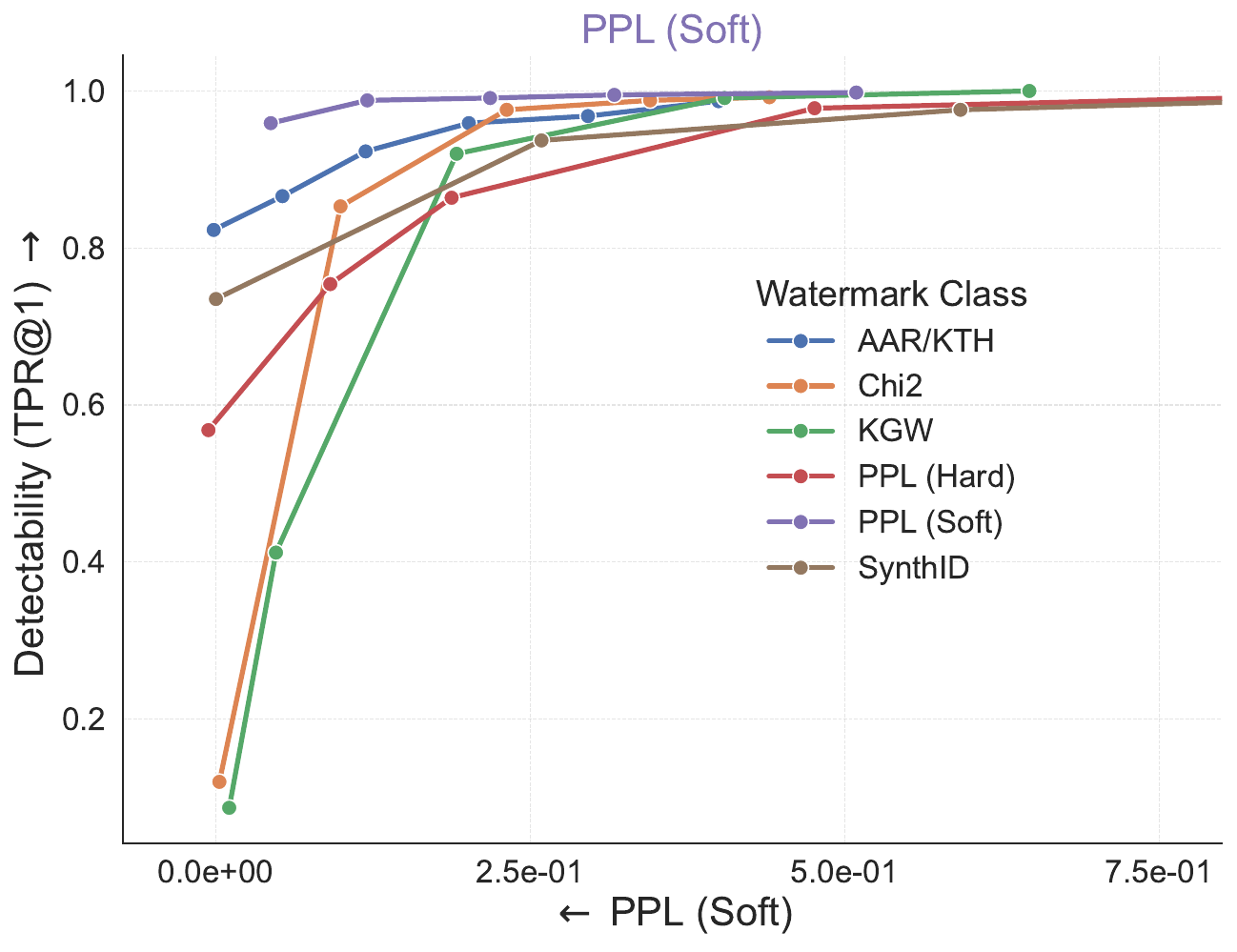}
    \caption{\textbf{Comparison of the Detectability–Constraint Trade-off.} 
    We compare the trade-off between watermark detectability (TPR@1) and different constraints (KL divergence (\emph{left}), $\chi^2$  distance (\emph{middle}), and soft PPL (\emph{right})).
    We find that, for each constraint, the corresponding scheme has the best detectability-constraint trade-off.
    Responses are $200$-token-long replies by \llama{} with temperature $0.7$ and $1000$ prompts from ELI5.
    We highlight above each subplot the scheme derived from the plot constraint.
    }
    \label{fig:main_metric_eval}
\end{figure*}

%% file: src/evaluation.tex
\section{Evaluation}
\label{sec:evaluation}

In this section, we experimentally validate our framework.
In \cref{sec:evaluation:metric}, we show that schemes derived from a constraint $D$ are Pareto-optimal with respect to $D$.
For instance, when measuring TPR@1 with respect to the (hard) KL constraint we find that Red-Green watermarks are indeed Pareto-optimal.
In \cref{sec:evaluation:detectability}, we show that both the new schemes introduced in \cref{sec:method_practice} have a state-of-the-art detectability-quality trade-off compared to prior works.~Lastly, in \cref{sec:evaluation:diversity}, we confirm the fundamental quality-diversity-power trade-off highlighted in \cref{sec:method_theory} by introducing a new experimental procedure tailored to measure this trade-off explicitly.

\input{figures/tpr_ppl_plots/tpr_ppl.tex}

\subsection{Optimality with Respect to the Constraint}
\label{sec:evaluation:metric}

Next, we experimentally validate our unified framework.
Specifically, we demonstrate that a watermarking mechanism designed using a given constraint is Pareto-optimal with respect to that constraint.
Overall, our approach enables the straightforward design of high-performing watermarks tailored to a specific metric of interest, such as distortion-freeness or perplexity.

\paragraph{Experimental Setup}
Unless stated otherwise, our experiments use \llama{} with temperature $0.7$.
For each watermark sampling mechanism, we defer the exact hyperparameters (\eg score distribution, detection algorithm, and hashing mechanism) to \cref{app:experimental_details}.
We generate, with each watermark and varying watermark strength parameters, a thousand replies to questions from the ELI5~\citep{eli5} dataset.
Each reply is $200$ tokens long.
For each scheme, we measure the TPR at 1\% FPR, and, for each generated token, the constraint (\eg KL divergence, $\chi^2$ distance) between the original and watermarked next-token probability distributions, which we then average over the sequences.
We detail the exact procedure in \cref{app:additional_eval:metric_optimality_ablation}.
To measure $\mathbb{E}_G[q(G)]$ with soft constraints, we use a Monte Carlo approximation with $128$ samples.

\paragraph{Results}
\cref{fig:main_metric_eval} shows the constraint-detectability trade-off for the KL constraint, the $\chi^2$ constraint, and the soft-PPL.
We defer the other constraints from \cref{sec:evaluation} to \cref{app:additional_eval:metric_optimality_ablation}.
Our results show that, as predicted by our framework, Red-Green watermarks are Pareto-optimal for the KL constraint, the $\chi^2$ scheme is optimal for the $\chi^2$ constraint, and the soft-PPL scheme is optimal for the soft-PPL constraint.
This result shows that our framework allows the principled construction of powerful schemes with respect to a specific constraint, and correctly captures which specific constraints watermarks from prior work optimize.

\subsection{Detectability-Quality Trade-off}
\label{sec:evaluation:detectability}

We next explore the detectability-quality trade-off of the different watermarks presented in \cref{sec:method_practice}.
Importantly, we show that the extensions of existing schemes (distortionary AAR/KTH, distortionary 1-layer tournament) and the newly proposed PPL-constrained schemes outperform prior work.
Notably, our soft PPL scheme achieves a state-of-the-art detectability-quality trade-off, confirming that the soft PPL constraint is a practical proxy for watermark quality.

\paragraph{Experimental Setup}
We use the same experimental setup as in \cref{sec:evaluation:metric}.
As a proxy for quality, we measure the average text perplexity using \qwen{}, and in \cref{app:additional_eval:benchmnarks} we use LLM benchmark accuracy.
Additionally, we show some watermarked text examples in \cref{app:text_example}, and we ablate this experiment on \mistral{} in \cref{app:additional_eval:ministral}.

\paragraph{Results}
In \cref{fig:main_tpr_ppl} we show the detectability-quality trade-off for all watermarks from~\cref{sec:method_practice}, explicitly differentiating between hard-constraint watermarks (left) and soft-constraint watermarks (right).

For hard-constraint watermarks, both the Red-Green watermark and the $\chi^2$ watermark have a similar detectability-quality trade-off.
The hard-PPL watermark exhibits higher detectability in the low-distortion regime.

For soft-constraint watermarks, at $\varepsilon = 0$ (\ie when the log PPL of the watermarked model matches that of the unwatermarked model), the soft-PPL scheme outperforms AAR/KTH, which itself outperforms SynthID.
Indeed, both SynthID and AAR/KTH satisfy the distortion-free constraint.
Yet, under the assumptions from \cref{sec:method_practice:existing} (\ie using i.i.d. Gumbel entries for $G$), AAR/KTH is the optimal sampling mechanism among all distortion-free schemes.
For the soft-PPL scheme, we specifically chose the constraint to maximize the watermark detectability-quality trade-off, which is why it outperforms both schemes.

Importantly, this result is not due to overfitting our quality metric (log PPL): as we find in \cref{app:additional_eval:benchmnarks}, when using LLM benchmark accuracy as a proxy for quality, the soft-PPL scheme still outperforms both AAR/KTH and SynthID.
In the distortionary case, where the PPL of the watermarked model exceeds that of the unwatermarked one, we find that the ordering remains similar.
In particular, this implies that our extension of AAR/KTH enables greater flexibility by trading distortion for better watermark detectability.

Importantly, these results highlight that our framework not only captures popular prior watermarks. It also allows for the principled incorporation of relevant watermarking constraints when designing new and powerful schemes.

\input{figures/diversity_eval/diversity_eval.tex}

\subsection{Impact on Diversity}
\label{sec:evaluation:diversity}

Next, we investigate how different watermarking methods affect output diversity.
Although prior work~\citep{gloaguen2025blackbox} suggested that watermarks, especially distortion-free schemes, may reduce diversity, this effect has not been rigorously evaluated.
We propose a new experimental protocol to quantify the impact of watermarking on LLM output diversity.
Our results show that soft-constrained watermarks substantially reduce diversity, whereas hard-constrained watermarks are able to retain it to a larger extent.

\paragraph{Measuring Watermark Diversity}
As in~\cref{sec:evaluation:detectability}, we use \llama{} with temperature $0.7$.
The effect of a watermark on diversity, as noted in~\citet{gloaguen2025blackbox}, is to reduce the number of unique outputs given a fixed prompt.
This means that, in order to measure the watermark's impact on diversity, we need to sample the model multiple times using the same prompt.
If the watermark negatively impacts diversity, we expect a higher number of repetitions within the generated answer sequences.
In the limit, with AAR/KTH, given a fixed prompt the sampling is deterministic and all the generated sentences are exactly the same: the watermark exhibits low diversity.

We therefore propose the following experimental setup.
Given a dataset of prompts $\{x_i\}_i$, for each prompt, we sample the model $N$ times, computing the self-BLEU score~\citep{self_bleu} for that prompt.
Self-BLEU scores are between $0$ and $1$ and measure the n-gram repetition between texts.
In our setting, a high self-BLEU score means that, for a given prompt, the multiple generations have very similar n-grams.
To smooth out the effect of the prompt itself (some prompts might naturally lead to low-diversity text), we then average all self-BLEU scores.
In practice, we use $100$ prompts from the ELI5 dataset and sample the model $N \coloneq 100$ times per prompt, yielding a total of $10{,}000$ samples.
For each sample, we generate $200$ tokens, using the trigram self-BLEU score with uniform weights.

\vspace{-2mm}
\paragraph{Watermarks Diversity-Quality Trade-off}
\cref{fig:main_diversity_eval} shows the impact on diversity and quality of different watermarks given a fixed TPR@1.
In particular, as predicted by our framework, we find that there is a negative correlation between diversity and quality: given a fixed power, some schemes prioritize diversity over quality and vice versa.
In particular, soft-constrained schemes (AAR/KTH and PPL (Soft)) have particularly low diversity (\ie high self-BLEU scores) but also the highest quality.
This confirms the theoretical derivation from~\cref{sec:method_theory}: at a fixed power, there is a fundamental trade-off between watermark quality and diversity.
More generally, this suggests that to measure the impact of distortion-free watermarks it is also necessary to evaluate the watermark behavior under repetition of the same prompt (a similar finding to~\citet{wu2025analyzing}), and not only assume that distortion-free implies no actual distortion.

%% file: figures/tpr_ppl_plots/tpr_ppl.tex
\begin{figure*}[t]
    \centering
    \includegraphics[width=\textwidth]{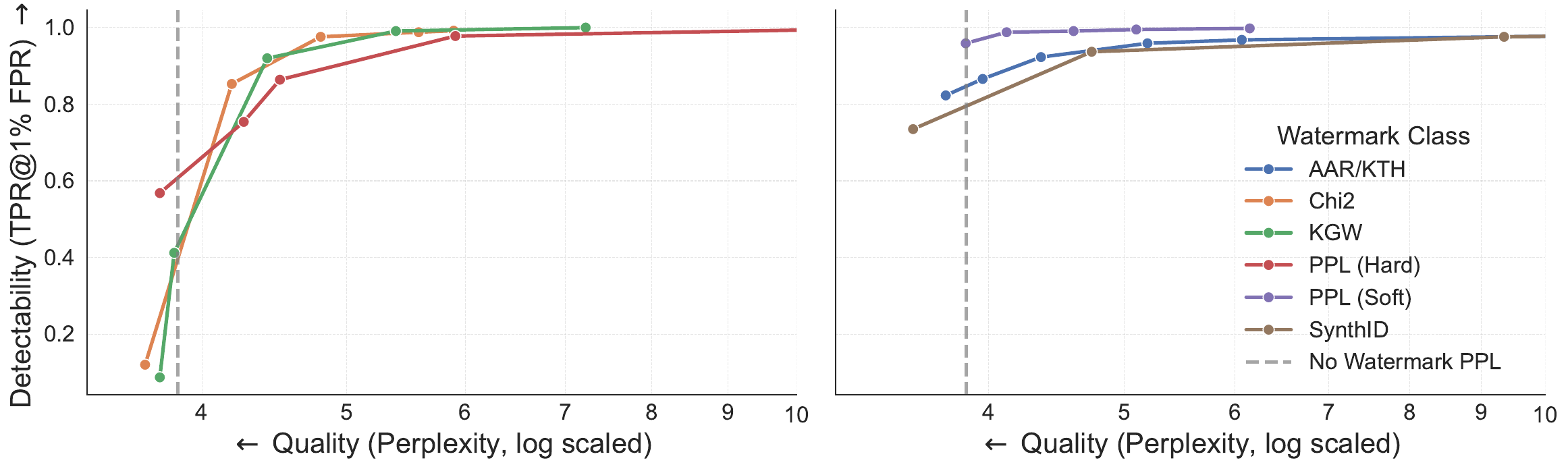}
    \caption{\textbf{Comparison of the Detectability–Quality Trade-off.} 
    We compare the trade-off between watermark detectability (TPR@1) and text quality (log PPL) for different constraint instantiations and for different $\varepsilon$.
    The \emph{left} figure shows the hard constraints and the \emph{right} one shows the soft constraints.
    The dashed line corresponds to the log PPL of the unwatermarked replies.
    Responses are 200-token-long replies by \llama{} with temperature $0.7$ and $1000$ prompts from ELI5.} 
    \label{fig:main_tpr_ppl}
\end{figure*}

%% file: figures/diversity_eval/diversity_eval.tex
\begin{figure}[t]
    \centering
    \includegraphics[width=0.48\textwidth]{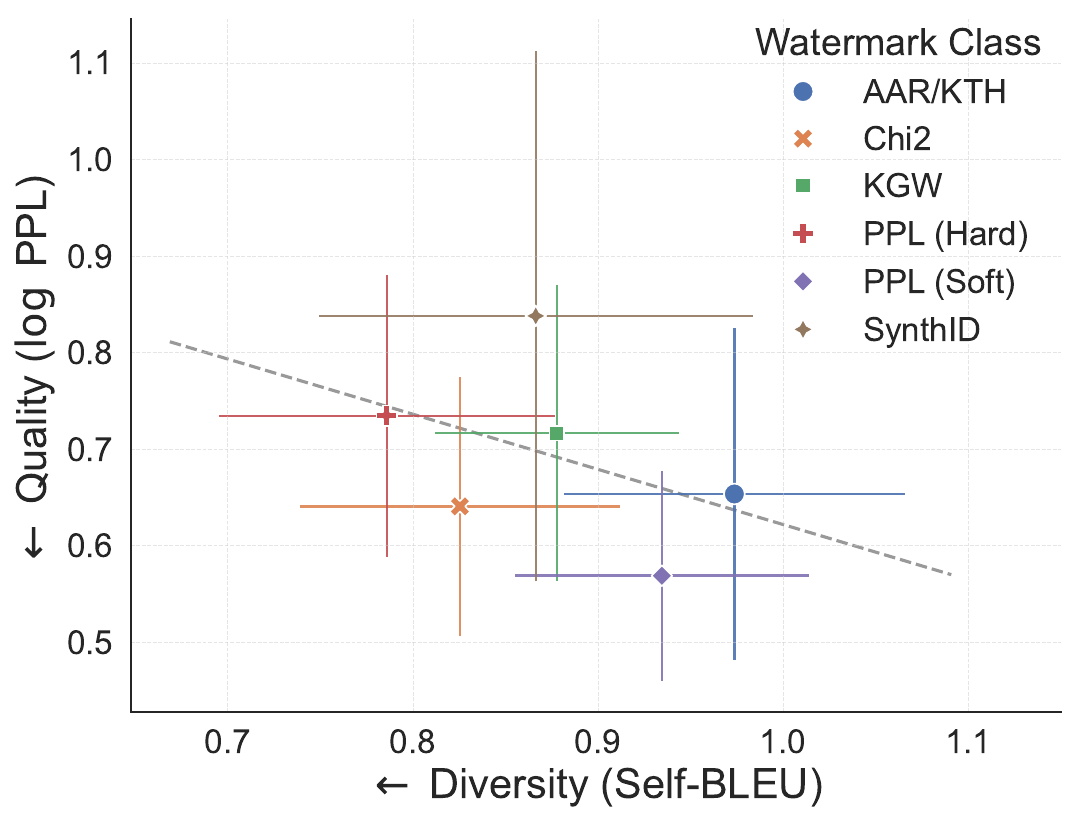}
    \caption{\textbf{Diversity–Quality Trade-off at Fixed TPR.}
        We compare the trade-off between watermark impact on diversity (Self-BLEU) and quality ($\log$ PPL) given a fixed TPR@1 of $0.95$.
        A high self-BLEU score corresponds to a low output diversity.
        The error bars correspond to the standard deviation and the dotted gray line shows the correlation.
        The PPL and TPR@1 are measured over $1,000$ prompts from ELI5 (\cref{sec:evaluation:detectability}), while the diversity is measured with the procedure described in~\cref{sec:evaluation:diversity}.
        We find that, given a fixed power, diversity and quality are negatively correlated.
        }
    \label{fig:main_diversity_eval}
\end{figure}

%% file: src/conclusion.tex
\section{Conclusion and Limitations}
\label{sec:conclusion}

Our work shows that most existing watermarks can be seen as instances of a unified constrained optimization problem.
As confirmed by our evaluation, our framework makes three contributions:
(i) it provides a principled way to interpret prior watermarks,
(ii) it allows us to practically instantiate powerful new schemes tailored to specific watermarking requirements,
and lastly (iii) it provides a novel perspective on the quality-diversity-power trade-off in watermarks.

\vspace{-2mm}
\paragraph{Limitations}
Our work makes two key assumptions when constructing the optimization problem.
First, we restrict the optimization to the token level rather than the sequence level.
Second, we deliberately optimize only the watermark sampling mechanism and do not jointly optimize the detector.
While these assumptions make the optimization tractable and enable us to capture and extend a wide range of watermarking schemes, there may exist jointly optimized approaches that achieve better overall watermarking performance.
Lastly, we note that there exists a range of watermarking algorithms (such as fully cryptographic watermarks) that are not captured by our framework.

%% file: src/impact_statement.tex
\section*{Impact Statement}
This paper presents a unified framework for LLM watermarking, which allows us to better understand existing schemes and design new ones.
Better-understood and better-tuned watermarks could support provenance, auditing, and mitigation of large-scale AI-generated text misuse.
Nonetheless, while LLM watermarks in general, and our framework in particular, are designed to mitigate false positives, they should not be used as sole evidence in major decisions.

%% file: appendix.tex
\input{src/appendix/experimental_details.tex}
\input{src/appendix/algorithms.tex}

\input{src/appendix/additional_evaluation.tex}
\input{src/appendix/proofs.tex}
\input{src/appendix/examples.tex}

%% file: src/appendix/experimental_details.tex
\section{Experimental Details}
\label{app:experimental_details}

In this section, we detail the hyperparameters used for our experiments in \cref{sec:evaluation}.

\paragraph{Watermark Hyperparameters}
For Red-Green watermarks, we used as a score distribution a Bernoulli distribution with parameter $\gamma = 0.5$.
For AAR/KTH, we used a Gumbel distribution with parameters $(0,1)$.
For SynthID, we used a $30$-layer tournament with Bernoulli $g$-values of parameter $0.5$.
Lastly, for the $\chi^2$ watermark, the hard PPL watermark, and the soft PPL watermark, we used a binomial distribution with parameters $(30,0.5)$.
For the hashing mechanism, we use for all schemes the SumHash hashing scheme from~\citet{kgw1} with the four previous tokens as the context.
For detection, for Red-Green watermarks we used a binomial test, for AAR/KTH a Kolmogorov–Smirnov test between the observed score distribution and the Gumbel CDF, for SynthID we used a binomial test, and lastly for $\chi^2$ watermarks and both PPL watermarks, we also used a binomial test.
For all tests, following the guidelines from~\citet{three_bricks}, we removed all (context, token) duplicates prior to the statistical test.

\paragraph{Sampling Hyperparameters}
For generating replies, we used the instruct version of \llama{} with a temperature of $0.7$ and $16$-bit quantization.
We also used top-$k$ with $k=50$ and constrained the model to generate $200$-token-long replies.
For measuring perplexity, we use the instruct version of \qwen{} with a temperature of $1.0$ and $8$-bit quantization.

%% file: src/appendix/algorithms.tex
\section{Algorithms}
\label{app:algorithms}

In this section, we examine the different sampling mechanisms for the watermarks introduced in \cref{sec:method_practice}.
We recall that, as explained in \cref{sec:method_theory}, we use $\varepsilon$ as a scheme parameter when it corresponds to the constrained formulation (\cref{eq:constrained_optimization_problem}) and $\delta$ when it corresponds to the penalized formulation (\cref{eq:penalized_optimization_problem}).

\subsection{Red-Green watermarks}

\input{algorithms/red_green.tex}

\subsection{AAR/KTH watermarks}

\input{algorithms/aar.tex}

\subsection{Chi2 watermark}

\input{algorithms/chi_two.tex}

\subsection{PPL watermarks}

\input{algorithms/ppl.tex}

%% file: algorithms/red_green.tex
\begin{algorithm}[H]
    \caption{Red-Green Watermark Sampling Mechanism}
    \label{alg:red_green_watermark}
    \begin{algorithmic}[1]
    \INPUT Next-token probability $p$, watermark strength $\delta$, current scores $g \in \mathbb{R}^{|\Sigma|}$.
    \STATE $q \gets p \exp(\delta g)$
    \STATE $q \gets \frac{1}{||q||_1} q$ \COMMENT{ Normalization}
    \RETURN $q$
    \end{algorithmic}
\end{algorithm}

%% file: algorithms/aar.tex
\begin{algorithm}[H]
    \caption{AAR/KTH Watermark Sampling Mechanism}
    \label{alg:aar_watermark}
    \begin{algorithmic}[1]
    \INPUT Next-token probability $p$, watermark strength $\delta$, current scores $g \in \mathbb{R}^{|\Sigma|}$.
    \STATE $i \gets \argmax(g + \frac{1}{1 + \delta}\log p)$
    \STATE $q_i \gets 1$
    \STATE $\forall u \neq i, \, q_u \gets 0$
    \RETURN $q$
    \end{algorithmic}
\end{algorithm}

%% file: algorithms/chi_two.tex
\begin{algorithm}[H]
    \caption{Compute Optimal Shift $\mu$}
    \label{alg:find_mu}
    \begin{algorithmic}[1]
    \INPUT Next-token probability $p$, watermark strength $\delta$, current scores $g \in \mathbb{R}^{|\Sigma|}$.
    \STATE Sort indices $\pi$ such that $g_{\pi_1} \ge g_{\pi_2} \ge \dots \ge g_{\pi_{|\Sigma|}}$
    \STATE Initialize active support $\mathcal{S} \gets \emptyset$
    \FOR{$i = 1$ \textbf{to} $|\Sigma|$}
        \STATE $\mathcal{S} \gets \mathcal{S} \cup \{\pi_i\}$ 
        \STATE $\mu \gets \frac{1 - \sum_{j \in \mathcal{S}} p_j (1 + \delta g_j)}{\delta \sum_{j \in \mathcal{S}} p_j}$ 
        \IF{$\delta \mu \ge -1 - \delta g_{\pi_i}$}
            \RETURN $\mu$
        \ENDIF
    \ENDFOR
    \RETURN $\mu$
    \end{algorithmic}
\end{algorithm}

\begin{algorithm}[H]
    \caption{$\chi^2$ Watermark Sampling Mechanism}
    \label{alg:chi2_watermark}
    \begin{algorithmic}[1]
    \INPUT Next-token probability $p$, watermark strength $\delta$, current scores $g \in \mathbb{R}^{|\Sigma|}$.
    \STATE $\mu \gets \text{Compute\_mu}(p,\delta,g)$
    \STATE $q \gets p [1 + \delta(g + \mu)]_+$
    \RETURN $q$
    \end{algorithmic}
\end{algorithm}

%% file: algorithms/ppl.tex
\begin{algorithm}[H]
    \caption{PPL (Hard) Watermark Sampling Mechanism}
    \label{alg:ppl_hard_watermark}
    \begin{algorithmic}[1]
    \INPUT Next-token probability $p$, watermark strength $\varepsilon$, current scores $g \in \mathbb{R}^{|\Sigma|}$.
    \STATE $q \gets \text{Solve\_LP}(\argmax_{x \in \Delta(\Sigma)} x \cdot g, \text{ s.t. } -x \cdot \log p \le -p \cdot \log p + \varepsilon)$ \COMMENT{Use convex-hull solver}
    \RETURN $q$
    \end{algorithmic}
\end{algorithm}

\begin{algorithm}[H]
    \caption{PPL (Soft) Constraint}
    \label{alg:ppl_soft_watermark_objective}
    \begin{algorithmic}[1]
    \REQUIRE Next-token probability $p$, watermark strength $\varepsilon$, current scores $g \in \mathbb{R}^{|\Sigma|}$.
    \INPUT $\beta$
    \STATE $i \gets \argmax (g + \beta \log p)$
    \RETURN $\log p_i - p \cdot \log p + \varepsilon$
    \end{algorithmic}
\end{algorithm}

\begin{algorithm}[H]
    \caption{PPL (Soft) Watermark Sampling Mechanism}
    \label{alg:ppl_soft_watermark}
    \begin{algorithmic}[1]
    \REQUIRE Number of Monte Carlo samples $n$.
    \INPUT Next-token probability $p$, watermark strength $\varepsilon$, current scores $g \in \mathbb{R}^{|\Sigma|}$, distribution $G$.
    \STATE $g_1,\dots,g_n \gets \text{Sample}(G)$ \COMMENT{We sample once at the beginning}
    \STATE $\beta \gets \text{Find\_root}(\sum_{i=1}^{n} \text{PPL\_Constraint}(p,\varepsilon,g_i))$ \COMMENT{We use a bisection solver.}
    \STATE $i \gets \argmax (g + \beta \log p)$
    \STATE $q_i \gets 1$
    \STATE $\forall u \neq i, \, q_u \gets 0$
    \RETURN $q$
    \end{algorithmic}
\end{algorithm}

%% file: src/appendix/additional_evaluation.tex
\section{Additional Evaluation}
\label{app:additional_eval}

In this section, we include additional ablation experiments to complement our main experiments from \cref{sec:evaluation}. 
In \cref{app:additional_eval:metric_optimality_ablation}, we extend \cref{sec:evaluation:metric} to the soft KL constraint and the hard PPL constraint. 
In \cref{app:additional_eval:distrib_ablation}, we ablate the effect of the score distribution $G$ on each of the watermarks from \cref{sec:method_practice}.
In \cref{app:additional_eval:ministral}, we confirm the results from \cref{sec:evaluation:detectability} using \mistral{}.
Lastly, in \cref{app:additional_eval:benchmnarks}, we show the quality–detectability trade-off of each of the watermarks from \cref{sec:method_practice} using LLM benchmark accuracy as a proxy for quality.

\subsection{Ablation on Measuring Optimality With Respect to the Constraint}
\label{app:additional_eval:metric_optimality_ablation}

Here, we extend the experiments from \cref{sec:evaluation:metric} to include the missing constraints, namely the hard PPL constraint from \cref{eq:hard_ppl_constraint} and the soft KL constraint from \cref{eq:aar_constraint}.

\input{figures/appendix/ablation_metric/metric_ablation.tex}
\input{figures/appendix/ablation_distrib/distrib_ablation.tex}

\paragraph{Experimental Setup}
We follow the same experimental setup as in \cref{sec:evaluation:metric}.
In particular, for each scheme, we measure the constraint between the original and watermarked next-token probability distributions, and we average it over the sequences.
This means that, given $n$ sequences of length $L$ and the corresponding score matrices $G^{(1)},\dots,G^{(n)} \in \mathbb{R}^{L \times |\Sigma|}$ and LLM probability distributions $p^{(1)}, \dots, p^{(n)} \in \Delta(\Sigma)^L$, we actually measure
\[
\frac{1}{n} \sum_{i=1}^{n} \frac{1}{L} \sum_{t=1}^L D \Big( q(p^{(i)}_t, G^{(i)}_t), \, \mathbb{E}_{G'} \big[ q(p^{(i)}_t, G') \big], \, p^{(i)}_t \Big) 
\approx 
\mathbb{E}_{p,G} \Bigg[ D \Big( q(p,G), \, \mathbb{E}_{G'} \big[ q(p,G') \big], \, p \Big) \Bigg]
\]
where we explicitly show the dependence of $q$ on $p$.
For the hard PPL metric in particular, because it is linear with respect to $q$, this means that in practice we measure the soft PPL metric.
A second limitation of our comparison is that not all schemes use the same distribution for $G$.
This means that the objective and the detector are not the same; it also means that there is no guarantee that a scheme optimized for a given distribution of $G$ outperforms (in terms of the constraint-quality trade-off) another scheme that is not derived from this particular constraint but uses a different distribution for $G$.

\paragraph{Results}
\cref{fig:app:metric_ablation} illustrates both of these limitations.
For the KL divergence between the expected watermark distribution and the next-token distribution, we see that AAR, despite being derived from this constraint, is not optimal.
This is because, unlike most other schemes, AAR uses the Gumbel distribution for $G$, while other schemes use either the binomial distribution or the Bernoulli distribution.
For the hard PPL scheme, similarly, we find that it does not appear to be optimal in \cref{fig:app:metric_ablation}.
This is because, in practice, we can only measure the expected constraint.
For the PPL constraint, the expected hard PPL constraint corresponds exactly to the soft PPL constraint, hence we find that \cref{fig:app:metric_ablation} (right) and \cref{fig:main_metric_eval} (right) are very similar.

\subsection{Ablation on the Score Distribution}
\label{app:additional_eval:distrib_ablation}

Here, we ablate the score distribution $G$.
For all the schemes from \cref{sec:method_practice} except AAR/KTH and SynthID, there is no fundamental restriction on which distribution to choose for $G$.
Yet, our framework does not indicate which score distribution to choose to maximize power: in the optimization problem from \cref{eq:constrained_optimization_problem}, the distribution is treated as a constant.
This means that we have to find experimentally which distribution is best to maximize the detectability–constraint trade-off for each watermark.

\paragraph{Experimental Setup}
We follow closely the experimental setup from \cref{sec:evaluation:metric}.
Specifically, for each scheme, we measure its respective constraint between the original and watermarked next-token probability distributions, and we average it over the sequence.
We also measure the perplexity (computed with \qwen{}) to evaluate the detectability–quality trade-off.
Each scheme is evaluated with $6$ different distributions for the token scores $G$: Bernoulli with success $0.5$, binomial with parameters $(30,0.5)$, Gumbel with parameters $(0,1)$, lognormal with parameters $(0,1)$, normal with parameters $(0,1)$, and uniform over $[0,1]$.

\paragraph{Results}
\cref{fig:app:distrib_ablation} shows for each scheme the different trade-offs when using various score distributions.
In particular, we find that for hard-constraint watermarks, the choice of the score distribution has almost no influence.
For all the distributions tested, the resulting watermarks had similar trade-offs.
One exception is the $\chi^2$ watermark with a Bernoulli distribution, where we find that increasing the distortion beyond a threshold leads to catastrophic failures of the watermarking scheme.
This is why we only show low distortions for the $\chi^2$ watermark with a Bernoulli distribution in \cref{fig:app:distrib_ablation}.

For soft-constraint watermarks, we evaluate the soft PPL scheme in \cref{fig:app:distrib_ablation}.
We find that, in this case, the choice of the distribution $G$ has a significant influence on the quality–detectability trade-off.
Specifically, the heavier the tail of the distribution, the lower the performance of the scheme.
Interestingly, this is the opposite behavior from what~\citet{ot_watermark} observed when extending the AAR scheme to non-Gumbel distributions.
This suggests that there is no single best distribution that works for all watermarking schemes.

\cref{fig:app:distrib_ablation} also highlights an interesting result for the soft PPL watermark.
As stated in \cref{theorem:soft_ppl_watermark}, \cref{alg:ppl_soft_watermark} only holds for continuous distributions (otherwise, to enforce the constraint in addition to $\beta$ one must find the tie-breaking rule).
Hence, we see that using \cref{alg:ppl_soft_watermark} with a Bernoulli distribution leads to a more distortionary watermark than expected.

\subsection{Evaluation on \mistral{}}
\label{app:additional_eval:ministral}

In this section, we reproduce the experiments from \cref{sec:evaluation:detectability} using \mistral{} instead of \llama{} to ensure that the newly proposed watermarks generalize to models of different sizes and from different families.

\input{figures/appendix/ablation_ministral/tpr_ppl_ministral.tex}

\paragraph{Results}
\cref{fig:ministral_tpr_ppl} shows the detectability-quality trade-off for all watermarks from \cref{sec:method_practice}, with hard-constraint watermarks on the left and soft-constraint watermarks on the right.  
We find that, for all watermarks except SynthID, the results are very similar to \cref{fig:main_tpr_ppl}.
In particular, this means that our framework is model-agnostic and can be applied to most LLMs.
Regarding SynthID, we find that at higher distortion levels (not represented in \cref{fig:ministral_tpr_ppl}), model quality significantly deteriorates.

\subsection{Measuring Detectability-Quality Trade-off with LLM Benchmarks}
\label{app:additional_eval:benchmnarks}

In this section, we extend the experiments from \cref{sec:evaluation:detectability} by measuring the detectability-quality trade-off with LLM benchmark accuracy rather than perplexity.
We find that, even when quality is measured through benchmark accuracy, our soft-PPL schemes outperform all prior work.

\input{figures/appendix/ablation_benchmarks/benchmark_ablation.tex}

\paragraph{Experimental Setup}
To measure TPR@1, we follow the experimental setup from \cref{sec:evaluation:detectability}.
For LLM benchmarks, we measure accuracy on three benchmarks with long answers, namely two coding tasks, HumanEval~\citep{humaneval} and MBPP~\citep{mbpp}, and one math task, GSM8K~\citep{gsm8k}.
We use the LM Evaluation Harness framework from~\citet{eval-harness}.
For generating the benchmark answers, we use a temperature of $0.7$ and \llama{}.

\paragraph{Results}
\cref{fig:benchmarks} shows the detectability-quality trade-off, using LLM benchmark accuracy as a proxy for quality, for all watermarks from \cref{sec:method_practice}, with hard-constraint watermarks on the left and soft-constraint watermarks on the right.  
We draw a similar conclusion as in \cref{sec:evaluation:detectability}: both the scheme extensions derived from our framework as well as the novel schemes outperform prior work.
In particular, the soft-PPL scheme maximizes the detectability-quality trade-off across all schemes evaluated.

%% file: figures/appendix/ablation_metric/metric_ablation.tex
\begin{figure*}[t]
    \centering
    \includegraphics[width=0.48\textwidth]{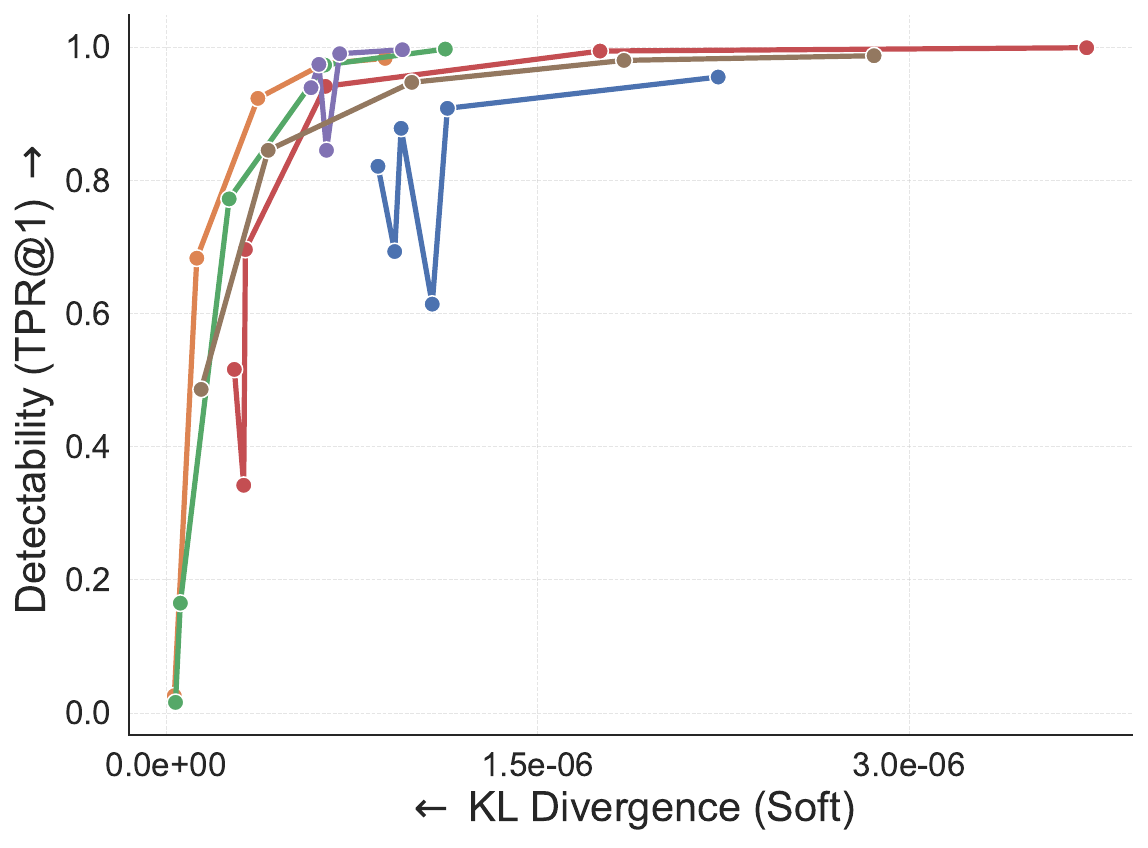}
    \includegraphics[width=0.48\textwidth]{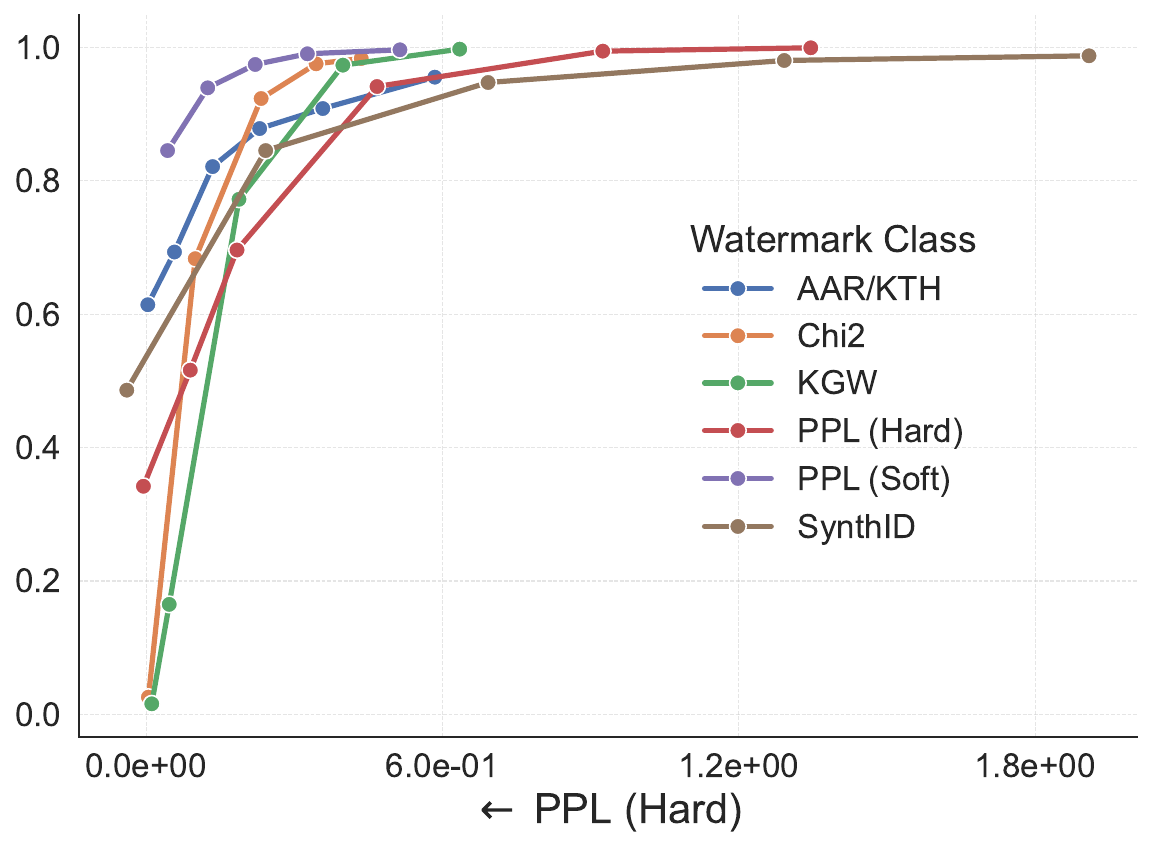}
     \caption{\textbf{Additional Detectability–Constraint Trade-off} 
    We compare the trade-off between watermark detectability (TPR@1) and different constraints (KL divergence between $\mathbb{E}_G[q(G)]$ and $p$ (\emph{left}), and hard PPL (\emph{right})).
    Responses are 200-token-long replies by \llama{} with temperature $0.7$ and $1000$ prompts from ELI5.
    }
    \label{fig:app:metric_ablation}
\end{figure*}

%% file: figures/appendix/ablation_distrib/distrib_ablation.tex
\begin{figure*}[ht!]
    \centering
    \includegraphics[width=\textwidth]{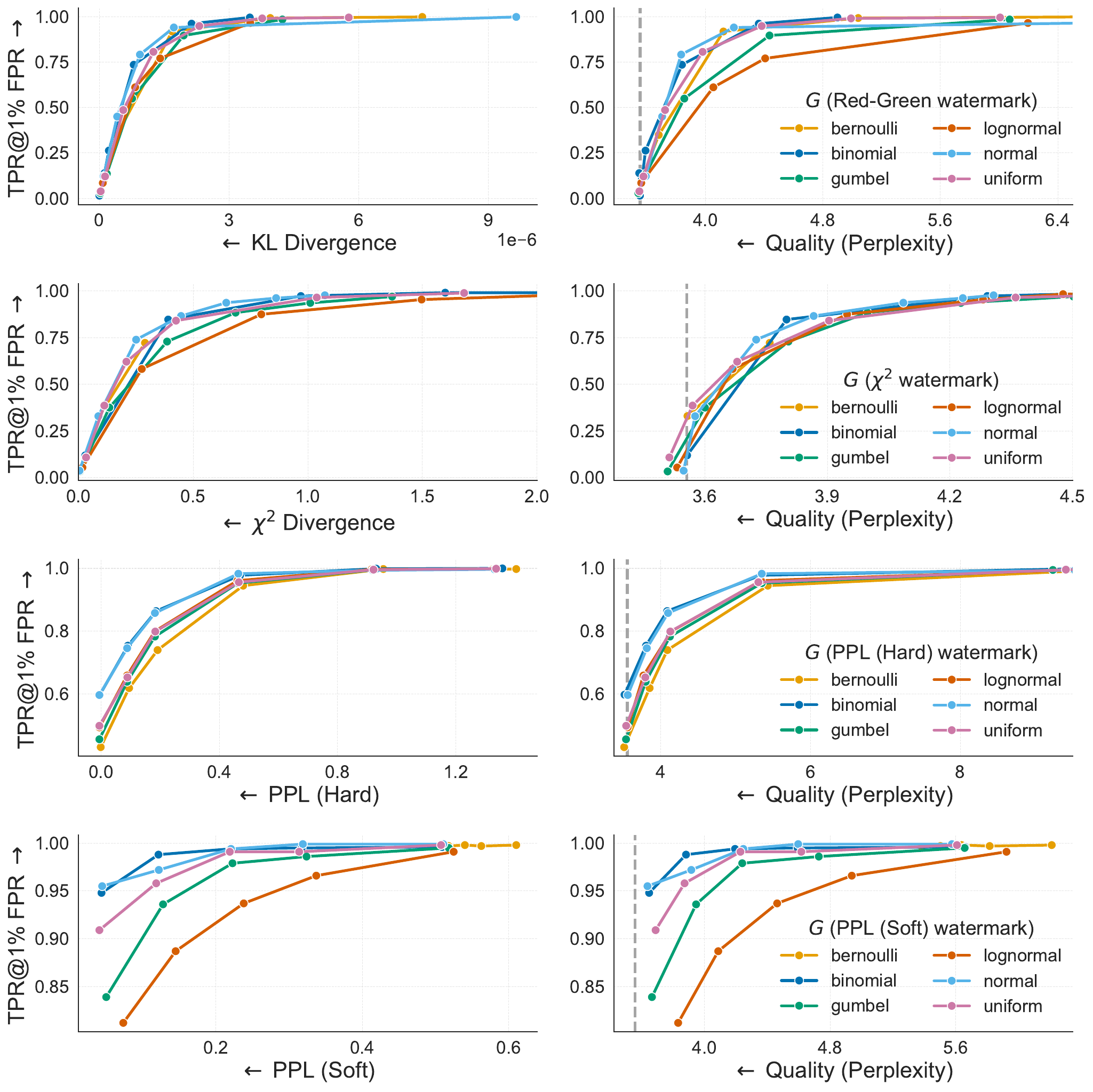}
    \caption{\textbf{Watermarks' Trade-offs for Different Score Distributions.} 
    For each watermark, we compare the trade-off between watermark detectability (TPR@1) and its respective constraint (\emph{left}), and between watermark detectability and perplexity (\emph{right}) for different score distributions $G$.
    The gray dashed line corresponds to the unwatermarked perplexity.
    Responses are 200-token-long replies by \llama{} with temperature $0.7$ and $1000$ prompts from ELI5.
    }
    \label{fig:app:distrib_ablation}
\end{figure*}

%% file: figures/appendix/ablation_ministral/tpr_ppl_ministral.tex
\begin{figure*}[t]
    \centering
    \includegraphics[width=\textwidth]{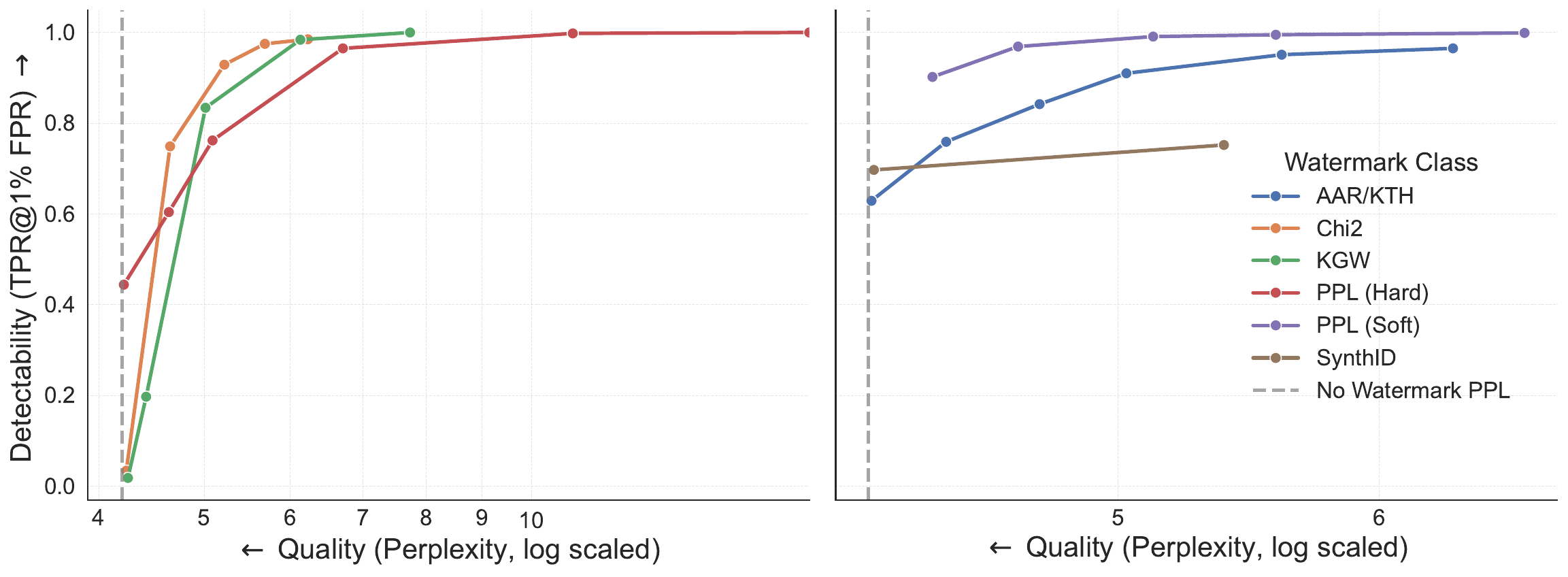}
    \caption{\textbf{Comparison of the Detectability–Quality Trade-off.} 
    We compare the trade-off between watermark detectability (TPR@1) and text quality (log PPL) for different constraint instantiations and for different $\varepsilon$.
    The \emph{left} figure shows the hard constraints and the \emph{right} one shows the soft constraints.
    The dashed line corresponds to the log PPL of the unwatermarked replies.
    Responses are 200-token-long replies by \mistral{} with temperature $0.7$ and $1000$ prompts from ELI5.} 
    \label{fig:ministral_tpr_ppl}
\end{figure*}

%% file: figures/appendix/ablation_benchmarks/benchmark_ablation.tex
\begin{figure*}[t]
    \centering
    \includegraphics[width=\textwidth]{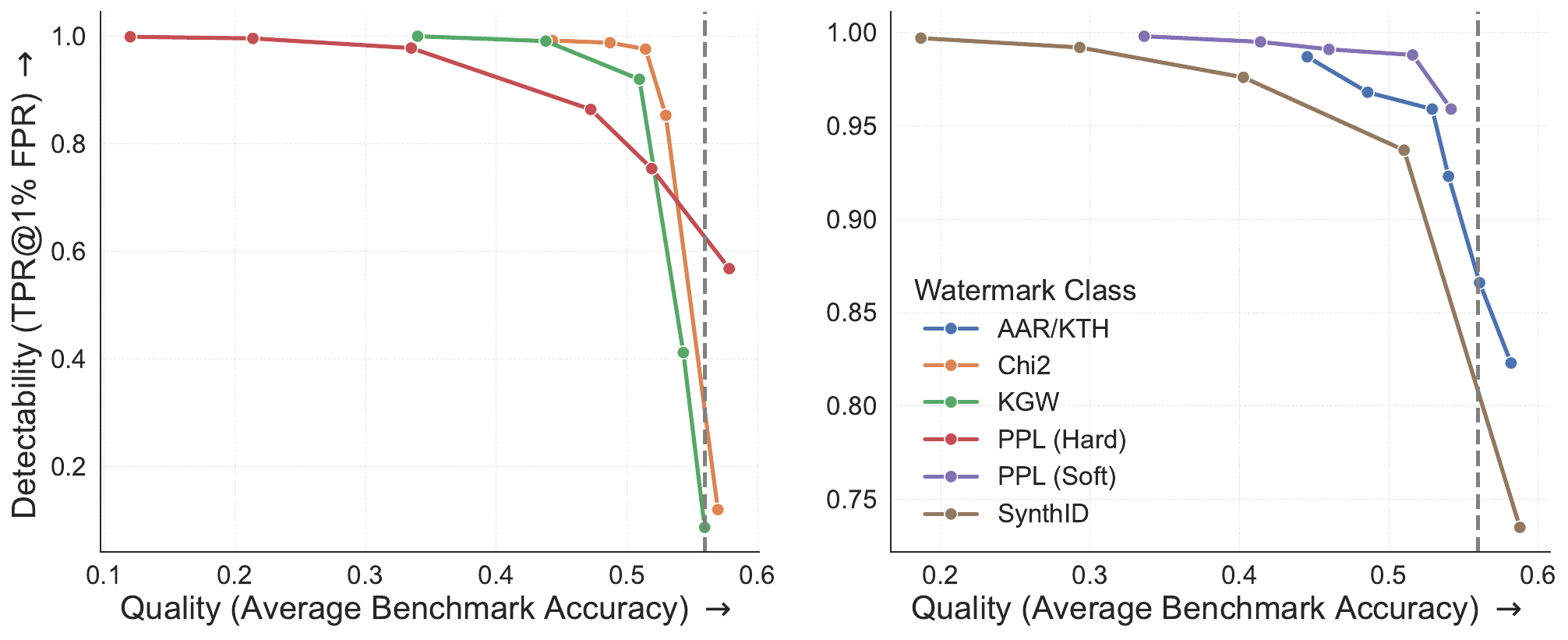}
    \caption{\textbf{Comparison of the Detectability-Quality Trade-off.} 
    We compare the trade-off between watermark detectability (TPR@1) and text quality (LLM benchmark accuracy) for different constraint instantiations and for different $\varepsilon$.
    The \emph{left} figure shows the hard constraints and the \emph{right} one shows the soft constraints.
    The dashed line corresponds to the average accuracy of the unwatermarked model.
    Responses are 200-token-long replies by \mistral{} with temperature $0.7$ and $1000$ prompts from ELI5.
    The benchmark accuracy is averaged over HumanEval, MBPP and GSM8K.} 
    \label{fig:benchmarks}
\end{figure*}

%% file: src/appendix/proofs.tex
\section{Proofs}
\label{app:proofs}

\noindent We follow the notation from the main text: $G$ denotes the (context-dependent) score random vector with support $\mathbb{G}$, $g\in\mathbb{G}$ a realization, and $q(g)\in\Delta(\Sigma)$ the corresponding watermarked distribution.

\subsection{Red-Green Watermarks}
\label{app:proofs:rg_proofs}

Here we formalize and then prove the different results established with respect to Red-Green watermarks.
In particular, we show that KL constraints (\cref{eq:kgw_constraint}) indeed lead to the Red-Green watermark sampling mechanism (\cref{theorem:red_green_watermarks}).
We also extend our framework with the penalized optimization problem \cref{theorem:red_green_watermarks_penalized}.
The penalized optimization problem is exactly the Red-Green watermark from~\citet{kgw1}, where the scheme is parametrized by a strength parameter $\delta$.
With the constrained version, the parameter $\delta$ is only implicitly defined and depends on $\varepsilon$ and $p$.

\begin{theorem}\label{theorem:red_green_watermarks}
Let $p\in\Delta(\Sigma)$ have full support, let $g\in\mathbb{G}$ be non-constant, and let $\varepsilon>0$. Consider
\begin{align*}
q^*(g) \in \argmax_{q(g)\in\Delta(\Sigma)} \quad & g\cdot q(g) \\
\text{subject to}\quad & \mathrm{KL}(q(g)\|p)\le \varepsilon,
\end{align*}
where $\mathrm{KL}(q(g)\|p)=\sum_{u\in\Sigma} q(g)_u\log\frac{q(g)_u}{p_u}$ with the convention $0\log 0 = 0$.

Define, for $\delta\ge 0$,
\begin{equation*}
q_\delta(g)_u\;:=\;\frac{p_u e^{\delta g_u}}{Z_g(\delta)},\qquad Z_g(\delta):=\sum_{v\in\Sigma} p_v e^{\delta g_v},
\end{equation*}
and let
\begin{equation*}
M(g):=\arg\max_{u\in\Sigma} g_u,\qquad \varepsilon_{\max}(g):=-\log\Big(\sum_{u\in M(g)} p_u\Big).
\end{equation*}
Then:
\begin{enumerate}
    \item If $0<\varepsilon<\varepsilon_{\max}(g)$, there exists a \emph{unique} $\delta\in(0,\infty)$ such that $\mathrm{KL}(q_\delta(g)\|p)=\varepsilon$. Moreover, the optimizer is unique and equals $q^*(g)=q_\delta(g)$, i.e.
    \begin{equation*}
    q^*(g)\propto p\,e^{\delta g}.
    \end{equation*}
    \item If $\varepsilon \ge \varepsilon_{\max}(g)$, the distribution
    \begin{equation*}
    q_\infty(g)_u:=\frac{p_u\mathbf{1}\{u\in M(g)\}}{\sum_{v\in M(g)} p_v}
    \end{equation*}
    is optimal.
\end{enumerate}
\end{theorem}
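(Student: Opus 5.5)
The plan is to analyse the one-parameter exponential family $\delta\mapsto q_\delta(g)$ first and then certify optimality through the Gibbs variational inequality, with no general duality theorem needed.

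\textbf{Step 1: monotonicity of the KL along the family.} Write $\psi(\delta):=\log Z_g(\delta)$. A direct computation gives
\begin{equation*}
\mathrm{KL}(q_\delta(g)\|p)=\delta\,\psi'(\delta)-\psi(\delta)=:K(\delta),
\end{equation*}
and hence $K'(\delta)=\delta\,\psi''(\delta)=\delta\,\mathrm{Var}_{q_\delta}(g)$. Because $p$ has full support and $g$ is non-constant, this variance is strictly positive for every finite $\delta$. So $K$ is continuous and strictly increasing on $[0,\infty)$, with $K(0)=0$. As $\delta\to\infty$, $q_\delta(g)\to q_\infty(g)$, and $K(\delta)\to \mathrm{KL}(q_\infty\|p)=-\log\sum_{u\in M(g)}p_u=\varepsilon_{\max}(g)$ by continuity of KL on the finite simplex. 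Since $K$ is strictly increasing, $K(\delta)<\varepsilon_{\max}$ for all finite $\delta$. The intermediate value theorem then yields a unique $\delta\in(0,\infty)$ with $K(\delta)=\varepsilon$ whenever $0<\varepsilon<\varepsilon_{\max}$.

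\textbf{Step 2: optimality and uniqueness in case 1.} Take any feasible $q$, so $\mathrm{KL}(q\|p)\le\varepsilon$. Using $\log(q_\delta/p)=\delta g-\psi(\delta)$, we get the identity
\begin{equation*}
\mathrm{KL}(q\|p)=\mathrm{KL}(q\|q_\delta)+\delta\, g\cdot q-\psi(\delta).
\end{equation*}
Applying the same identity to $q_\delta$ itself gives $\varepsilon=\delta\, g\cdot q_\delta-\psi(\delta)$. Subtracting the two,
\begin{equation*}
\delta\,(g\cdot q_\delta-g\cdot q)=\varepsilon-\mathrm{KL}(q\|p)+\mathrm{KL}(q\|q_\delta)\ge \mathrm{KL}(q\|q_\delta)\ge 0.
\end{equation*}
Since $\delta>0$, $q_\delta$ is optimal. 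If $q$ is also optimal, the left side is zero, which forces $\mathrm{KL}(q\|q_\delta)=0$ and hence $q=q_\delta$. The convention $0\log0=0$ is harmless here because $q_\delta$ has full support.

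\textbf{Step 3: case 2.} Every $q\in\Delta(\Sigma)$ satisfies $g\cdot q\le\max_u g_u$. The distribution $q_\infty$ attains this bound, since it is supported on $M(g)$. It is also feasible, because $\mathrm{KL}(q_\infty\|p)=\varepsilon_{\max}\le\varepsilon$. So $q_\infty$ is optimal.

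\textbf{Main obstacle.} The only delicate point is the limit $K(\delta)\to\varepsilon_{\max}$ together with strict monotonicity. I would handle it by writing $q_\delta$ after factoring out $e^{\delta\max g}$, which shows that the mass outside $M(g)$ decays exponentially in $\delta$. Then $K$ is bounded by $\varepsilon_{\max}$ and approaches it, which gives the required range for the intermediate value argument.
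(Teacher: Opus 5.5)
Your proposal is correct. Your Steps 1 and 3 closely match the paper's proof: the same analysis of $\delta\mapsto\mathrm{KL}(q_\delta(g)\|p)$ via the derivative $\delta\operatorname{Var}_{q_\delta(g)}(g)$ and the limit $\varepsilon_{\max}(g)$, and the same Case 2 argument. The genuine difference is how you certify optimality in Case 1.

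The paper writes a Lagrangian with multipliers for the KL and normalization constraints and derives the exponential form from stationarity. It then rules out $\lambda=0$, uses complementary slackness to force $\mathrm{KL}=\varepsilon$, and states that the resulting $q_\delta(g)$ ``satisfies the KKT conditions, making it the unique optimizer.'' You instead take the candidate as given and verify it through the exact decomposition $\mathrm{KL}(q\|p)=\mathrm{KL}(q\|q_\delta)+\delta\,g\cdot q-\psi(\delta)$. This gives optimality and uniqueness in a single inequality.

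Your route is more self-contained and more rigorous:
\begin{itemize}
\item It needs no KKT machinery.
\item It automatically handles feasible $q$ on the boundary of the simplex. The paper's stationarity computation tacitly assumes $q(g)_u>0$ for all $u$ and omits nonnegativity multipliers. Its exclusion of $\lambda=0$ via non-constancy of $g$ implicitly relies on the same assumption.
\item It actually proves the uniqueness that the paper only asserts.
\end{itemize}
What the paper's route buys is that it \emph{derives} the form $q\propto p\,e^{\delta g}$ rather than presupposing it. It is also the same Lagrangian template the paper reuses for the $\chi^2$, AAR/KTH and soft-PPL results.
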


\begin{proof}
Fix $g\in\mathbb{G}$. The problem is a convex optimization problem maximizing a linear function over a convex set (the KL-divergence ball). We introduce a Lagrange multiplier $\lambda \ge 0$ for the inequality constraint $\mathrm{KL}(q(g)\|p) - \varepsilon \le 0$ and a multiplier $\nu \in \mathbb{R}$ for the simplex constraint $\sum_u q(g)_u = 1$. The Lagrangian is:
\begin{equation*}
    \mathcal{L}(q(g), \lambda, \nu) = -\sum_{u\in\Sigma} q(g)_u g_u + \lambda \left( \sum_{u\in\Sigma} q(g)_u \log \frac{q(g)_u}{p_u} - \varepsilon \right) + \nu \left( \sum_{u\in\Sigma} q(g)_u - 1 \right).
\end{equation*}
Note that we minimize $-\mathcal{L}$ to maximize the objective. Taking the derivative with respect to $q(g)_u$ and setting it to zero:
\begin{equation*}
    \frac{\partial \mathcal{L}}{\partial q(g)_u} = -g_u + \lambda \left( \log \frac{q(g)_u}{p_u} + 1 \right) + \nu = 0.
\end{equation*}
Rearranging for $q(g)_u$:
\begin{equation} \label{eq:kkt_form}
    \lambda \log \frac{q(g)_u}{p_u} = g_u - \nu - \lambda \implies q(g)_u = p_u \exp\left( \frac{g_u - \nu - \lambda}{\lambda} \right).
\end{equation}
If $\lambda = 0$, equation \cref{eq:kkt_form} implies $g_u = \nu$ for all $u$ where $q(g)_u > 0$, which contradicts the assumption that $g$ is non-constant. Thus, we must have $\lambda > 0$. Let $\delta \coloneqq 1/\lambda \in (0, \infty)$. Absorbing the normalization constants into a factor $1/Z_g(\delta)$, we obtain the form:
\begin{equation*}
    q_\delta(g)_u = \frac{p_u e^{\delta g_u}}{Z_g(\delta)}, \quad \text{where } Z_g(\delta) = \sum_{v\in\Sigma} p_v e^{\delta g_v}.
\end{equation*}
By the Karush-Kuhn-Tucker (KKT) conditions, specifically complementary slackness, we require $\lambda (\mathrm{KL}(q_\delta(g) \| p) - \varepsilon) = 0$. Since $\lambda > 0$, we must have $\mathrm{KL}(q_\delta(g) \| p) = \varepsilon$, unless the constraint is inactive at the global maximum of the objective over the simplex (Case 2).

Let $\psi_g(\delta) \coloneqq \mathrm{KL}(q_\delta(g) \| p)$. We analyze the behavior of $\psi_g(\delta)$:
\begin{itemize}
    \item As $\delta \to 0$, $q_\delta(g) \to p$, so $\psi_g(0) = 0$.
    \item As $\delta \to \infty$, the mass of $q_\delta(g)$ concentrates on the set $M(g) = \argmax_{u} g_u$. Specifically, $q_\delta(g) \to q_\infty(g)$, where $q_\infty(g)_u = \frac{p_u \mathbf{1}\{u \in M(g)\}}{\sum_{v \in M(g)} p_v}$.
    \item The limit of the KL divergence is:
    \begin{equation*}
        \lim_{\delta \to \infty} \psi_g(\delta) = \sum_{u \in M(g)} \frac{p_u}{\sum_{v \in M(g)} p_v} \log \left( \frac{p_u / \sum_{v \in M(g)} p_v}{p_u} \right) = -\log \left( \sum_{v \in M(g)} p_v \right) = \varepsilon_{\max}(g).
    \end{equation*}
    \item The derivative is $\psi_g'(\delta) = \delta \operatorname{Var}_{U\sim q_\delta(g)}(g_U)$. Since $g$ is non-constant, the variance is positive, so $\psi_g(\delta)$ is strictly increasing.
\end{itemize}

\textbf{Case 1: $0 < \varepsilon < \varepsilon_{\max}(g)$.} \\
Since $\psi_g(0) = 0$, $\lim_{\delta\to\infty} \psi_g(\delta) = \varepsilon_{\max}(g)$, and $\psi_g$ is continuous and strictly increasing, there exists a unique $\delta \in (0, \infty)$ such that $\psi_g(\delta) = \varepsilon$. This $\delta$ satisfies the KKT conditions, making $q_\delta(g)$ the unique optimizer.

\textbf{Case 2: $\varepsilon \ge \varepsilon_{\max}(g)$.} \\
The unconstrained maximum of the linear function $g \cdot q(g)$ over the probability simplex $\Delta(\Sigma)$ occurs at the vertices corresponding to $M(g)$. Among distributions supported on $M(g)$, $q_\infty(g)$ minimizes the KL divergence to $p$ (it is the projection of $p$ onto the face defined by $M(g)$). Since $\mathrm{KL}(q_\infty(g) \| p) = \varepsilon_{\max}(g) \le \varepsilon$, $q_\infty(g)$ is feasible. Since it achieves the global maximum of the objective function $g \cdot q(g)$, it is optimal.
\end{proof}

\begin{theorem}\label{theorem:red_green_watermarks_penalized}
Let $p\in\Delta(\Sigma)$ have full support, let $g\in\mathbb{G}$ be non-constant, and let $\delta>0$. Consider
\begin{align*}
q^*(g) \in \argmax_{q(g)\in\Delta(\Sigma)} \quad & g\cdot q(g) - \frac{1}{\delta} \mathrm{KL}(q(g)\|p),
\end{align*}
We then have, 
$$
    q^*(g)\propto p\,e^{\delta g}.
$$
\end{theorem}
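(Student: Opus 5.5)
The cleanest route is the Gibbs variational principle, which avoids KKT bookkeeping entirely. Fix $\delta>0$ and let $q_\delta(g)_u = p_u e^{\delta g_u}/Z_g(\delta)$ with $Z_g(\delta)=\sum_v p_v e^{\delta g_v}$, as in \cref{theorem:red_green_watermarks}. For any $q\in\Delta(\Sigma)$, multiply the objective by $\delta>0$ (which does not change the maximizer). Writing $\log p_u + \delta g_u = \log q_\delta(g)_u + \log Z_g(\delta)$ for every $u$, we get the identity
\begin{equation*}
\delta\, g\cdot q - \mathrm{KL}(q\|p) \;=\; \sum_{u} q_u\big(\delta g_u + \log p_u - \log q_u\big) \;=\; \log Z_g(\delta) - \mathrm{KL}\big(q\,\|\,q_\delta(g)\big).
\end{equation*}
Here we use the convention $0\log 0=0$ on the support of $q$. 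Since $p$ has full support, $q_\delta(g)$ also has full support, so every term is finite.

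The first key step is to verify this identity carefully, including the case where $q$ has zeros. Terms with $q_u=0$ contribute zero on both sides, so the identity holds for all $q\in\Delta(\Sigma)$.

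The second step invokes Gibbs' inequality: $\mathrm{KL}(q\|q_\delta(g))\ge 0$, with equality if and only if $q=q_\delta(g)$. Hence the penalized objective is bounded above by $\log Z_g(\delta)/\delta$. This bound is attained exactly at $q_\delta(g)$, which gives both existence and uniqueness of the maximizer, and $q^*(g)\propto p\,e^{\delta g}$.

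Alternatively, one could redo the Lagrangian argument from the proof of \cref{theorem:red_green_watermarks}, now with the multiplier fixed to $\lambda=1/\delta$. Stationarity gives $\log(q_u/p_u)=\delta g_u - \text{const}$. Strict concavity of the objective (linear minus strictly convex KL) then yields uniqueness. One would also need to rule out boundary solutions, which holds because the derivative of $q\log q$ tends to $-\infty$ at $0$.

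There is no real obstacle here. The only point needing care is handling zero entries of $q$, and the variational identity disposes of it. Note that the non-constancy assumption on $g$ is not actually needed for this penalized statement. It only matters in \cref{theorem:red_green_watermarks}, to ensure the multiplier there is positive.
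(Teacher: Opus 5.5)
Your proof is correct, but your main argument differs from the paper's. The paper's proof is essentially your ``alternative'': it introduces a single multiplier $\nu$ for $\sum_u q(g)_u = 1$, sets $\partial\mathcal{L}/\partial q(g)_u = 0$, and reads off $q(g)_u = p_u e^{\delta g_u} e^{-(1+\delta\nu)}$. It leaves implicit that the optimum is interior (no multipliers for $q(g)_u \ge 0$ appear) and that the stationary point is the unique maximizer. These are exactly the points you flag as needing strict concavity and the $-\infty$ boundary derivative of $q\log q$. Your Gibbs identity $\delta\, g\cdot q - \mathrm{KL}(q\|p) = \log Z_g(\delta) - \mathrm{KL}(q\|q_\delta(g))$ instead gives a global optimality certificate in one line. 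Existence, uniqueness, the optimal value $\log Z_g(\delta)/\delta$, and the treatment of zero entries all come for free, with no appeal to KKT theory. What the paper's route buys is uniformity: the same first-order template is reused for the constrained KL problem and for the $\chi^2$ and soft-constrained schemes. There the positivity constraints can bind and the optimizer has no closed form like $q_\delta(g)$, whereas your identity exploits the exponential-tilting structure specific to the KL penalty. Your remark that non-constancy of $g$ is superfluous here is also right: for constant $g$ one simply gets $q_\delta(g) = p$.
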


\begin{proof}
Fix $g\in\mathbb{G}$. We formulate the Lagrangian $\mathcal{L}(q(g), \nu)$ for the optimization problem, introducing a multiplier $\nu$ for the normalization constraint $\sum_{u} q(g)_u = 1$:
\begin{equation*}
    \mathcal{L}(q(g), \nu) = \sum_{u \in \Sigma} q(g)_u g_u - \frac{1}{\delta} \sum_{u \in \Sigma} q(g)_u \ln \frac{q(g)_u}{p_u} - \nu \left(\sum_{u \in \Sigma} q(g)_u - 1\right).
\end{equation*}
The first-order optimality condition with respect to $q(g)_u$ is given by setting the derivative to zero:
\begin{equation*}
    \frac{\partial \mathcal{L}}{\partial q(g)_u} = g_u - \frac{1}{\delta} \left( \ln \frac{q(g)_u}{p_u} + 1 \right) - \nu = 0.
\end{equation*}
Rearranging the terms yields:
\begin{equation*}
    \ln \frac{q(g)_u}{p_u} = \delta g_u - (1 + \delta \nu) \implies q(g)_u = p_u e^{\delta g_u} e^{-(1 + \delta \nu)}.
\end{equation*}
Since $e^{-(1 + \delta \nu)}$ is a constant scaling factor independent of $u$ (determined by the constraint $\sum_{u} q(g)_u=1$), we conclude that
\begin{equation*}
    q^*(g)_u \propto p_u e^{\delta g_u}.
\end{equation*}
\end{proof}

\subsection{AAR/KTH Watermarks}

Here we formalize and then prove the different results established with respect to AAR/KTH watermarks.  
We first show the shape of the optimal solution when using soft-KL constraints with any score distribution (\cref{theorem:aar_watermarks}), and then establish the extended watermark mechanism of AAR/KTH (\cref{sec:method_practice:existing}) when using a Gumbel distribution (\cref{theorem:aar_watermarks,theorem:aar_watermarks_penalized}).  
Lastly, we show that for soft-constrained watermarks with a continuous score distribution, the optimal solution is deterministic (\cref{theorem:distortion_free_argmax}), which theoretically demonstrates the lack of diversity in soft-constrained schemes observed in \cref{sec:evaluation:diversity}.

\begin{theorem}\label{theorem:aar_watermarks}
Let $p\in\Delta(\Sigma)$ have full support, let $G$ be a random vector with support $\mathbb{G}\subseteq\mathbb{R}^{|\Sigma|}$ and i.i.d.\ coordinates from a law $\mathcal{L}$, and let $\varepsilon\ge0$. Consider the optimization problem
\begin{align}\label{eq:aar_optimization_problem}
    q^* \in \argmax_{q:\mathbb{G}\to\Delta(\Sigma)} \quad & \mathbb{E}\big[G\cdot q(G)\big] \\
    \text{subject to}\quad & \KL\big(\mathbb{E}[q(G)]\|p\big)\le \varepsilon,
\end{align}
where $\KL(r\|p)=\sum_{u\in\Sigma} r_u\log\frac{r_u}{p_u}$.

Then there exists a vector $\tau\in\mathbb{R}^{|\Sigma|}$ and a measurable tie-breaking function $T: \mathbb{G} \times \Sigma \to [0, 1]$ such that an optimal solution is given by
\begin{equation}\label{eq:aar_argmax_discrete}
    q^*(G)_u \;=\; 
    \begin{cases} 
        1 & \text{if } \{u\} = M(G, \tau), \\
        T_u(G) & \text{if } u \in M(G, \tau) \text{ and } |M(G, \tau)| > 1, \\
        0 & \text{otherwise},
    \end{cases}
\end{equation}
where $M(G, \tau)$ denotes the set of maximizing indices:
\begin{equation*}
    M(G, \tau) = \argmax_{v\in\Sigma}\big(G_v+\tau_v\big).
\end{equation*}
The tie-breaking weights satisfy $\sum_{u \in M(G, \tau)} T_u(G) = 1$ for all $G$, and $T,\tau$ are chosen such that $\KL(\mathbb{E}[q^*(G)]\|p)=\varepsilon$ (unless $\tau=0$ is feasible).

In particular, if $G$ has i.i.d.\ standard Gumbel coordinates, there exists $\alpha\in[0,1]$ such that an optimal solution is given by
\begin{equation*}
\forall u\in\Sigma,\qquad q^*(G)_u \;=\; \mathbbm{1}\Big\{u\in\arg\max_{v\in\Sigma}\big(G_v+\alpha \log p_v\big)\Big\},
\end{equation*}
and $\alpha$ is chosen so that $\mathrm{KL}(\mathbb{E}[q^*(G)]\|p)=\varepsilon$ if feasible, else $\alpha =0$.
\end{theorem}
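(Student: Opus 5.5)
The plan is to reduce the problem to a finite-dimensional one over the marginal $r := \mathbb{E}[q(G)] \in \Delta(\Sigma)$. For a fixed target marginal $r$, the inner problem
\begin{equation*}
\max_{q:\mathbb{G}\to\Delta(\Sigma)} \mathbb{E}[G\cdot q(G)] \quad \text{s.t.}\quad \mathbb{E}[q(G)] = r
\end{equation*}
is a linear program over couplings between the law of $G$ and the discrete law $r$ on $\Sigma$: it is an optimal transport problem with reward $c(g,u)=g_u$. The outer problem is then to maximize the concave value function $V(r)$ over the convex set $\{r : \KL(r\|p)\le\varepsilon\}$.

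\textbf{Step 1 (inner problem and the argmax shape).} Since the reward is linear and the constraints are affine, I would write the Lagrangian
\begin{equation*}
\mathbb{E}[G\cdot q(G)] + \tau\cdot(\mathbb{E}[q(G)] - r) = \mathbb{E}[(G+\tau)\cdot q(G)] - \tau\cdot r .
\end{equation*}
For fixed $\tau$, this is maximized pointwise in $g$ by placing all mass of $q(g)$ on $M(g,\tau)$. This gives exactly the form \cref{eq:aar_argmax_discrete}, with the tie-breaking weights $T$ a measurable selection among maximizers. To justify that the optimal dual $\tau$ exists and that there is no duality gap, I would invoke Kantorovich duality for discrete targets (semi-discrete OT). Because $\Sigma$ is finite, this reduces to maximizing the concave function $\tau\mapsto \mathbb{E}[\max_v(G_v+\tau_v)]-\tau\cdot r$. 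Existence of a maximizer requires that this function be coercive modulo constant shifts, and that holds whenever $\mathbb{E}|G_v|<\infty$ and $r$ has full support.

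\textbf{Step 2 (outer problem).} Combining the two levels, the full problem becomes a saddle problem in $(r,\tau)$. Alternatively, I would dualize the KL constraint with multiplier $\lambda\ge 0$: maximize $\mathbb{E}[G\cdot q(G)]-\lambda(\KL(\mathbb{E}q\|p)-\varepsilon)$. Differentiating in the direction of $\mathbb{E}q$ shows that the optimal $q$ is a pointwise argmax of $G_v+\tau_v$ with $\tau_v=-\lambda(\log(r_v/p_v)+1)$, and additive constants can be dropped. Complementary slackness then gives $\KL=\varepsilon$ unless $\lambda=0$. If $\lambda=0$, $\tau$ is constant, equivalently $\tau=0$, meaning plain greedy-on-$G$ is feasible. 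This is the main technical obstacle, mainly because of the rigor required: existence of an optimizer over measurable $q$, Slater's condition ($r=p$ is strictly feasible when $\varepsilon>0$; the case $\varepsilon=0$ is handled directly as $r=p$ fixed), and the case $r_v=0$ on some coordinates, where $\tau_v=-\infty$ and $r$ is only on the boundary.

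\textbf{Step 3 (Gumbel specialization).} For i.i.d.\ standard Gumbel $G$, the Gumbel-max trick gives, for any $\tau$,
\begin{equation*}
P\big(u=\arg\max_v (G_v+\tau_v)\big)=\frac{e^{\tau_u}}{\sum_v e^{\tau_v}}.
\end{equation*}
Ties have probability zero, so $T$ is irrelevant. From Step 2, the stationarity condition reads $\tau_u = -\lambda\log(r_u/p_u)+c$, with $r=\mathrm{softmax}(\tau)$. I would look for a solution $\tau=\alpha\log p$. Then $r\propto p^\alpha$, and $\log(r_u/p_u)=(\alpha-1)\log p_u+\text{const}$, so consistency requires $\alpha=-\lambda(\alpha-1)$, i.e.\ $\alpha=\lambda/(1+\lambda)\in[0,1)$, with $\alpha=1$ as the limiting case $\varepsilon=0$.

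It remains to verify that $\alpha\mapsto\KL(p^\alpha/Z_\alpha\|p)$ is continuous and decreasing on $[0,1]$, from $\KL(\mathrm{unif}\|p)$ at $\alpha=0$ to $0$ at $\alpha=1$. This follows by the same derivative-is-a-variance computation used in \cref{theorem:red_green_watermarks}, applied to the exponential family $p^\alpha$. The intermediate value theorem then gives the $\alpha$ matching $\varepsilon$. If $\varepsilon\ge\KL(\mathrm{unif}\|p)$, the constraint is slack, and we take $\alpha=0$.
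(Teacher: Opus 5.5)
Your proposal is correct. For the general-law part it is essentially the paper's argument. The paper writes one joint Lagrangian, with a vector multiplier for $\mathbb{E}[q(G)]=r$ and a scalar one for the KL constraint. It notes that the supremum over $q$ separates pointwise into $\max_v(g_v+\lambda_v)$, and reads off the argmax form, the tie-breaking selection and complementary slackness. Your split into a semi-discrete OT inner problem and a concave outer problem in $r$ is the same computation. It is organized so that existence of the dual potential and of the tie-breaking selection come from known OT facts, rather than from an unqualified appeal to strong duality. One slip: $\tau\mapsto\mathbb{E}[\max_v(G_v+\tau_v)]-\tau\cdot r$ is convex, and the dual \emph{minimizes} it. Your coercivity-modulo-constants argument is exactly what that minimization needs.

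The genuine difference is the Gumbel case. The paper computes the value of every argmax rule in closed form, $\mathbb{E}[G\cdot q^\tau(G)]=\gamma+\log\sum_v e^{\tau_v}-\tau\cdot r(\tau)=\gamma+H(r(\tau))$, using the Gumbel-max formula and the expected maximum of shifted Gumbels. The problem then becomes maximizing Shannon entropy over the KL ball, whose Lagrangian gives $r\propto p^{\alpha}$ with $\alpha=\lambda/(1+\lambda)$. You instead plug $r=\mathrm{softmax}(\tau)$ into the general stationarity condition $\tau=-\lambda\log(r/p)+\mathrm{const}$ and solve the fixed point. Your route skips the Euler--Mascheroni computation and exhibits the Gumbel case as an explicitly solvable instance of the $|\Sigma|$-equation system that arises for general laws. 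The paper's route buys an interpretable value function (the objective equals $\gamma$ plus the entropy of the average output) and a clean finite-dimensional convex program.

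To close your argument, state why the ansatz yields an optimum and not just a stationary point. Writing $\tau=\log r+c$, the fixed-point equation becomes $(1+\lambda)\log r=\lambda\log p+\mathrm{const}$, whose unique solution is $r\propto p^{\lambda/(1+\lambda)}$. Since $\mathbb{E}[G\cdot q(G)]-\lambda\KL(\mathbb{E}[q(G)]\|p)$ is concave in $q$ and $r$ is interior, the pointwise-argmax condition is sufficient. So take $\lambda=\alpha/(1-\alpha)$, with $\alpha$ fixed by your monotonicity/IVT argument to make $\KL=\varepsilon$; the usual Lagrangian sufficiency inequality then gives optimality for the constrained problem. Your separate treatment of $\varepsilon=0$ through the fixed marginal $r=p$ also justifies $\alpha=1$ directly. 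The paper's Lagrangian derivation, stated for $\alpha\in[0,1)$, reaches that case only by continuity.
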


\begin{proof}
Let $d=|\Sigma|$ and write indices as $u\in\Sigma$.
Set
\begin{equation*}
    r \;=\; \mathbb{E}[q(G)] \in \Delta(\Sigma).
\end{equation*}
We can rewrite~\cref{eq:aar_optimization_problem} as
\begin{align}
\max_{q,r}\quad & \mathbb{E}\big[G\cdot q(G)\big]\\
\text{s.t.}\quad 
& \mathbb{E}[q(G)] = r,\label{eq:aar_primal_rewrite_r}\\
& \mathrm{KL}(r\|p) \le \varepsilon,\label{eq:aar_primal_rewrite_kl}\\
& q(G)\in\Delta(\Sigma)\quad\text{for almost every $G$.} \label{eq:aar_primal_rewrite_simplex}
\end{align}
The feasible set is convex and admits a Slater point, namely $q(G)\equiv p$ (so $r=p$ and $\mathrm{KL}(p\|p)=0<\varepsilon$), hence strong duality and KKT conditions hold.

\paragraph{Shape of an optimizer.}
Introduce $\lambda\in\mathbb{R}^d$ as dual variable for~\cref{eq:aar_primal_rewrite_r} and $\mu\ge 0$ for~\cref{eq:aar_primal_rewrite_kl}. The Lagrangian is
\begin{equation*}
\mathcal{L}(q,r,\lambda,\mu)
=\mathbb{E}\big[G\cdot q(G)\big]
+\lambda\cdot(\mathbb{E}[q(G)]-r)
-\mu\big(\mathrm{KL}(r\|p)-\varepsilon\big),
\end{equation*}
which we rewrite as
\begin{equation}\label{eq:aar_lagrangian_rewrite}
\mathcal{L}(q,r,\lambda,\mu)
= \mathbb{E}\big[(G+\lambda)\cdot q(G)\big]
- \lambda\cdot r
- \mu\big(\mathrm{KL}(r\|p)-\varepsilon\big).
\end{equation}
For fixed $(\lambda,\mu,r)$, maximizing~\cref{eq:aar_lagrangian_rewrite} over $q$ subject to~\cref{eq:aar_primal_rewrite_simplex} separates over $G$:
\begin{equation}\label{eq:aar_inner_sup}
\sup_{q}\mathcal{L}(q,r,\lambda,\mu)
= - \lambda\cdot r - \mu\big(\mathrm{KL}(r\|p)-\varepsilon\big)
+ \mathbb{E}\Big[\sup_{q(G)\in\Delta(\Sigma)} (G+\lambda)\cdot q(G)\Big].
\end{equation}
For any $g\in\mathbb{G}$,
\begin{equation*}
\sup_{q\in\Delta(\Sigma)} (g+\lambda)\cdot q
= \max_{v\in\Sigma} (g_v+\lambda_v),
\end{equation*}
and the supremum is attained by any $q$ supported on the maximizers.

Let $(\lambda^*,\mu^*)$ be an optimal dual pair and define
\begin{equation*}
    M(G,\lambda^*) \;=\; \arg\max_{v\in\Sigma}\big(G_v+\lambda^*_v\big).
\end{equation*}
By strong duality, there exists a primal optimizer $q^*$ that attains the supremum in~\cref{eq:aar_inner_sup} for $(\lambda^*,\mu^*)$. Thus, for almost every $G$, $q^*(G)$ is supported on $M(G,\lambda^*)$.

Choose a measurable tie-breaking function $T:\mathbb{G}\times\Sigma\to[0,1]$ such that
\[
T_u(G)=0\quad\text{if }u\notin M(G,\lambda^*), 
\qquad
\sum_{u\in M(G,\lambda^*)} T_u(G)=1,
\]
and, whenever $M(G,\lambda^*)=\{u\}$ is a singleton, we set $T_u(G)=1$. Then, with $\tau=\lambda^*$,
\begin{equation*}
    q^*(G)_u \;=\;
    \begin{cases}
        1 & \text{if } \{u\} = M(G,\tau),\\
        T_u(G) & \text{if } u\in M(G,\tau)\text{ and }|M(G,\tau)|>1,\\
        0 & \text{otherwise},
    \end{cases}
\end{equation*}
which is exactly the form in~\cref{eq:aar_argmax_discrete}.

Let $q^0(G)_u \propto \mathbbm{1}\Big\{u\in\arg\max_{v\in\Sigma} G_v\Big\}$ a solution of the unconstrained problem, and $r^0=\mathbb{E}[q^0(G)]$.
If $\KL(r^0\|p)\le\varepsilon$, then $q^0$ is feasible for~\cref{eq:aar_optimization_problem} and, being optimal for the unconstrained problem, is also optimal for the constrained one; we may take $\tau=0$.
Otherwise $\KL(r^0\|p)>\varepsilon$, so the KL constraint is active at any optimum by KKT/complementary slackness. Thus there exists an optimal pair $(q^*,r^*)$ with $\KL(r^*\|p)=\varepsilon$, and by the previous paragraph $q^*$ has the form~\cref{eq:aar_argmax_discrete} for some $\tau$ and measurable tie-breaking rule $T$. This proves the first part of the theorem.

\medskip

\paragraph{Gumbel case.}
Assume now that $G$ has i.i.d.\ standard Gumbel coordinates. For any $\tau\in\mathbb{R}^d$, define
\begin{equation}\label{eq:aar_q_tau}
q^\tau(G)_u \;=\; \mathbbm{1}\Big\{u=\arg\max_{v\in\Sigma}\big(G_v+\tau_v\big)\Big\},\qquad
r(\tau)_u \;=\; \mathbb{E}[q^\tau(G)_u].
\end{equation}
The Gumbel-max trick gives
\begin{equation}\label{eq:aar_softmax}
r(\tau)_u \;=\; \frac{e^{\tau_u}}{\sum_{v\in\Sigma} e^{\tau_v}},\qquad u\in\Sigma.
\end{equation}

Fix $\tau$ and write $Z_v=\tau_v+G_v$. For each realization of $G$,
\begin{equation*}
\max_{v} Z_v
= \sum_{u\in\Sigma} Z_u\, q^\tau(G)_u
= G\cdot q^\tau(G) + \tau\cdot q^\tau(G).
\end{equation*}
Taking expectations and denoting $r=r(\tau)$,
\begin{equation}\label{eq:aar_Obj_max_relation}
\mathbb{E}\big[G\cdot q^\tau(G)\big]
= \mathbb{E}\big[\max_{v} Z_v\big] - \tau\cdot r.
\end{equation}
From the computation of the distribution of $Z^*=\max_v Z_v$,
\begin{equation}\label{eq:aar_max_expectation}
\mathbb{E}\big[\max_{v} Z_v\big]
= \gamma + \log\Big(\sum_{v\in\Sigma} e^{\tau_v}\Big),
\end{equation}
where $\gamma$ is the Euler-Mascheroni constant. Combining~\cref{eq:aar_Obj_max_relation} and~\cref{eq:aar_max_expectation},
\begin{equation}\label{eq:aar_obj_tau_form}
\mathbb{E}\big[G\cdot q^\tau(G)\big]
= \gamma + \log\Big(\sum_{v} e^{\tau_v}\Big) - \tau\cdot r.
\end{equation}
Yet, we have 
\begin{equation} \label{eq:aar_tau_dot_r}
    \tau\cdot r
= \sum_{u} r_u \log r_u + \log\Big(\sum_{v\in\Sigma} e^{\tau_v}\Big).
\end{equation}
Substituting~\cref{eq:aar_tau_dot_r} into~\cref{eq:aar_obj_tau_form} yields
\begin{equation*}
\mathbb{E}\big[G\cdot q^\tau(G)\big]
= \gamma - \sum_{u\in\Sigma} r_u \log r_u
= \gamma + H(r),
\end{equation*}
where $H(r):=-\sum_{u}r_u\log r_u$ is the Shannon entropy. Thus, in the Gumbel case, optimizing~\cref{eq:aar_optimization_problem} over $q$ is equivalent to optimizing over $r=\mathbb{E}[q(G)]$:
\begin{align}
\max_{r\in\Delta(\Sigma)}\quad & H(r)\label{eq:aar_entropy_problem_obj}\\
\text{s.t.}\quad & \mathrm{KL}(r\|p)\le\varepsilon.\label{eq:aar_entropy_problem_constr}
\end{align}

Introduce multipliers $\lambda\ge 0$ for~\cref{eq:aar_entropy_problem_constr} and $\beta\in\mathbb{R}$ for $\sum_{u}r_u=1$. The Lagrangian is
\begin{equation}\label{eq:aar_entropy_L}
\mathcal{J}(r,\lambda,\beta)
= -\sum_{u} r_u\log r_u
- \lambda\left(\sum_{u}r_u\log\frac{r_u}{p_u}-\varepsilon\right)
+ \beta\left(\sum_{u}r_u-1\right).
\end{equation}
Assuming $r_u>0$, stationarity in $r_u$ gives
\begin{equation*}
-(\log r_u+1) - \lambda\big(\log(r_u/p_u)+1\big) + \beta = 0,\qquad u\in\Sigma.
\end{equation*}
Thus
\begin{equation*}
r_u = e^{C} p_u^{\alpha},
\end{equation*}
where
\begin{equation}\label{eq:aar_alpha_C_def}
\alpha := \frac{\lambda}{1+\lambda}\in[0,1),\qquad
C := \frac{\beta-(1+\lambda)}{1+\lambda}.
\end{equation}
Normalization $\sum_{u}r_u=1$ fixes $e^{C}$, and we obtain the family
\begin{equation}\label{eq:aar_r_alpha}
r_u^*(\alpha)
= \frac{p_u^\alpha}{\sum_{v\in\Sigma}p_v^\alpha},\qquad \alpha\in[0,1).
\end{equation}

Note that $r^*(0)$ is the uniform distribution on $\Sigma$, and $r^*(1)=p$. The map $\alpha\mapsto \mathrm{KL}(r^*(\alpha)\|p)$ is continuous, with
\begin{equation*}
\mathrm{KL}\big(r^*(0)\|p\big)=\mathrm{KL}(\mathrm{Unif}\|p),\qquad
\mathrm{KL}\big(r^*(1)\|p\big)=0.
\end{equation*}
Hence:
\begin{itemize}
\item If $\varepsilon\ge \mathrm{KL}(\mathrm{Unif}\|p)$, the KL constraint is slack and the unconstrained maximizer of $H(r)$ is the uniform distribution $r^*(0)$, corresponding to $\alpha=0$.
\item If $0\le\varepsilon\le\mathrm{KL}(\mathrm{Unif}\|p)$, there exists (by continuity) some $\alpha\in[0,1]$ such that
\begin{equation*}
\mathrm{KL}\big(r^*(\alpha)\|p\big)=\varepsilon,
\end{equation*}
and the constraint is active at the optimum.
\end{itemize}

For any optimal solution $r^*$ of~\cref{eq:aar_entropy_problem_obj}--\cref{eq:aar_entropy_problem_constr}, the softmax bijection~\cref{eq:aar_softmax} allows us to realize $r^*$ as $r(\tau)$ for some $\tau$. From~\cref{eq:aar_r_alpha}, we have
\begin{equation*}
\log r_u^*(\alpha) = \alpha\log p_u + \text{const},\qquad u\in\Sigma.
\end{equation*}
Hence, we see that $\tau_u$ can be chosen of the form
\begin{equation*}
\tau_u = \alpha\log p_u + c,\qquad u\in\Sigma,
\end{equation*}
for some constant $c\in\mathbb{R}$. Adding a constant to all coordinates of $\tau$ does not change the argmax in~\cref{eq:aar_q_tau}, so without loss of generality we may take $c=0$, that is,
\begin{equation*}
\tau_u = \alpha\log p_u,\qquad u\in\Sigma.
\end{equation*}
Combining this with~\cref{eq:aar_q_tau} yields
\begin{equation*}
q^*(G)_u
= \mathbbm{1}\Big\{u=\arg\max_{v\in\Sigma}\big(G_v+\alpha\log p_v\big)\Big\},
\end{equation*}
for some $\alpha\in[0,1]$ chosen as above.
This completes the proof.
\end{proof}

\begin{theorem}\label{theorem:aar_watermarks_penalized}
Let $p\in\Delta(\Sigma)$ have full support, let $G$ be a random vector with support $\mathbb{G}\subseteq\mathbb{R}^{|\Sigma|}$ and i.i.d.\ coordinates from a Gumbel distribution, and let $\delta > 0$. Consider
\begin{align*}
q^* \in \argmax_{q:\mathbb{G}\to\Delta(\Sigma)} \quad & \mathbb{E}\big[G\cdot q(G)\big] - \frac{1}{\delta}\mathrm{KL}\big(\mathbb{E}[q(G)]\|p\big),
\end{align*}
where $\mathrm{KL}(r\|p)=\sum_{u\in\Sigma} r_u\log\frac{r_u}{p_u}$.
Then, we have
\begin{equation*}
\forall u\in\Sigma,\qquad q^*(G)_u \;=\; \mathbbm{1}\Big\{u\in\arg\max_{v\in\Sigma}\big(G_v+\frac{1}{1+\delta} \log p_v\big)\Big\}.
\end{equation*}
\end{theorem}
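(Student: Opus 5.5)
The plan is to reuse the structure of the proof of \cref{theorem:aar_watermarks} and reduce the penalized problem to an optimization over the marginal $r=\mathbb{E}[q(G)]$. Write the objective as $\mathbb{E}[G\cdot q(G)] - \frac{1}{\delta}\KL(r\|p)$. The first step is to fix $r$ and ask which $q$ with $\mathbb{E}[q(G)]=r$ maximizes $\mathbb{E}[G\cdot q(G)]$; the penalty term then does not depend on the choice of $q$.

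\textbf{Step 1: inner problem.} Dualize the linear constraint $\mathbb{E}[q(G)]=r$ with a multiplier $\tau\in\mathbb{R}^{|\Sigma|}$, exactly as in the first part of the proof of \cref{theorem:aar_watermarks}. This shows that the inner optimum is attained by a deterministic argmax rule $q^\tau(G)_u=\mathbbm{1}\{u=\argmax_v(G_v+\tau_v)\}$. Ties have probability zero because the Gumbel law is continuous. By the Gumbel-max trick (\cref{eq:aar_softmax}), every full-support $r$ is realized as $r=\mathrm{softmax}(\tau)$. The computation leading to \cref{eq:aar_obj_tau_form}–\cref{eq:aar_tau_dot_r} then gives the inner optimal value $\gamma+H(r)$.

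To make this rigorous I would add a direct argument that the argmax rule is optimal for fixed $r$. For any $q$ with $\mathbb{E}[q(G)]=r$,
\[
\mathbb{E}[G\cdot q(G)] = \mathbb{E}[(G+\tau)\cdot q(G)]-\tau\cdot r \le \mathbb{E}[\max_v(G_v+\tau_v)]-\tau\cdot r,
\]
and equality holds for $q^\tau$. Boundary $r$ (without full support) can be handled by continuity or excluded, since $H$ and $\KL$ extend continuously to the boundary.

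\textbf{Step 2: outer problem.} The problem now reduces to $\max_{r\in\Delta(\Sigma)} H(r)-\frac{1}{\delta}\KL(r\|p)$. This objective is strictly concave, so any stationary point is the unique maximizer. Using a multiplier for $\sum_u r_u=1$, stationarity reads
\[
-\log r_u - 1 - \tfrac{1}{\delta}\bigl(\log(r_u/p_u)+1\bigr)+\beta=0.
\]
This gives $(1+\tfrac1\delta)\log r_u=\tfrac1\delta\log p_u+\text{const}$, that is, $r_u\propto p_u^{\alpha}$ with $\alpha=\frac{1/\delta}{1+1/\delta}=\frac{1}{1+\delta}$. This $r$ has full support, which justifies the interior stationarity condition.

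\textbf{Step 3: recovering $q^*$.} Since $\log r_u=\alpha\log p_u+c$, we may take $\tau_u=\alpha\log p_u$, dropping the constant because it does not change the argmax. This yields $q^*(G)_u=\mathbbm{1}\{u\in\argmax_v(G_v+\frac{1}{1+\delta}\log p_v)\}$.

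I expect the main obstacle to be Step 1: justifying that the two-level decomposition $\sup_q=\sup_r\sup_{q:\mathbb{E}q=r}$ is attained by argmax rules. This needs both the weak-duality bound above and surjectivity of $\tau\mapsto\mathrm{softmax}(\tau)$ onto the interior of the simplex. Steps 2 and 3 are routine calculus.
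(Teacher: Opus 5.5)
Your proposal is correct and follows essentially the paper's route: restrict to the marginal $r=\mathbb{E}[q(G)]$, use the Gumbel identity $\mathbb{E}[G\cdot q^\tau(G)]=\gamma+H(r)$, solve the penalized entropy problem with the multiplier fixed at $\lambda=1/\delta$ to get $r\propto p^{1/(1+\delta)}$, and read off $\tau=\frac{1}{1+\delta}\log p$. Your explicit bound $\mathbb{E}[G\cdot q(G)]\le \mathbb{E}[\max_v(G_v+\tau_v)]-\tau\cdot r$ for fixed $r$ is a useful addition. It justifies the reduction from $q$ to $r$ directly for the penalized objective, whereas the paper instead leans on the argmax-shape result it proved for the constrained problem.
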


\begin{proof}
    The proof is similar to that of \cref{theorem:aar_watermarks}.
    In \cref{eq:aar_entropy_L}, instead of being a Lagrangian variable, $\lambda$ is explicitly given as $\lambda = \frac{1}{\delta}$.
    This then gives, in \cref{eq:aar_alpha_C_def}, $\alpha = \frac{1}{1 + \delta}$.
    Thus, we have that,
    \begin{equation*}
        \forall u\in\Sigma,\qquad q^*(G)_u \;=\; \mathbbm{1}\Big\{u\in\arg\max_{v\in\Sigma}\big(G_v+\frac{1}{1+\delta} \log p_v\big)\Big\},
    \end{equation*}
    which completes the proof.
\end{proof}

\begin{theorem}\label{theorem:distortion_free_argmax}
Let $p\in\Delta(\Sigma)$, let $G$ be a random vector with support $\mathbb{G}\subseteq\mathbb{R}^{|\Sigma|}$ and i.i.d.\ coordinates from a continuous law $\mathcal{L}$ with finite first-order moment, and let $\varepsilon\ge0$. Consider
\begin{align*}
q^* \in \argmax_{q:\mathbb{G}\to\Delta(\Sigma)} \quad & \mathbb{E}\big[G\cdot q(G)\big] \\
\text{subject to}\quad & D\big(\mathbb{E}[q(G)], p \big)\le \varepsilon,
\end{align*}
where $D: \Delta(\Sigma) \times \Delta(\Sigma) \rightarrow \mathbb{R}$.
Then, if the problem is feasible, there exists an optimal solution with support of size $1$.
\end{theorem}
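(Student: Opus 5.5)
Our plan is to reduce the statement to a semi-discrete optimal transport problem. The key observation is that the constraint depends on $q$ only through its mean $r:=\mathbb{E}[q(G)]$. We fix an optimal solution $q^*$ and set $r^*:=\mathbb{E}[q^*(G)]$; the case where no optimum is attained is handled by applying the same argument to any feasible $q$, as explained below. Any $q$ with $\mathbb{E}[q(G)]=r^*$ is again feasible. Hence it suffices to show that the restricted problem
\begin{equation*}
\max_{q:\mathbb{G}\to\Delta(\Sigma)}\ \mathbb{E}[G\cdot q(G)] \quad \text{s.t.}\quad \mathbb{E}[q(G)]=r^*
\end{equation*}
admits an optimizer that is a Dirac mass for almost every $G$. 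Its value is at least that of $q^*$, so such an optimizer is also optimal for the original problem. We then modify it on a null set so that it is Dirac everywhere.

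The restricted problem is exactly an optimal transport problem. A map $q$ defines a coupling $\pi(dg,u)=q(g)_u\,P_G(dg)$ between the law of $G$ and the discrete measure $r^*$ on $\Sigma$, and conversely every such coupling disintegrates into a map $q$. The objective is $\int g_u\,d\pi$, which is a reward $c(g,u)=g_u$. This reward is integrable because the coordinates have a finite first moment, so $|c(g,u)|\le\sum_v|g_v|$ is integrable. We first discard tokens with $r^*_u=0$, since the marginal constraint forces $q(G)_u=0$ almost surely for them, and we keep only the support $S$ of $r^*$. Kantorovich duality in its semi-discrete form then gives the dual problem
\begin{equation*}
\inf_{\lambda\in\mathbb{R}^{S}}\ \Phi(\lambda),\qquad \Phi(\lambda):=\mathbb{E}\Big[\max_{v\in S}(G_v+\lambda_v)\Big]-\lambda\cdot r^*.
\end{equation*}
This is the same Lagrangian structure already used in the proof of \cref{theorem:aar_watermarks}. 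Any dual minimizer $\lambda^*$ certifies that a primal optimizer is supported, for almost every $G$, on $M(G,\lambda^*)=\arg\max_{v\in S}(G_v+\lambda^*_v)$.

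The continuity assumption enters at this point. For $u\neq v$, the variable $G_u-G_v$ has an atomless law, because $G_u$ is atomless and independent of $G_v$. Therefore $P(G_u+\lambda_u=G_v+\lambda_v)=0$ for every fixed $\lambda$. It follows that $M(G,\lambda^*)$ is almost surely a singleton and the tie-breaking rule $T$ of \cref{theorem:aar_watermarks} is irrelevant. The candidate $q(G)=\delta_{\arg\max_v(G_v+\lambda^*_v)}$ is then deterministic. We verify optimality directly, without relying on an abstract duality theorem. $\Phi$ is convex and, because ties have probability zero, differentiable with
\begin{equation*}
\partial_u\Phi(\lambda)=P\big(u=\arg\max_{v}(G_v+\lambda_v)\big)-r^*_u .
\end{equation*}
Hence a minimizer $\lambda^*$ satisfies $\mathbb{E}[q(G)]=r^*$, so the candidate is feasible. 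For any feasible $q'$ we have pointwise
\begin{equation*}
G\cdot q'(G)\le\max_v(G_v+\lambda^*_v)-\lambda^*\cdot q'(G).
\end{equation*}
Taking expectations gives $\mathbb{E}[G\cdot q'(G)]\le\Phi(\lambda^*)$, with equality for the argmax map. This weak-duality argument is elementary.

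The main obstacle is the \emph{existence of a minimizer} $\lambda^*$ of $\Phi$. We plan to argue as follows. First, $\Phi$ is invariant under $\lambda\mapsto\lambda+c\mathbf{1}$, so we restrict to $\sum_u\lambda_u=0$. On that hyperplane we show coercivity. If $\lambda_u\to-\infty$ relative to the other coordinates, the term $-\lambda_u r^*_u$ blows up because $r^*_u>0$ on $S$. Meanwhile, the expected maximum grows at most like $\max_v\lambda_v$ plus an integrable term. Combined with continuity, this coercivity gives a minimizer, and the stationarity condition above then yields the claim.

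If the original problem does not attain its optimum, the same construction still applies to any feasible $q$. It produces a Dirac $q$ with the same mean and a weakly larger objective. This reduces the problem to the finite-dimensional family of argmax maps indexed by $\lambda$, which should make attainment easier to settle. At minimum, the reduction shows that the supremum is unchanged by restricting to Dirac solutions.
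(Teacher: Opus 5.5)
Your proof is correct, but it takes a different route from the paper. The paper starts from an arbitrary optimal $q$ and reads it as a randomized rule with $\mathbb{P}[X=u\mid G=g]=q(g)_u$. It then applies the Dvoretzky--Wald--Wolfowitz purification theorem, using that the law of $G$ is atomless, to the integrable functions $(g,u)\mapsto g_u$ and $(g,u)\mapsto\mathbf{1}\{u=v\}$. This yields a measurable $f:\mathbb{G}\to\Sigma$ such that $e_{f(G)}$ has the same objective and the same mean as $q$, hence the same value of $D$.

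You instead freeze the mean $r^*$ and solve the resulting semi-discrete optimal transport problem by hand. Independence plus atomlessness rules out ties, so the convex dual $\Phi$ is differentiable. A minimizer $\lambda^*$ then produces an argmax rule with marginals exactly $r^*$, and the pointwise weak-duality inequality shows that this rule dominates every rule with mean $r^*$.

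The paper's argument is a two-line reduction to a classical theorem. It needs only that the joint law of $G$ be atomless: it would still apply to, e.g., $G=(Z,Z)$, where your no-ties step fails. However, it is nonconstructive. Yours is elementary and constructive, and it gives more. The deterministic optimizer is explicitly $\arg\max_v(G_v+\lambda^*_v)$, which is the structure the paper derives separately in \cref{theorem:aar_watermarks} and \cref{theorem:soft_ppl_watermark}. Moreover, since the argmax is a.s.\ unique, it is the a.s.\ unique optimizer among rules with mean $r^*$.

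Two small remarks. First, your coercivity sketch bounds the expected maximum in the wrong direction. What is needed is the lower bound $\mathbb{E}[\max_{v\in S}(G_v+\lambda_v)]\ge\max_{v\in S}\lambda_v+\mathbb{E}[G_1]$. This gives $\Phi(\lambda)\ge\mathbb{E}[G_1]+\min_{u\in S}r^*_u\sum_{u\in S}(\max_{v\in S}\lambda_v-\lambda_u)$, and the right-hand side tends to $+\infty$ on $\{\sum_u\lambda_u=0\}$.

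Second, your caution about attainment is justified. For an arbitrary $D$ the feasible set of means need not be closed, so the supremum may not be attained. The paper's proof also silently starts from an optimal $q$, so both arguments really establish the statement under the assumption that an optimum exists. What your reduction shows unconditionally is that restricting to Dirac rules does not change the supremum.
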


\begin{proof}
    Let $q$ be any optimal solution. 
    Let $X\in\Sigma$ be a random variable corresponding to the sampled token, and define
    \begin{equation*}
        \forall g \in \mathbb{G},\ \forall u \in \Sigma,\quad \mathbb{P}[X = u \mid G = g] = q_u(g).
    \end{equation*}
    Then the objective rewrites as
    \begin{equation*}
        \mathbb{E}[G \cdot q(G)] = \mathbb{E}[G_X],
    \end{equation*}
    and the average output distribution is the marginal of $X$,
    \begin{equation*}
        r(q)_u \coloneqq \mathbb{E}[q_u(G)] = \mathbb{P}[X = u],\quad u\in\Sigma.
    \end{equation*}
    The law of $G$ is atomless since its coordinates are i.i.d.\ from a continuous law. By the Dvoretsky-Wald-Wolfowitz purification theorem~\citep{DWW_purification_theorem} applied to $g\mapsto q(g)$ and the finite family of integrable functions
    \begin{equation*}
        (g,u)\mapsto g_u \quad\text{and}\quad (g,u)\mapsto \mathbf{1}\{u = v\},\ v\in\Sigma,
    \end{equation*}
    there exists a measurable function $f:\mathbb{G}\to\Sigma$ such that, with $X' = f(G)$,
    \begin{equation*}
        \mathbb{E}[G_{X'}] = \mathbb{E}[G_X]
        \quad\text{and}\quad
        \mathbb{P}[X' = u] = \mathbb{P}[X = u],\ \forall u\in\Sigma.
    \end{equation*}
    Define $q':\mathbb{G}\to\Delta(\Sigma)$ by $q'(g) = e_{f(g)}$, so that $q'(G)$ has support of size $1$ almost surely. Then
    \begin{equation*}
        \mathbb{E}[G\cdot q'(G)] = \mathbb{E}[G_{X'}] = \mathbb{E}[G_X] = \mathbb{E}[G\cdot q(G)],
    \end{equation*}
    and
    \begin{equation*}
        \mathbb{E}[q'(G)]_u = \mathbb{P}[X'=u] = \mathbb{P}[X=u] = \mathbb{E}[q(G)]_u,\quad u\in\Sigma.
    \end{equation*}
    Hence $D(\mathbb{E}[q'(G)],p) = D(\mathbb{E}[q(G)],p)\le \varepsilon$, so $q'$ is feasible and optimal. This proves the existence of an optimal solution with support of size $1$.
\end{proof}

\subsection{Chi2 Watermarks}

Here we formalize and then prove the various results established for the one-layer SynthID and $\chi^2$ watermark.  
In particular, we first establish the general watermark sampling mechanism for the $\chi^2$ watermark in \cref{theorem:chi2_watermarks}, and show in \cref{theorem:chi2_watermarks_penalized} how it reduces to the 1-layer SynthID tournament in some specific cases.

\begin{theorem}\label{theorem:chi2_watermarks}
Let $p\in\Delta(\Sigma)$ have full support, let $g\in\mathbb{G}$, and let $\varepsilon>0$. Consider
\begin{align*}
q^*(g) \in \argmax_{q(g)\in\Delta(\Sigma)} \quad & g\cdot q(g) \\
\text{subject to}\quad & \sum_{u \in \Sigma} \frac{(q(g)_u - p_u)^2}{p_u} \le \varepsilon.
\end{align*}

Then there exists $\delta, \mu \in \mathbb{R}$ such that,
\begin{equation*}
    \forall u \in \Sigma, q^*(g)_u = p_u [1 + \delta(g_u + \mu)]_+
\end{equation*}
where $[x]_+ \coloneq \max(0,x)$, $\mu$ is chosen such that $\sum_{u \in \Sigma} q^*(g)_u = 1$, and $\delta$ is chosen such that $\sum_{u \in \Sigma} \frac{(q^*(g)_u - p_u)^2}{p_u} = \varepsilon$ if feasible, else the greedy solution is feasible.
\end{theorem}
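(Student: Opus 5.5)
We will prove \cref{theorem:chi2_watermarks} by following the KKT argument used for \cref{theorem:red_green_watermarks}. The one new ingredient is that the $\chi^2$ penalty stays finite when $q(g)_u = 0$, so nonnegativity constraints can be active. Fix $g$. The feasible set $\{q\in\Delta(\Sigma): \sum_u (q_u-p_u)^2/p_u\le\varepsilon\}$ is compact and convex, and the objective $g\cdot q$ is linear, so a maximizer exists. The point $q=p$ is strictly feasible since $\varepsilon>0$. Hence Slater's condition holds, and the KKT conditions are necessary and sufficient.

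\textbf{Degenerate cases.} First, suppose the unconstrained (greedy) maximizer over the simplex is feasible. This is any $q$ supported on $M(g)=\argmax_u g_u$, which includes the constant-$g$ case. Then that $q$ is optimal, and this is the ``else'' branch of the statement. Otherwise the constraint must be active at every optimum. If it were inactive at an optimum $q^*$, then $q^*$ would be a local, hence global, maximizer of a linear function over the simplex. It would therefore be greedy, but no greedy point is feasible in this case.

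\textbf{Main case.} Introduce $\lambda\ge0$ for the $\chi^2$ constraint, $\nu\in\mathbb{R}$ for $\sum_u q_u=1$, and $\eta_u\ge0$ for $q_u\ge0$. Stationarity gives
\begin{equation*}
g_u - \frac{2\lambda(q_u-p_u)}{p_u} - \nu + \eta_u = 0,
\end{equation*}
together with complementary slackness $\eta_u q_u=0$.

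We must have $\lambda>0$. If $\lambda=0$, then $g_u=\nu-\eta_u\le\nu$ with equality wherever $q_u>0$, so $q$ would be supported on $M(g)$, i.e. greedy. That contradicts the case assumption.

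Set $\delta=1/(2\lambda)$ and $\mu=-\nu$. On the support, $q_u=p_u(1+\delta(g_u+\mu))$, and this quantity must be positive. Off the support, $q_u=0$ and $\eta_u\ge0$ give $1+\delta(g_u+\mu)\le0$. Together these two facts are exactly $q^*(g)_u=p_u[1+\delta(g_u+\mu)]_+$. Normalization fixes $\mu$ given $\delta$. Complementary slackness with $\lambda>0$ forces the $\chi^2$ value to equal $\varepsilon$. This establishes the shape claimed in the statement.

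\textbf{Existence of $(\delta,\mu)$.} We also want a pair $(\delta,\mu)$ that actually attains $\chi^2=\varepsilon$. For fixed $\delta\ge0$, the map $\mu\mapsto\sum_u p_u[1+\delta(g_u+\mu)]_+$ is continuous and nondecreasing, strictly increasing where positive, and runs from $0$ to $\infty$. So a unique normalizing $\mu(\delta)$ exists; this is what \cref{alg:find_mu} computes. Then $\phi(\delta):=\chi^2(q_\delta,p)$ is continuous with $\phi(0)=0$. As $\delta\to\infty$, $q_\delta$ tends to the greedy face: $q_\delta$ is proportional to $p_u(g_u-\max_v g_v+o(1))_+$ plus a vanishing term, and it converges to $p$ restricted to $M(g)$ and renormalized.

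If the greedy feasibility check was done with this limiting point, the intermediate value theorem yields $\delta$ with $\phi(\delta)=\varepsilon$. Since the KKT conditions are sufficient here, that $q_\delta$ is optimal.

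\textbf{Main obstacle.} The delicate step is this limit and the matching of the two cases. One must verify that $\chi^2$ of the projected greedy point is exactly the threshold separating ``greedy feasible'' from ``constraint active''. Concretely, the projected point should minimize $\chi^2$ over distributions supported on $M(g)$; this follows from a short Cauchy--Schwarz computation. Monotonicity of $\phi$ is not needed, because existence of some $\delta$ suffices for the statement.
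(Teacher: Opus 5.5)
Your proposal is correct and follows the paper's KKT route: the same multipliers for the $\chi^2$ constraint, the simplex and nonnegativity, with complementary slackness producing the positive part and $\delta = 1/(2\lambda)$. Your version is more complete than the paper's. You justify $\lambda>0$, whereas the paper divides by $2\lambda$ without comment. You also prove that some $\delta$ with $\chi^2=\varepsilon$ exists, using the intermediate value theorem and the threshold $1/\sum_{u\in M(g)}p_u - 1$ of the renormalized greedy point, a step the paper's proof omits.
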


\begin{proof}
    
    We introduce a Lagrange multiplier $\lambda \ge 0$ for the inequality constraint $\sum_{u \in \Sigma} \frac{(q_u - p_u)^2}{p_u} = \varepsilon$, a multiplier $\mu \in \mathbb{R}$ for the simplex constraint $\sum_u q_u = 1$ and $\nu \in \mathbb{R}^{|\Sigma|}_+$ for the positivity constraints. 
    The Lagrangian is:
    \begin{equation*}
        \mathcal{L}(q, \lambda, \mu) = -\sum_{u\in\Sigma} q_u g_u + \lambda \left( \sum_{u\in\Sigma} \frac{(q_u - p_u)^2}{p_u} - \varepsilon \right) - \mu \left( \sum_{u\in\Sigma} q_u - 1 \right) - \sum_{u \in \Sigma} \nu_u q_u.
    \end{equation*}
    Note that we minimize $-\mathcal{L}$ to maximize the objective. Taking the derivative with respect to $q_u$ and setting it to zero:
    \begin{equation*}
        \frac{\partial \mathcal{L}}{\partial q_u} = -g_u + 2\lambda  \frac{(q_u - p_u)}{p_u} - \mu - \nu_u = 0.
    \end{equation*}
    Rearranging for $q_u$:
    \begin{equation*}
        q_u = p_u \left(1 + \frac{1}{2\lambda}(g_u + \mu + \nu_u) \right).
    \end{equation*}
    In particular, we have $\forall u \in \Sigma, - q_u \nu_u = 0$ with $\nu_u \ge 0$, which means that,
    \begin{equation*}
        q_u = p_u \left[1 + \frac{1}{2\lambda}(g_u + \mu) \right]_+.
    \end{equation*}
    By letting $\delta = \frac{1}{2\lambda}$ we find the shape from \cref{theorem:chi2_watermarks}.
\end{proof}

\begin{theorem}\label{theorem:chi2_watermarks_penalized}
Let $p\in\Delta(\Sigma)$ have full support, let $g\in\mathbb{G}$, and let $\delta>0$. Consider
\begin{align*}
q^*(g) \in \argmax_{q(g)\in\Delta(\Sigma)} \quad & g\cdot q(g) - \frac{1}{2\delta}\sum_{u \in \Sigma} \frac{(q(g)_u - p_u)^2}{p_u}.
\end{align*}

Then there exists $\mu \in \mathbb{R}$ such that,
\begin{equation*}
    \forall u \in \Sigma, q^*(g)_u = p_u [1 + \delta(g_u + \mu)]_+
\end{equation*}
where $\mu$ is chosen such that $\sum_{u \in \Sigma} q^*(g)_u = 1$. 

In particular, if $\delta \le \frac{1}{\max_{u\in \Sigma} g_u - \min_{u\in \Sigma} g_u}$, then we have
\begin{equation*}
    \forall u \in \Sigma, q^*(g)_u = p_u (1 + \delta(g_u - p \cdot g))
\end{equation*}
\end{theorem}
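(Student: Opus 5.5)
The problem is a strictly concave maximization over the simplex for fixed $g$, so I would solve it directly through its KKT conditions, exactly as in the proof of \cref{theorem:chi2_watermarks}, but with the multiplier fixed at $\lambda = \frac{1}{2\delta}$. Write $q = q(g)$. The objective $g\cdot q - \frac{1}{2\delta}\sum_u (q_u-p_u)^2/p_u$ is strictly concave because $p$ has full support. The feasible set $\Delta(\Sigma)$ is compact and convex and satisfies Slater's condition, since $p$ lies in its relative interior. Hence the maximizer exists, is unique, and is characterized by KKT.

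I would introduce $\mu\in\mathbb{R}$ for $\sum_u q_u = 1$ and $\nu_u\ge 0$ for $q_u\ge 0$. Stationarity gives
\[
g_u - \tfrac{1}{\delta}\,\frac{q_u-p_u}{p_u} + \mu + \nu_u = 0,
\qquad\text{i.e.}\qquad
q_u = p_u\bigl(1+\delta(g_u+\mu+\nu_u)\bigr).
\]
The sign convention of $\mu$ is free. Complementary slackness $\nu_u q_u = 0$ splits into two cases. If $q_u>0$, then $\nu_u=0$ and $q_u = p_u(1+\delta(g_u+\mu))$. If $q_u = 0$, then $1+\delta(g_u+\mu) = -\delta\nu_u \le 0$. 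In both cases $q_u = p_u[1+\delta(g_u+\mu)]_+$.

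Existence and uniqueness of $\mu$ follow from the map $\phi(\mu)=\sum_u p_u[1+\delta(g_u+\mu)]_+$. It is continuous and nondecreasing, equals $0$ for $\mu$ very negative, and tends to $+\infty$. It is strictly increasing wherever it is positive, so there is a unique $\mu$ with $\phi(\mu)=1$. This also justifies \cref{alg:find_mu}.

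For the particular case, I would guess $\mu = -p\cdot g$ and check that no truncation occurs. Then $1+\delta(g_u - p\cdot g) \ge 1 - \delta(\max_v g_v - \min_v g_v) \ge 0$, because $p\cdot g$ is a convex combination of the $g_v$ and so $g_u - p\cdot g \ge \min g - \max g$. The positive part is therefore inactive. Summing gives $\sum_u p_u(1+\delta(g_u - p\cdot g)) = 1 + \delta(p\cdot g - p\cdot g) = 1$, so the normalization holds. Together with $\nu\equiv 0$, which is valid since the stationarity equation is then satisfied, this is a KKT point and hence, by uniqueness, the optimizer.

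The main obstacle is getting the sign and scaling conventions consistent. The penalty $\frac{1}{2\delta}$ must yield exactly $\delta$ in front of $g_u+\mu$, and the $\mu$ defined by normalization must come out as $-p\cdot g$. Beyond that, the argument is routine.
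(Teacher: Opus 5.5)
Your proposal is correct and follows essentially the same route as the paper. You fix the multiplier at $\frac{1}{2\delta}$ in the KKT stationarity and complementary-slackness argument to obtain the truncated form, then verify the candidate $\mu=-p\cdot g$ is nonnegative via $p\cdot g\le\max_v g_v$; your extra justifications (strict concavity for uniqueness and sufficiency of KKT, and monotonicity of $\mu\mapsto\sum_u p_u[1+\delta(g_u+\mu)]_+$ for existence of $\mu$) fill in steps the paper leaves implicit.
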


\begin{proof}
    The derivation of the general form is identical to the first part of \cref{theorem:chi2_watermarks}, where $\delta$ is treated as a fixed parameter rather than a Lagrange multiplier.

    To prove the second part, assume that the non-negativity constraints are inactive, i.e., for all $u \in \Sigma$, $1 + \delta(g_u + \mu) \ge 0$. 
    Summing the probability mass constraint $\sum q_u = 1$ implies:
    \begin{equation*}
        \sum_{u \in \Sigma} p_u [1 + \delta(g_u + \mu)] = 1 \implies 1 + \delta(p \cdot g + \mu) = 1.
    \end{equation*}
    Since $\delta > 0$, this yields $\mu = - p \cdot g$. Substituting this back into the solution gives the candidate form $q^*(g)_u = p_u (1 + \delta(g_u - p \cdot g))$.

    We must strictly verify that this candidate satisfies the non-negativity assumption, i.e.,
    \begin{equation*}
        \forall u \in \Sigma, \quad 1 + \delta(g_u - p \cdot g) \ge 0 \iff \delta(p \cdot g - g_u) \le 1.
    \end{equation*}
    The tightest constraint occurs when $p \cdot g - g_u$ is maximized, which corresponds to $g_u = \min_{v} g_v$. Thus, we require $\delta \le \frac{1}{p \cdot g - \min_{v} g_v}$.
    
    Observing that $p \cdot g \le \max_{v} g_v$, we have:
    \begin{equation*}
         p \cdot g - \min_{v} g_v \le \max_{v} g_v - \min_{v} g_v.
    \end{equation*}
    Therefore, the condition $\delta \le \frac{1}{\max_{u} g_u - \min_{u} g_u}$ is sufficient to guarantee feasibility.
\end{proof}

\subsection{SynthID Watermark}

In this part, we extend our framework to capture the SynthID watermark.
In particular, we show that SynthID with $m-layer$ tournament corresponds to chaining $m$ optimization problems with $\chi^2$ constraints.

\begin{theorem}\label{theorem:synthid_text}
Let $p\in\Delta(\Sigma)$ have full support, let $m \in \mathbb{N}$, let $g^{(1)},\dots,g^{(m)}\in \{0,1\}^{|\Sigma|}$. 
Consider the chained optimization problems over $q^{(1)},\dots,q^{(m)} \in \Delta(\Sigma)$,
\begin{equation*} q^{(i)} \in \argmax_{q \in \Delta(\Sigma)} \quad g^{(i)}\cdot q - \frac{1}{2}\sum_{u \in \Sigma} \frac{(q_u - q^{(i-1)}_u)^2}{q^{(i-1)}_u}. 
\end{equation*}
with $q^{(0)} \coloneq p$. Then, we have that the family defined by,
\begin{equation*} 
    \forall i \in \{1,\dots,m\}, \forall u \in \Sigma, q^{(i)}_u = q^{(i-1)}_u (1 + (g^{(i)}_u - q^{(i-1)}_u \cdot g^{(i)}))
\end{equation*}
is an optimal solution.
\end{theorem}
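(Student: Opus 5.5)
The argument is an induction on $i$, and each step applies the second part of \cref{theorem:chi2_watermarks_penalized} with $p$ replaced by $q^{(i-1)}$, $g$ replaced by $g^{(i)}$, and $\delta=1$. The penalty $\frac{1}{2}\chi^2$ is exactly the $\frac{1}{2\delta}\chi^2$ penalty of that theorem with $\delta=1$.

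\textbf{Full support.} One hypothesis of \cref{theorem:chi2_watermarks_penalized} is that the reference distribution has full support. For $q^{(0)}=p$ this holds by assumption. For the inductive step, note that if $q^{(i-1)}$ has full support, then every factor $1+(g^{(i)}_u - q^{(i-1)}\cdot g^{(i)})$ is strictly positive. Indeed, $g^{(i)}_u\in\{0,1\}$, and since $q^{(i-1)}$ has full support we have $q^{(i-1)}\cdot g^{(i)}\in[0,1]$. So the factor lies in $[0,2]$, and it can equal $0$ only when $g^{(i)}_u=0$ and $q^{(i-1)}\cdot g^{(i)}=1$. That case would need all mass on tokens with score $1$, which contradicts full support unless $g^{(i)}\equiv 1$.

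\textbf{Degenerate scores.} When $g^{(i)}$ is constant, the update gives $q^{(i)}=q^{(i-1)}$. This is trivially optimal, because the linear term is constant on the simplex and the $\chi^2$ penalty is minimized at $q^{(i-1)}$. So the constant case needs no further work. In all other cases $q^{(i)}$ keeps full support.

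\textbf{Step condition.} The remaining check is that $\delta=1$ satisfies the sufficient condition $\delta\le 1/(\max_u g^{(i)}_u-\min_u g^{(i)}_u)$ from \cref{theorem:chi2_watermarks_penalized}. For non-constant binary $g^{(i)}$ the range is exactly $1$, so the condition holds with equality. The theorem then gives the closed-form optimizer $q^{(i)}_u=q^{(i-1)}_u(1+(g^{(i)}_u-q^{(i-1)}\cdot g^{(i)}))$. The notation $q^{(i-1)}_u\cdot g^{(i)}$ in the statement should be read as the inner product $q^{(i-1)}\cdot g^{(i)}$.

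\textbf{Main obstacle.} The delicate point is optimality at the boundary, where some coordinate may reach $0$. It is not enough to reuse the stationary-point calculation, which assumed interior coordinates. I would argue directly that the objective is strictly concave on the simplex, since $q^{(i-1)}$ has full support. The candidate satisfies the KKT conditions with multipliers $\nu_u=0$, and stationarity holds with $\mu=-q^{(i-1)}\cdot g^{(i)}$. For a concave problem with affine constraints, KKT is sufficient for optimality, so the candidate is the unique optimum even if some coordinates vanish.

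Induction on $i$ then concludes the proof.
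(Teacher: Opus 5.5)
Your proposal is correct and follows the paper's proof: the same induction, applying \cref{theorem:chi2_watermarks_penalized} at each step with $\delta = 1 \le 1/(1-0)$. Your extra checks fill in details the paper leaves implicit: propagation of full support, the constant-$g^{(i)}$ case where that bound is undefined, and concavity making the KKT point a global (indeed unique) optimum. Note that your full-support step already shows no coordinate of $q^{(i)}$ ever vanishes, so the boundary concern in your last paragraph does not actually arise.
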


\begin{proof}
    Simply proceed by induction. For each step $i$, \cref{theorem:chi2_watermarks_penalized} holds, since we have $\delta = 1 \le \frac{1}{1-0}$.
\end{proof}

\subsection{PPL Watermarks}
\label{app:proofs:ppl_watermarks}

Here, in \cref{theorem:soft_ppl_watermark}, we formalize and then prove the watermark sampling mechanism established for the soft PPL watermark.  
For the hard PPL watermark, the resulting constrained optimization problem is a linear programming problem and thus all results stated in \cref{sec:method_practice:new} are standard results from the linear programming literature.

\begin{theorem}\label{theorem:soft_ppl_watermark}
    Let $p\in\Delta(\Sigma)$ have full support, let $G$ be a random vector with support $\mathbb{G}\subseteq\mathbb{R}^{|\Sigma|}$ and i.i.d.\ coordinates from a continuous law $\mathcal{L}$, and let $\varepsilon\ge0$. Consider the optimization problem
    \begin{align*}
        q^* \in \argmax_{q:\mathbb{G}\to\Delta(\Sigma)} \quad & \mathbb{E}\big[G\cdot q(G)\big] \\
        \text{subject to}\quad & (p - \mathbb{E}_G[q(G)]) \cdot \log p  \le \varepsilon.
    \end{align*}
    Then, an optimal solution is given by
    \begin{equation*}
        \forall g \in \mathbb{G}, q^*(g) = \arg\max(g_{i}+\lambda \log p_{i}),
    \end{equation*}
    where $\lambda$ is chosen such that $(p - \mathbb{E}_G[q(G)]) \cdot \log p = \varepsilon$ if feasible, else $\lambda =0$.
\end{theorem}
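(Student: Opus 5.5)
The constraint $(p-\mathbb{E}_G[q(G)])\cdot\log p\le\varepsilon$ is \emph{linear} in $q$, so the problem is an infinite-dimensional linear program over $q:\mathbb{G}\to\Delta(\Sigma)$. I will dualize this single scalar constraint with a Lagrange multiplier, solve the inner problem pointwise in $g$, and then tune the multiplier to activate the constraint. This mirrors the ``shape of an optimizer'' step in the proof of \cref{theorem:aar_watermarks}, but is simpler: there is one scalar constraint and no auxiliary variable $r$.

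\textbf{Step 1 (unconstrained case).} Let $q^0(g)=e_{\arg\max_v g_v}$, which is well defined almost surely because the coordinates of $G$ are i.i.d.\ from a continuous law. If $q^0$ satisfies the constraint, it is optimal, since it maximizes $g\cdot q$ pointwise. This gives the ``else $\lambda=0$'' clause.

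\textbf{Step 2 (Lagrangian and pointwise maximizer).} Otherwise, for $\lambda\ge0$ write
\[
\mathcal{L}(q,\lambda)=\mathbb{E}\big[G\cdot q(G)\big]-\lambda\big((p-\mathbb{E}[q(G)])\cdot\log p-\varepsilon\big)=\mathbb{E}\big[(G+\lambda\log p)\cdot q(G)\big]+\lambda\big(\varepsilon-p\cdot\log p\big).
\]
For fixed $\lambda$, this is maximized pointwise by $q^\lambda(g)=e_{\arg\max_v(g_v+\lambda\log p_v)}$. Ties occur with probability zero, because $G_u-G_v$ has a continuous law. No purification argument is needed here, although \cref{theorem:distortion_free_argmax} would also guarantee a deterministic optimizer.

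\textbf{Step 3 (choosing $\lambda$ and sufficiency).} Define $h(\lambda)=\mathbb{E}[q^\lambda(G)]\cdot\log p$. I would show $h$ is nondecreasing: increasing $\lambda$ favors higher $\log p$, as in a standard monotone comparative statics or exchange argument. I would show $h$ is continuous via dominated convergence, using the almost-sure absence of ties. As $\lambda\to\infty$, $h(\lambda)\to\max_v\log p_v\ge p\cdot\log p$. Hence, whenever the constraint is violated at $\lambda=0$ and $\varepsilon\ge0$, the intermediate value theorem gives $\lambda^*>0$ with $(p-\mathbb{E}[q^{\lambda^*}])\cdot\log p=\varepsilon$.

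The following verification then closes the argument without invoking general strong duality. For any feasible $q$,
\[
\mathbb{E}[G\cdot q]\le\mathcal{L}(q,\lambda^*)\le\mathcal{L}(q^{\lambda^*},\lambda^*)=\mathbb{E}[G\cdot q^{\lambda^*}].
\]
The first inequality uses feasibility and $\lambda^*\ge0$, the second uses pointwise optimality from Step 2, and the equality uses complementary slackness.

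\textbf{Main obstacle.} The hard step is the continuity of $h$. The map $\lambda\mapsto q^\lambda(g)$ is discontinuous exactly at tie values of $\lambda$, so continuity must come from showing that the set of $g$ with a tie at a given $\lambda$ has measure zero. This is where the continuity of $\mathcal{L}$ is essential; for atomic laws one would instead need a randomized tie-breaking rule, as in \cref{theorem:aar_watermarks}. A secondary point is the monotonicity of $h$. I would prove it by comparing $\lambda_1<\lambda_2$: optimality of each argmax gives
\[
(\lambda_2-\lambda_1)\big(\log p_{i_2}-\log p_{i_1}\big)\ge0
\]
pointwise, where $i_1,i_2$ are the argmaxes at $\lambda_1,\lambda_2$.
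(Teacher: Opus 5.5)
Your proof is correct and shares the paper's core idea: dualize the single linear constraint, note that the Lagrangian is maximized pointwise by $e_{\arg\max_v(g_v+\lambda\log p_v)}$, and fix $\lambda$ by complementary slackness. The execution differs substantively, though.

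The paper first invokes \cref{theorem:distortion_free_argmax} (Dvoretzky--Wald--Wolfowitz purification) to restrict to deterministic $q$. It then writes the Lagrangian over that non-convex class and simply asserts that, when $q^0$ is infeasible, some $\lambda>0$ makes the constraint tight and yields an optimum. This is an implicit appeal to KKT/strong duality, with no proof that such a $\lambda$ exists.

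You bypass purification: ties are null, so the Lagrangian maximizer is automatically deterministic. You prove existence of $\lambda^*$ from the continuity of $h$ (via bounded convergence) and its limit as $\lambda\to\infty$. You then certify optimality against every feasible, possibly randomized, $q$ through the elementary chain $\mathbb{E}[G\cdot q]\le\mathcal{L}(q,\lambda^*)\le\mathcal{L}(q^{\lambda^*},\lambda^*)=\mathbb{E}[G\cdot q^{\lambda^*}]$.

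The paper's route is shorter and ties the result to its general determinism theorem for soft constraints. Yours is self-contained, needs neither a duality theorem nor purification, and shows precisely where continuity of the score law is used. Your monotonicity of $h$ is not needed for existence, but it does justify the bisection in \cref{alg:ppl_soft_watermark}.

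One small repair: the intermediate value step needs $\lim_{\lambda\to\infty}h(\lambda)>p\cdot\log p-\varepsilon$ strictly, not just $\ge$; with $\varepsilon=0$ the target could otherwise only be approached. The strict inequality holds because infeasibility of $q^0$ forces $p$ to be non-uniform (for uniform $p$ the constraint function vanishes identically). For non-uniform full-support $p$, $\max_v\log p_v>p\cdot\log p$.
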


\begin{proof}
By \cref{theorem:distortion_free_argmax}, there exists an optimal solution $q$ with support of size $1$ a.s., i.e.\ there is a measurable $f:\mathbb{G}\to\Sigma$ such that
\[
q(g) = e_{f(g)},\qquad \forall g\in\mathbb{G}.
\]

We now solve the problem over such deterministic $q$. 
We introduce a Lagrange multiplier $\lambda\ge 0$ for the constraint. 
The Lagrangian is
\begin{align}
    \mathcal{L}(q,\lambda)
    &= \mathbb{E}[G\cdot q(G)]
      - \lambda\big((p-\mathbb{E}[q(G)])\cdot\log p - \varepsilon\big)\\
    &= \mathbb{E}\big[(G+\lambda \log p)\cdot q(G)\big]
      - \lambda\big(p\cdot\log p - \varepsilon\big). \label{eq:soft_ppl_Lagrangian}
\end{align}
For fixed $\lambda$, the term in \cref{eq:soft_ppl_Lagrangian} is maximized by
\begin{equation*}
    f_\lambda(g) \in \arg\max_{i\in\Sigma}\big(g_i + \lambda\log p_i\big).
\end{equation*}

If the unconstrained maximizer (corresponding to $\lambda=0$) already satisfies $(p-\mathbb{E}[q^0(G)])\cdot\log p\le\varepsilon$, then $\lambda=0$.  
Otherwise the constraint is active at the optimum, so $\lambda>0$ and 
\begin{equation*}
    (p-\mathbb{E}[e_{f_\lambda(G)}])\cdot\log p = \varepsilon.
\end{equation*}
This concludes the proof.
\end{proof}

%% file: src/appendix/examples.tex
\section{Watermarked Text Examples}
\label{app:text_example}

This section presents a sample user prompt together with the corresponding watermarked responses. 
We also report the associated p-values and perplexity scores, where perplexity is computed using \qwen{}.
The formatting of the LLM response has been lightly edited for readability.
Each response is 200 tokens long.

\input{prompts/example1.tex}

%% file: prompts/example1.tex
\begin{promptbox}
    Why can't we just taste candy or Sweets and then spit it out to avoid its unhealthy attributes? What makes us swallow it to get satisfaction?
\end{promptbox}
\begingroup\raggedleft
\begin{modelbox}[AAR, $\, p=7.4e-5$, $\, PPL=4.8$]{outerframe}{outerbg}
    You're right that our brains are wired to enjoy the taste of sweets, but there are several reasons why we tend to swallow them, even though we know they're unhealthy. Here are a few possible explanations:
    \\\\
1. \textbf{Evolutionary adaptations}: For millions of years, humans relied on sugar-rich foods like fruits and berries for energy and survival. As a result, our brains developed a preference for sweet tastes, which signaled the presence of energy-rich compounds. In many cases, it was beneficial to consume these foods and store the energy for later use. Our brains haven't fully adjusted to the modern environment, where sugary foods are abundant but not necessarily necessary for survival.
\\
2. \textbf{Reward processing}: Sugar activates the brain's reward centers, releasing feel-good chemicals like dopamine, endorphins, and opiids. These sensations create a strong psychological association between sweetness and pleasure, reinforcing the behavior of consuming sugary foods. Our brains are wired to seek out experiences that stimulate these reward pathways, even if
\end{modelbox}

\begin{modelbox}[Chi2, $\, p=0.0$, $\, PPL=5.7$]{outerframe}{outerbg}
    You're right that it would be simpler if our body automatically spat out foods that aren't beneficial to us, but that's not how our digestive system is designed. Here are a few reasons why:
\\\\
1. \textbf{Evolutionary design:} The human body has an amazing ability to adapt and learn, but it's based on the principle of survival and reproduction. Our digestive system has evolved over millions of years to process a wide range of foods, including those that are high in sugar and other energy-rich nutrients. These foods were once scarce, so our bodies were designed to rapidly absorb their nutrients for energy. This allowed our ancestors to survive and thrive in times of scarcity.
\\
2. \textbf{The role of the brain:} Our brain plays a significant role in controlling what we eat and how our body processes it. The pleasure and satisfaction we experience when eating something sweet, like candy, is mediated by a system
\end{modelbox}

\begin{modelbox}[KGW, $\, p=0.0$, $\, PPL=6.4$]{outerframe}{outerbg}
    You're right that we can choose to spit out candy or sweets if we know they're unhealthy. However, the reason we often can't help but swallow and enjoy these treats is deeply rooted in human psychology, neuroscience, and biology.

Here are some possible explanations:
\\\\
1. \textbf{Evolutionary history}: Throughout human evolution, sweet tastes were a signal of nutrient-dense foods that were scarce and valuable. Our brains evolved to associate sweet tastes with pleasure and reward, encouraging us to consume them and gain energy. In modern times, we still respond to sweet tastes as a cue for pleasure, but our diets no longer match the evolutionary landscape.\\
2. \textbf{Dopamine release}: When we eat something sweet, our brains release dopamine, a neurotransmitter associated with pleasure, reward, and motivation. This release can be intense and immediate, making it difficult to resist the desire to consume more.\\
3. \textbf{Conditioned response}: We often learn to associate certain foods, including sweets, with pleasure and satisfaction
\end{modelbox}

\begin{modelbox}[PPL (Hard), $\, p=1.1e-9$, $\, PPL=7.4$]{outerframe}{outerbg}
    It's indeed intriguing as to why our brains would lead us to consume sweets, knowing their potential harm, and yet, we can't help but crave them. There are several biological, psychological, and historical reasons behind this phenomenon.
\\\\
\textbf{Biological reasons:}
\\\\
1. \textbf{Evolutionary legacy:} In the past, sweets were rare and often associated with high-calorie energy sources, making them valuable for survival. Our brains were adapted to consume these energy-dense foods to store fat for potential famines. This instinctual response has been retained, although no longer necessary in modern times with an abundance of food available.\\
2. \textbf{Dopamine release:} Consuming sweets and other pleasurable foods triggers the release of dopamine, a neurotransmitter associated with pleasure signaling. The brain's reward system is activated due to the combination of sugar, fat, and other nutrients releasing dopamine, leading a craving for more.\\
3. \textbf{Lingering taste after spittle:} Some studies indicate even a small amount
\end{modelbox}

\begin{modelbox}[PPL (Soft), $\, p=1.4e-3$, $\, PPL=5.4$]{outerframe}{outerbg}
    You're right that it would be ideal if we could just taste sweets and spit them out, avoiding the negative effects of high sugar and calorie intake. However, the process of eating and swallowing is complex, and it's influenced by both physiological and psychological factors. Here are some reasons why we tend to swallow sweets:\\\\

1. \textbf{Evolutionary origins:} In the past, eating sweet-tasting foods indicated the presence of energy-rich carbohydrates, which were essential for survival. Our ancestors would consume these foods to sustain themselves, especially during times of scarcity. This evolutionary adaptation has been passed down through generations, making us more likely to associate sweet tastes with energy and satisfaction.\\
2. \textbf{Brain chemistry:} When we eat sweets, they activate the brain's reward centers, releasing feelgood hormones like dopamine. This triggers a sense of pleasure and satisfaction, which can lead to overconsumption. The brain's reward system is designed to reinforce behaviors that provide energy and survival benefits, making us more likely to repeat them
\end{modelbox}

\begin{modelbox}[SynthID, $\, p=0.0$, $\, PPL=13.9$]{outerframe}{outerbg}
    You're right that it'd seem logical to just taste and spit out candy or sweets to avoid their potential negative effects on our health. However, our bodies have complex neurological and physiological mechanisms that play a significant role in our eating behavior.\\\\

Here are some reasons why we tend to swallow the candy and experience the satisfaction we're talking about:\\\\

1. \textbf{Evolutionary mechanisms}: Throughout history, our brains were designed to respond quickly to available energy sources, like sugary or fatty foods, as these were essential for survival. In the case of sweets, our brain is programmed to respond strongly to glucose (sugar), releasing the neurotransmitter dopamine in the process. This associative reward network drives us to repeat behaviors associated with pleasure, making us reach out for those sweet treats again.

2. \textbf{Familiar texture and taste}: When you put something pleasant in your mouth, the sensory experience is unique in several ways: It satisfies your sweet tooth by releasing dopamine in your brain. Our brains are set up to
\end{modelbox}
\endgroup

%% file: main.bbl
\begin{thebibliography}{30}
\providecommand{\natexlab}[1]{#1}
\providecommand{\url}[1]{\texttt{#1}}
\expandafter\ifx\csname urlstyle\endcsname\relax
  \providecommand{\doi}[1]{doi: #1}\else
  \providecommand{\doi}{doi: \begingroup \urlstyle{rm}\Url}\fi

\bibitem[Aaronson(2023)]{aar}
Aaronson, S.
\newblock Watermarking of large language models.
\newblock In \emph{Workshop on Large Language Models and Transformers, Simons Institute, UC Berkeley}, 2023.

\bibitem[Alihosseini et~al.(2019)Alihosseini, Montahaei, and Baghshah]{self_bleu}
Alihosseini, D., Montahaei, E., and Baghshah, M.~S.
\newblock Jointly measuring diversity and quality in text generation models.
\newblock In \emph{Proceedings of the Workshop on Methods for Optimizing and Evaluating Neural Language Generation}, pp.\  90--98, 2019.

\bibitem[Austin et~al.(2021)Austin, Odena, Nye, Bosma, Michalewski, Dohan, Jiang, Cai, Terry, Le, et~al.]{mbpp}
Austin, J., Odena, A., Nye, M., Bosma, M., Michalewski, H., Dohan, D., Jiang, E., Cai, C., Terry, M., Le, Q., et~al.
\newblock Program synthesis with large language models.
\newblock \emph{arXiv preprint arXiv:2108.07732}, 2021.

\bibitem[Chen et~al.(2021)Chen, Tworek, Jun, Yuan, de~Oliveira~Pinto, Kaplan, Edwards, Burda, Joseph, Brockman, Ray, Puri, Krueger, Petrov, Khlaaf, Sastry, Mishkin, Chan, Gray, Ryder, Pavlov, Power, Kaiser, Bavarian, Winter, Tillet, Such, Cummings, Plappert, Chantzis, Barnes, Herbert-Voss, Guss, Nichol, Paino, Tezak, Tang, Babuschkin, Balaji, Jain, Saunders, Hesse, Carr, Leike, Achiam, Misra, Morikawa, Radford, Knight, Brundage, Murati, Mayer, Welinder, McGrew, Amodei, McCandlish, Sutskever, and Zaremba]{humaneval}
Chen, M., Tworek, J., Jun, H., Yuan, Q., de~Oliveira~Pinto, H.~P., Kaplan, J., Edwards, H., Burda, Y., Joseph, N., Brockman, G., Ray, A., Puri, R., Krueger, G., Petrov, M., Khlaaf, H., Sastry, G., Mishkin, P., Chan, B., Gray, S., Ryder, N., Pavlov, M., Power, A., Kaiser, L., Bavarian, M., Winter, C., Tillet, P., Such, F.~P., Cummings, D., Plappert, M., Chantzis, F., Barnes, E., Herbert-Voss, A., Guss, W.~H., Nichol, A., Paino, A., Tezak, N., Tang, J., Babuschkin, I., Balaji, S., Jain, S., Saunders, W., Hesse, C., Carr, A.~N., Leike, J., Achiam, J., Misra, V., Morikawa, E., Radford, A., Knight, M., Brundage, M., Murati, M., Mayer, K., Welinder, P., McGrew, B., Amodei, D., McCandlish, S., Sutskever, I., and Zaremba, W.
\newblock Evaluating large language models trained on code.
\newblock 2021.

\bibitem[Chen et~al.(2025)Chen, Wu, Guo, and Huang]{mcmark}
Chen, R., Wu, Y., Guo, J., and Huang, H.
\newblock Improved unbiased watermark for large language models.
\newblock \emph{CoRR}, 2025.

\bibitem[Christ et~al.(2024)Christ, Gunn, and Zamir]{undetectable}
Christ, M., Gunn, S., and Zamir, O.
\newblock Undetectable watermarks for language models.
\newblock In \emph{The Thirty Seventh Annual Conference on Learning Theory}, pp.\  1125--1139. PMLR, 2024.

\bibitem[Cobbe et~al.(2021)Cobbe, Kosaraju, Bavarian, Chen, Jun, Kaiser, Plappert, Tworek, Hilton, Nakano, Hesse, and Schulman]{gsm8k}
Cobbe, K., Kosaraju, V., Bavarian, M., Chen, M., Jun, H., Kaiser, L., Plappert, M., Tworek, J., Hilton, J., Nakano, R., Hesse, C., and Schulman, J.
\newblock Training verifiers to solve math word problems.
\newblock \emph{arXiv preprint arXiv:2110.14168}, 2021.

\bibitem[Dathathri et~al.(2024)Dathathri, See, Ghaisas, Huang, McAdam, Welbl, Bachani, Kaskasoli, Stanforth, Matejovicova, et~al.]{synthid}
Dathathri, S., See, A., Ghaisas, S., Huang, P.-S., McAdam, R., Welbl, J., Bachani, V., Kaskasoli, A., Stanforth, R., Matejovicova, T., et~al.
\newblock Scalable watermarking for identifying large language model outputs.
\newblock \emph{Nature}, 634\penalty0 (8035):\penalty0 818--823, 2024.

\bibitem[Dvoretzky et~al.(1951)Dvoretzky, Wald, and Wolfowitz]{DWW_purification_theorem}
Dvoretzky, A., Wald, A., and Wolfowitz, J.
\newblock {Elimination of Randomization in Certain Statistical Decision Procedures and Zero-Sum Two-Person Games}.
\newblock \emph{The Annals of Mathematical Statistics}, 22\penalty0 (1):\penalty0 1 -- 21, 1951.
\newblock \doi{10.1214/aoms/1177729689}.
\newblock URL \url{https://doi.org/10.1214/aoms/1177729689}.

\bibitem[Fan et~al.(2019)Fan, Jernite, Perez, Grangier, Weston, and Auli]{eli5}
Fan, A., Jernite, Y., Perez, E., Grangier, D., Weston, J., and Auli, M.
\newblock Eli5: Long form question answering, 2019.
\newblock URL \url{https://arxiv.org/abs/1907.09190}.

\bibitem[Fernandez et~al.(2023)Fernandez, Chaffin, Tit, Chappelier, and Furon]{three_bricks}
Fernandez, P., Chaffin, A., Tit, K., Chappelier, V., and Furon, T.
\newblock Three bricks to consolidate watermarks for large language models.
\newblock \emph{2023 IEEE International Workshop on Information Forensics and Security (WIFS)}, 2023.

\bibitem[Gao et~al.(2024)Gao, Tow, Abbasi, Biderman, Black, DiPofi, Foster, Golding, Hsu, Le~Noac'h, Li, McDonell, Muennighoff, Ociepa, Phang, Reynolds, Schoelkopf, Skowron, Sutawika, Tang, Thite, Wang, Wang, and Zou]{eval-harness}
Gao, L., Tow, J., Abbasi, B., Biderman, S., Black, S., DiPofi, A., Foster, C., Golding, L., Hsu, J., Le~Noac'h, A., Li, H., McDonell, K., Muennighoff, N., Ociepa, C., Phang, J., Reynolds, L., Schoelkopf, H., Skowron, A., Sutawika, L., Tang, E., Thite, A., Wang, B., Wang, K., and Zou, A.
\newblock The language model evaluation harness, 07 2024.
\newblock URL \url{https://zenodo.org/records/12608602}.

\bibitem[Gloaguen et~al.(2025{\natexlab{a}})Gloaguen, Jovanovi{\'c}, Staab, and Vechev]{gloaguen2025blackbox}
Gloaguen, T., Jovanovi{\'c}, N., Staab, R., and Vechev, M.
\newblock Black-box detection of language model watermarks.
\newblock In \emph{The Thirteenth International Conference on Learning Representations}, 2025{\natexlab{a}}.
\newblock URL \url{https://openreview.net/forum?id=E4LAVLXAHW}.

\bibitem[Gloaguen et~al.(2025{\natexlab{b}})Gloaguen, Staab, Jovanović, and Vechev]{gloaguen2025watermarkingdiffusionlanguagemodels}
Gloaguen, T., Staab, R., Jovanović, N., and Vechev, M.
\newblock Watermarking diffusion language models, 2025{\natexlab{b}}.
\newblock URL \url{https://arxiv.org/abs/2509.24368}.

\bibitem[He et~al.(2024)He, Liu, Wang, Mao, and Bu]{dawa}
He, H., Liu, Y., Wang, Z., Mao, Y., and Bu, Y.
\newblock Theoretically grounded framework for llm watermarking: A distribution-adaptive approach.
\newblock \emph{arXiv preprint arXiv:2410.02890}, 2024.

\bibitem[Hou et~al.(2024)Hou, Zhang, He, Wang, Chuang, Wang, Shen, Van~Durme, Khashabi, and Tsvetkov]{semstamp}
Hou, A., Zhang, J., He, T., Wang, Y., Chuang, Y.-S., Wang, H., Shen, L., Van~Durme, B., Khashabi, D., and Tsvetkov, Y.
\newblock Semstamp: A semantic watermark with paraphrastic robustness for text generation.
\newblock In \emph{Proceedings of the 2024 Conference of the North American Chapter of the Association for Computational Linguistics: Human Language Technologies (Volume 1: Long Papers)}, pp.\  4067--4082, 2024.

\bibitem[Huang et~al.(2023)Huang, Zhu, Zhu, Lee, Jiao, and Jordan]{optimal_statistical_wm}
Huang, B., Zhu, B., Zhu, H., Lee, J., Jiao, J., and Jordan, M.
\newblock Towards optimal statistical watermarking.
\newblock In \emph{Socially Responsible Language Modelling Research}, 2023.
\newblock URL \url{https://openreview.net/forum?id=Fc2FaS9mYJ}.

\bibitem[Kirchenbauer et~al.(2023)Kirchenbauer, Geiping, Wen, Katz, Miers, and Goldstein]{kgw1}
Kirchenbauer, J., Geiping, J., Wen, Y., Katz, J., Miers, I., and Goldstein, T.
\newblock A watermark for large language models.
\newblock In \emph{International Conference on Machine Learning}, pp.\  17061--17084. PMLR, 2023.

\bibitem[Kodama et~al.(2025)Kodama, Diwan, Rosenblatt, Witter, and Cohen]{simkey}
Kodama, S., Diwan, H., Rosenblatt, L., Witter, R.~T., and Cohen, N.
\newblock Simkey: A semantically aware key module for watermarking language models, 2025.
\newblock URL \url{https://arxiv.org/abs/2510.12828}.

\bibitem[Kuditipudi et~al.(2024)Kuditipudi, Thickstun, Hashimoto, and Liang]{kth}
Kuditipudi, R., Thickstun, J., Hashimoto, T., and Liang, P.
\newblock Robust distortion-free watermarks for language models.
\newblock \emph{TMLR}, 2024.

\bibitem[Li et~al.(2025{\natexlab{a}})Li, Li, and Zhang]{likelihood_wm_detection}
Li, X., Li, G., and Zhang, X.
\newblock A likelihood based approach for watermark detection.
\newblock In \emph{The 28th International Conference on Artificial Intelligence and Statistics}, 2025{\natexlab{a}}.

\bibitem[Li et~al.(2025{\natexlab{b}})Li, Ruan, Wang, Long, and Su]{statistical_wm}
Li, X., Ruan, F., Wang, H., Long, Q., and Su, W.~J.
\newblock A statistical framework of watermarks for large language models: Pivot, detection efficiency and optimal rules.
\newblock \emph{The Annals of Statistics}, 53\penalty0 (1):\penalty0 322--351, 2025{\natexlab{b}}.

\bibitem[Ren et~al.(2024)Ren, Xu, Liu, Cui, Wang, Yin, and Tang]{semamark}
Ren, J., Xu, H., Liu, Y., Cui, Y., Wang, S., Yin, D., and Tang, J.
\newblock A robust semantics-based watermark for large language model against paraphrasing.
\newblock In \emph{Findings of the Association for Computational Linguistics: NAACL 2024}, pp.\  613--625, 2024.

\bibitem[Shen et~al.()Shen, Huang, and Wan]{scrubbing_def1}
Shen, H., Huang, B., and Wan, X.
\newblock Enhancing llm watermark resilience against both scrubbing and spoofing attacks.
\newblock In \emph{The Thirty-ninth Annual Conference on Neural Information Processing Systems}.

\bibitem[Tsur et~al.(2025{\natexlab{a}})Tsur, Long, Verdun, Hsu, Chen, Permuter, Vithana, and Calmon]{ot_watermark}
Tsur, D., Long, C.~X., Verdun, C.~M., Hsu, H., Chen, C.-F., Permuter, H., Vithana, S., and Calmon, F.~P.
\newblock Heavywater and simplexwater: Watermarking low-entropy text distributions.
\newblock \emph{arXiv preprint arXiv:2506.06409}, 2025{\natexlab{a}}.

\bibitem[Tsur et~al.(2025{\natexlab{b}})Tsur, Long, Verdun, Hsu, Permuter, and Calmon]{optimized_couplings_wm}
Tsur, D., Long, C.~X., Verdun, C.~M., Hsu, H., Permuter, H., and Calmon, F.~P.
\newblock Optimized couplings for watermarking large language models.
\newblock \emph{arXiv preprint arXiv:2505.08878}, 2025{\natexlab{b}}.

\bibitem[Wouters(2023)]{optimizing_wm}
Wouters, B.
\newblock Optimizing watermarks for large language models.
\newblock \emph{arXiv preprint arXiv:2312.17295}, 2023.

\bibitem[Wu et~al.(2023)Wu, Hu, Zhang, and Huang]{dipmark}
Wu, Y., Hu, Z., Zhang, H., and Huang, H.
\newblock Dipmark: A stealthy, efficient and resilient watermark for large language models.
\newblock 2023.

\bibitem[Wu et~al.(2025)Wu, Cui, Chen, and Huang]{wu2025analyzing}
Wu, Y., Cui, X., Chen, R., and Huang, H.
\newblock Analyzing and evaluating unbiased language model watermark.
\newblock \emph{arXiv preprint arXiv:2509.24048}, 2025.

\bibitem[Zhou et~al.(2024)Zhou, Zhao, Xu, and Ren]{bileve}
Zhou, T., Zhao, X., Xu, X., and Ren, S.
\newblock Bileve: Securing text provenance in large language models against spoofing with bi-level signature.
\newblock \emph{Advances in Neural Information Processing Systems}, 37:\penalty0 56054--56075, 2024.

\end{thebibliography}
